\documentclass[11pt]{article}
\usepackage[margin=1.1in]{geometry}

\usepackage{amsfonts}   
\usepackage{amsmath}    
\usepackage{amsthm}     
\usepackage{mathrsfs}   
\usepackage{mathtools}  

\newtheorem{define}{Definition}[section]
\newtheorem{lemma}[define]{Lemma}

\newtheorem{thm}[define]{Theorem}

\newtheorem{conj}[define]{Conjecture}

\usepackage{hyperref}
\hypersetup{
  linkcolor=blue,        
  citecolor=blue,        
  colorlinks=true,       
  linktocpage=true,
}

\newcommand{\del}{\backslash}
\newcommand{\con}{/}
\newcommand{\minor}{\subseteq_\mathcal{M}}
\newcommand{\rw}{\text{rw}}
\newcommand{\cw}{\text{cw}}

\begin{document}

\author{
  Nikola Yolov\\
  \texttt{nikola.yolov@cs.ox.ac.uk}\\
  Department of Computer Science\\
  University of Oxford
}
\title{Minor-matching hypertree width}

\maketitle

\abstract {
  In this paper we present a new width measure for a tree decomposition,
  \emph{minor-matching hypertree width}, $\mu\text{-}tw$,
  for graphs and hypergraphs,
  such that bounding the width guarantees
  that set of maximal independent sets has a polynomially-sized
  restriction to each decomposition bag.
  The relaxed conditions of the decomposition allow
  a much wider class of graphs and hypergraphs of bounded width
  compared to other tree decompositions.
  We show that, for fixed $k$, there are $2^{(1 - \frac1k + o(1)){n \choose 2}}$
  $n$-vertex graphs of minor-matching hypertree width at most $k$.
  A number of problems including
  Maximum Independence Set,
  k-Colouring, and
  Homomorphism of uniform hypergraphs
  permit polynomial-time solutions
  for hypergraphs
  with bounded minor-matching hypertree width
  and bounded rank.
  We show that
  for any given $k$ and any graph $G$,
  it is possible to
  construct a decomposition of minor-matching hypertree width at most $O(k^3)$,
  or to prove that $\mu\text{-}tw(G) > k$
  in time $n^{O(k^3)}$.
  This is done by presenting a general algorithm for approximating
  the hypertree width of well-behaved measures,
  and reducing $\mu\text{-}tw$ to such measure.
  The result relating the restriction of the maximal independent sets to a set $S$
  with the set of induced matchings intersecting $S$ in graphs,
  and minor matchings intersecting $S$ in hypergraphs,
  might be of independent interest.
}

\section{Introduction}

In his classical paper~\cite{tarjan-clique-decomp},
Tarjan discusses divide and conquer algorithms
for solving problems on chordal graphs using
a tree decomposition where every bag is a clique.
Since chordality is independent of the much more widely used
notion of treewidth~\cite{graph_minors_3},
it is a natural question to unify the two
tree decompositions into a single one.
For instance, we may consider a tree decomposition of a graph,
with width measured as the maximum independence number of a bag,
say $\alpha\text{-}tw(G)$, instead of the maximum number of vertices in a bag less
$1$, $tw(G)$.
This indeed unifies the two decompositions,
as a graph $G$ is chordal if and only if $\alpha\text{-}tw(G) = 1$
and clearly $\alpha\text{-}tw(G) \le tw(G) + 1$.

However, it turns out that $\alpha\text{-}tw(G)$ is often not general enough.
For instance, consider the \textbf{$k$-Colouring} problem.
It is possible to determine if $\chi(G) \le k$
in polynomial time for graphs with bounded $\alpha\text{-}tw(G)$,
since there are polynomially many distinct colourings of each bag,
we may enumerate all of them and apply dynamic programming.
Nevertheless,
the class of graphs $\{G : \alpha\text{-}tw(G) \le k\}$
is not a new polynomial case of the problem,
because $tw(G) \le \alpha\text{-}tw(G)\chi(G)$.
However, with little more effort we can describe
a much more relaxed tree decomposition measure.

The trace (or restriction) of a set system $\mathcal{F}$ over a set $S$,
denoted $tr_S(\mathcal{F})$ is defined as $\{F \cap S : F \in \mathcal{F}\}$.
Let us denote the set of all independent and all \emph{maximal} independent sets
of a graph $G$ by $\hat{i}(G)$ and $i(G)$ respectively.
Solving problems on graphs with $\alpha\text{-}tw(G) \le k$
exploits the property that $tr_B(\hat{i}(G))$ has polynomially-bounded
size for each bag $B$ of a tree decomposition of $G$
with bounded $\alpha$-width,
but it often suffices to require that $tr_B(i(G))$ has bounded size.
Let $\mu(G)$ denote the maximum size of an induced matching in $G$.
Results in graph theory~\cite{farber1, farber2, alekseev}
show that the relation between $|\hat{i}(G)|$ and $\alpha(G)$
is similar to the relation between $|i(G)|$ and $\mu(G)$:
$2^{\alpha(G)} \le |\hat{i}(G)| \le {v(G) \choose \le \alpha(G)}$
and $2^{\mu(G)} \le |i(G)| \le {e(G) \choose \le \mu(G)}$.
Now let the \emph{$S$-intersecting induced matching number}, $\mu_G(S)$,
be the maximum size of an induced matching $M$ of $G$
such that every edge of $M$ intersects $S \subseteq V(G)$.
Clearly $\alpha(G[S]) \ge \mu_G(S)$,
since taking a vertex from every edge of an induced matching
forms an independent set.
On the contrary, $\mu_G(S)$ cannot be bounded from below using $\alpha(G[S])$,
since $\alpha(E_n) = n$, while $\mu(E_n) = 1$,
where $E_n$ is the $n$-vertex empty graph.
In this paper we show that
$tr_S(i(G))$ is polynomially bounded if $\mu_G(S)$ is bounded.
This result is of independent interest,
but in particular it motivates the parameter
that is discussed in this paper, $\mu\text{-}tw(G)$,
a width measure of graphs,
defined as the minimum over all tree decompositions of $G$
of the maximum intersecting induced matching number
over the bags of the decomposition.

Hypergraphs often have much more expressive power for
modelling problems compared to graphs.
All results above generalise to hypergraphs $H$,
but the definition of $\mu_H(S)$ is too technical to be defined here,
so we postpone it to \S~2.

When comparing different tree decompositions,
usually the following four questions are considered.
The first one is \emph{on what structures does the decomposition work}.
The original treewidth decomposition is defined on graphs,
but later work introduces tree decompositions for hypergraphs,
for example Hypertreewidth~\cite{hypertreefirst}
and Fractional hypertree width decompositions~\cite{ftwfirst}.
The decomposition in this paper yields efficient algorithms
for \emph{hypergraphs with hyperedges of bounded size}.
It is not clear whether or not the restriction on the size
of the hyperedges is necessary, but so far we can only guarantee
polynomial-time solutions when the condition holds.

The second question is
\emph{what is the number of structures with bounded width}.
Here minor-matching hypertree decompositions really stand out,
as most graphs and hypergraphs have much lower
minor-matching hypertree width compared to other width measures.
We show that, for fixed $k$, there are $2^{(1 - \frac1k + o(1)){n \choose 2}}$
$n$-vertex graphs with minor-matching hypertree width at most $k$
(the number of all $n$-vertex graphs is $2^{n \choose 2}$).
In comparison,
most tree decompositions permit $2^{O(n \ln n)}$ $n$-vertex graphs
of width at most $k$.
More detail is given at the end of $\S$~2.

The next question is
\emph{
  what family of problems can be solved in polynomial time
  provided a decomposition of the instance with bounded width
  is given as input}.
For the moment there is no meta-theorem similar
to Courcelle's theorem~\cite{courcelle_thm_1},
describing classes of computable problems,
and instead we give examples how to solve
\textbf{Maximum Weighted Independent Set},
\textbf{$k$-Colouring} and
\textbf{Homomorphism of Uniform Hypergraphs} in $\S$~2.
Other problems, like \textbf{Set Cover} and \textbf{k-SAT},
also have natural solutions using the developed methods.

The final question is \emph{
  is it possible to efficiently construct a tree decomposition of width
  $k$, or more generally $f(k)$, given a structure of width at most $k$.
}
This problem is essential, as it allows us to drop
the requirement to provide a tree decomposition in the input
of the problem, and find truly polynomial cases of hard problems.
A tree decomposition without this property has limited use,
since it may be as hard to find a decomposition
as to solve the problem.
Given a fixed $k$ and a problem $P$ that permits a polynomial
solution of instances $(I, T)$,
where $T$ is a decomposition of width at most $k$,
it suffices to construct decomposition of width $f(k)$
to solve $P(I)$, as $f(k)$ is also bounded from above.
In $\S$~4 we present an $n^{O(k^3)}$-time algorithm that given
a graph $G$ either finds a tree decomposition of $G$ with $\mu$-width
$O(k^3)$,
or proves that $\mu\text{-}tw(G) > k$.
An intermediate step is a similar result for $\alpha\text{-}tw(G)$ of
hypergraphs.

The paper is organised as follows.
In $\S$~2 we present minor-matching hypertree width and its preliminaries,
discuss how it compares with other treewidth measures,
present the \emph{Intersecting Minor-Matching Theorem},
which is the main tool that makes dynamic programming
over minor matching hypertree decompositions possible,
explain how to use the Intersecting Minor-Matching Theorem,
and give examples how to solve problems using
a minor-matching hypertree decomposition.
In $\S$~3 we prove the Intersecting Minor-Matching Theorem,
and in $\S$~4 we explain how to approximate $\mu$-$tw$,
and more generally,
well-behaved hypertree width measures.

\section{Minor-matching treewidth and preliminaries}
\subsection{Notation}
All graph-theoretic notation is standard.
The \emph{induced subgraph} of $G$ by a set $S$ of vertices is denoted by $G[S]$.
A set of vertices is called \emph{independent} or \emph{stable}
if it contains no edge.
The \emph{independence number} of a graph $G$ is the
maximum size of an independent set,
and it is denoted by $\alpha(G)$.
A hypergraph $H = (V, E)$ is a pair of a set of vertices $V$ and
a set $E$ of subsets of $V$ called edges or sometimes hyperedges.
We also use $V(H)$ and $E(H)$ to denote the sets of
vertices and edges of $H$ respectively.
Independent sets and the independence number of hypergraphs
are defined the same way as in graphs,
that is sets that do not contain a hyperedge
and the maximum size of such set respectively.
If $H$ is a hypergraph, and $S \subseteq V(H)$,
we sometimes write $tr_S(H)$ instead of $tr_S(E(H))$.
For a hypergraph $H$ and $S \subseteq V(H)$,
we define the \emph{induced subhypergraph} of $H$ by $S$ to be
$H[S] = (S, \{h \in E(H) : h \subseteq S\})$.

\subsection{Tree decompositions}
\begin{define}
  [Tree decomposition~\cite{tw-history-1, tw-history-2, graph_minors_3}]
  We say that the pair $\mathcal{T} = (T, \{B_t : t \in V(T)\})$
  is a \emph{tree decomposition} of a graph $G$
  if $T$ is a tree,
  $B_t$, called the \emph{bag} of $t$,
  is a subset of $V(G)$ for each node $t$ of $T$, and
  \begin{enumerate}
  \item
    for each vertex $x$ of $G$,
    the nodes $t$ of $T$ such that $x \in B_t$, denoted $T_x$,
    induce a connected subtree of $T$, and
  \item
    every edge of $G$ is contained in some bag $B_t$, where $t \in V(T)$.
  \end{enumerate}
\end{define}
\noindent
We sometimes shorten $(T, \{B_t : t \in V(T)\})$ to
$(T, \{B_t\})$ or even $(T, B_t)$.
Instead of tree decomposition we sometimes just say decomposition.

\begin{define}
  [$\lambda$-treewidth]
  Suppose that $\lambda_G$ is a real-valued function
  on the subsets of $V(G)$
  for each graph $G$.
  Given a graph $G$ and a tree decomposition
  $\mathcal{T} = (T, \{B_t\}_{t \in V(T)})$ of $G$,
  we define
  \[
  \lambda_G(\mathcal{T}) = \max_{t \in V(T)} \lambda_G(B_t)
  \text{ } \text{ } \text{ and } \text{ } \text{ }
  \lambda\text{-}tw(G) = \min_{\mathcal{T}'} \lambda_G(\mathcal{T}'),
  \]
  to be the \emph{$\lambda$-width of $\mathcal{T}$}, and
  the \emph{$\lambda$-treewidth of $G$} respectively,
  where the minimum is taken over all tree decompositions $\mathcal{T}'$ of $G$.
\end{define}
If $G$ is clear from the context,
we normally abbreviate $\lambda_G(\mathcal{T})$ to $\lambda(\mathcal{T})$.
Technically, we ought to use $\inf$ instead of $\min$
in the definition of $\lambda\text{-}tw$,
since the set of tree decompositions is infinite,
but by restricting the definition to decompositions with distinct
bags, we obtain a finite set.
The nodes associated with the same bag induce a connected subtree,
and hence we may identify them to create a new decomposition with the
same width.
We may also require that the set of bags
forms a Sperner family without changing the width,
and it is easy to see that the size of the tree in such decomposition
is at most $n$ for an $n$-vertex graph.
Therefore we assume that every tree decomposition is either of this form,
or implicitly converted to this form,
and do not pay close attention to the size.

\begin{define}
  [Treewidth]
  The \emph{treewidth}~\cite{tw-history-1, tw-history-2, graph_minors_3}
  of a graph $G$, $tw(G)$,
  is defined as $\kappa$-$tw(G)$ for the function $\kappa_G(S) = |S| - 1$.
\end{define}

Treewidth plays a central role in the study of graph minors,
but we are mainly interested in applications to algorithms.
It is NP-complete to determine the treewidth of a graph $G$,
but for fixed $k$ there are linear-time algorithms,
with an exponential dependency on $k$,
to construct a tree decomposition of width at most $k$
or to decide that $tw(G) > k$~\cite{linear_tw}.

\begin{define}
  [Clique-width]
  The \emph{clique-width} of a graph $G$, $\cw(G)$,
  is the minimum number of labels needed to construct
  $G$ from the following operations:
  \begin{enumerate}
  \item creating of a new vertex with label $i$,
  \item disjoint union of two labelled graphs,
  \item joining by an edge every vertex with label $i$
    to every vertex of label $j$, for given $j \neq i$, and
  \item renaming every label $i$ to $j$.
  \end{enumerate}
  A sequence of these operations creating a graph $G$
  using at most $k$ labels is called \emph{$k$-expression} of $G$.
\end{define}

In contrast to treewidth, dense graphs may have low clique-width.
For example, the complete graph $K_n$ has treewidth $n-1$,
while $\cw(K_n) = 2$ for $n > 1$.
Moreover, a family of graphs with bounded treewidth
also has bounded clique-width;
more precisely, as shown in~\cite{tw_cw_comparison},
\[
\cw(G) \le 3 \cdot 2^{tw(G) - 1}.
\]
The example with the complete graph shows that no general bound exists
for the other direction,
but nevertheless, interesting special bounds do exist:
\begin{enumerate}
\item
  $tw(G) \le f(\Delta(G), \cw(G))$
  ~\cite{tw_cw_upperbound1},
\item
  $tw(G) \le g(k, \cw(G))$
  \hspace{5pt} for $K_k$-minor-free graphs $G$
  ~\cite{tw_cw_upperbound1},
\item
  $tw(G) \le 3 \cdot \cw(G)(t-1) - 1$
  \hspace{5pt} for $K_{t, t}$-free graphs $G$
  ~\cite{tw_cw_upperbound2}.
\end{enumerate}

Similarly to treewidth,
finding the clique-width of a graph is NP-complete,~\cite{clique_width_np_c}.
There is an $O(f(k)v(G)^3)$-time algorithm
finding a $(2^{k+1}-1)$-expression for a
given input graph $G$ with clique width at most $k$~\cite{approximate_cw}.

Rank-width is a closely related width measure to clique-width.

\begin{define}
  [Rank-width]
  Suppose $G$ is a graph and $A$ is a subset of $V(G)$.
  The \emph{cut-rank} of $A$, $\text{cutrk}_G(A)$,
  is the linear rank of the matrix $\{1_{ij \in E(G)} : i \in A, j \not\in A\}$.
  A \emph{ternary tree} is a tree where all non-leaves have degree $3$.
  A pair $(T, L)$ is called \emph{rank-decomposition} of $G$
  if $T$ is a ternary tree and $L$ is a bijection from the leaves of $T$
  to the vertices of $G$.
  For every edge of $e$ of $T$ the connected components of $T-e$
  induce a partition $(A, B)$ of the leaves of $T$.
  Define the \emph{width of $e$} to be
  $\text{cutrk}_G(L(A)) = \text{cutrk}_G(L(B))$,
  the \emph{width of $(T, L)$} to be the maximum width of an edge,
  and the \emph{rank-width of $G$}, $\rw(G)$,
  to be the minimum width of a rank-decomposition of $G$.
\end{define}

A key link between rank-width and clique-width has been established in
~\cite{Oum_rank_width}:
\[
\rw(G) \le \cw(G) \le 2^{\rw(G) + 1} - 1.
\]
The inequalities reveal that a family of graphs
has bounded rank-width if and only if it has bounded clique-width.

The importance of the width parameters above is best illustrated by
Courcelle's theorem.

\begin{define}
  [MSO$_1$, MSO$_2$]
  A graph can be seen as a finite model with its vertices as universe,
  and a single symmetric relation indicating adjacency.
  The \emph{monadic second-order graph logic}, MSO$_1$,
  consists of formulae for this model with quantifiers over vertices and sets of vertices.
  A graph can also be seen as a finite model with its vertices
  and edges as universe,
  and two symmetric relations indicating adjacency between vertices and edges.
  The \emph{extended monadic second-order graph logic}, MSO$_2$,
  consists of formulae for this model with quantifiers over
  vertices, edges, sets of vertices and sets of edges.
\end{define}

\begin{thm}
  [Courcelle's theorem~\cite{courcelle_thm_1, courcelle_thm_2, courcelle_thm_3}]
  For every MSO$_1$ formula $\psi_1$ and every MSO$_2$ formula $\psi_2$,
  there are functions $f_1$ and $f_2$,
  such that there are $O(f_1(cw(G)) n)$-time and $O(f_2(tw(G)) n)$-time algorithms
  to test if $G \models \phi_1$ and $G \models \phi_2$ respectively
  for every $n$-vertex graph $G$.
\end{thm}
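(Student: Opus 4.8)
The plan is to prove both halves by the classical route: attach to $G$ a width-$k$ structural certificate, reinterpret that certificate as a finite labelled tree (a \emph{term}) over an alphabet whose size depends only on $k$, translate the MSO sentence into an equivalent MSO sentence over such terms, and then invoke the Büchi--Elgot--Trakhtenbrot correspondence for trees, which converts the translated sentence into a finite bottom-up tree automaton that can be evaluated in linear time.

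For the MSO$_2$ statement, first run Bodlaender's algorithm~\cite{linear_tw}, which in time $O(g(k)\,n)$ produces a tree decomposition of width at most $tw(G)\le k$; then normalise it to a rooted binary ``nice'' decomposition in which consecutive bags differ by at most one vertex, still of linear size. I would encode this decomposition as a term $t$ over the finite alphabet whose symbols record, at each node, the current bag as an abstract set of at most $k+1$ \emph{slots}, together with which pairs of slots are adjacent in $G$, which slot is introduced or forgotten, and how slots of a child map to slots of the parent. The crucial step is a computable translation $\psi\mapsto\hat\psi$ with the property that $G\models\psi$ iff $t\models\hat\psi$: a vertex of $G$ is named by a (node, slot) pair that is ``alive'' at that node, an edge of $G$ is named by the node witnessing it together with its two incident slots, set quantifiers become set quantifiers over such encodings, and the incidence/adjacency atoms become MSO-definable from the bookkeeping labels along the decomposition tree. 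Pinning down exactly which information the finite alphabet must carry so that adjacency stays MSO-definable on $t$ while keeping the alphabet size a function of $k$ alone is the step I expect to be the main obstacle; everything else is careful bookkeeping.

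For the MSO$_1$ statement the argument is the same with $k$-expressions in place of tree decompositions. A $k$-expression is already a term over the finite alphabet of create, disjoint-union, edge-addition and renaming operations, of size bounded in terms of $k$; its leaves are precisely the vertices of $G$, and whether two vertices are adjacent is MSO-definable from the pattern of edge-addition and renaming operations on the tree path between their leaves through their least common ancestor. Hence again there is a computable $\hat\psi$ with $G\models\psi$ iff $t\models\hat\psi$. To obtain a $k$-expression within the claimed running time one either takes it as part of the input or feeds $G$ through the approximate clique-width routine of~\cite{approximate_cw}, producing a $(2^{k+1}-1)$-expression; the statement as phrased tolerates the resulting blow-up inside $f_1$.

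It remains to handle the automaton step, which is uniform across both halves. By the tree-automata form of the Büchi--Elgot--Trakhtenbrot theorem (Doner; Thatcher--Wright), every MSO sentence over finite labelled trees is equivalent to a deterministic bottom-up tree automaton whose number of states depends only on the sentence and the (bounded) alphabet, hence only on $\psi$ and $k$. Running this automaton on $t$ takes time linear in $|t|=O(n)$ with a per-node cost absorbed into $f_i(k)$, and its acceptance decides $t\models\hat\psi$, hence $G\models\psi$. The astronomically large constants here --- a non-elementary tower in the quantifier rank of $\psi$ --- are harmless, since $f_1$ and $f_2$ are allowed to be arbitrary computable functions; the only genuinely subtle ingredient is the translation lemma of the previous two paragraphs.
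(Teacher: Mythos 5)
The paper cites Courcelle's theorem as background (with references to \cite{courcelle_thm_1, courcelle_thm_2, courcelle_thm_3}) and gives no proof of its own, so there is no in-paper argument to compare against; your sketch must stand or fall on its own merits. What you outline is indeed the standard route — encode a width-$k$ certificate as a term over a finite alphabet, translate the MSO sentence to an equivalent MSO sentence over terms, then apply Doner/Thatcher--Wright to compile it into a bottom-up tree automaton — and at that level of granularity it is correct. The part you flag as the ``main obstacle'' (making the alphabet of a nice tree decomposition finite while keeping adjacency MSO-definable) is precisely where the real work lives, and your description of slots, introduce/forget labels, and child-to-parent slot maps is the right set of ingredients.

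One point deserves a flag. For the MSO$_1$ half the theorem, as the paper states it, claims running time $O(f_1(\cw(G))\,n)$. Bodlaender's algorithm keeps the MSO$_2$ half honest because it finds a width-$k$ tree decomposition in $O(g(k)\,n)$ time, but the approximate clique-width routine the paper cites runs in $O(f(k)\,v(G)^3)$ time, which is cubic, not linear, in $n$ — this extra polynomial factor cannot be absorbed into $f_1$ since $f_1$ depends only on $\cw(G)$, not on $n$. So the option ``feed $G$ through the approximate clique-width routine'' does not stay within the claimed time bound; only your other option (assume the $k$-expression is given as input, which is how the Courcelle--Makowsky--Rotics result is actually stated) delivers the advertised $O(f_1(\cw(G))\,n)$. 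As written, the statement in the paper is thus a touch optimistic for the MSO$_1$ case unless the expression is part of the input, and your proof inherits that slippage by conflating the two options; with that caveat made explicit, the argument goes through.
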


We now turn our attention to tree decompositions of hypergraphs.

\begin{define}
  [Gaifman graph]
  The \emph{Gaifman graph} \underline{H} of a hypergraph $H$
  is a graph over the same vertex set as $H$ with two vertices
  joined by an edge whenever they appear in the same hyperedge of $H$.
\end{define}

\begin{define}
  [Tree decomposition and width of hypergraphs]
  A pair $\mathcal{T} = (T, \{B_t\}_{t \in V(T)})$
  is a tree decomposition of a hypergraph $H$
  whenever it is a tree decomposition of its Gaifman graph.
  Given a real-valued function $\lambda_H$ on the subsets of $V(H)$
  for every hypergraph $H$,
  define $\lambda_H(\mathcal{T})$ and $\lambda\text{-}tw(H)$
  the same way they are defined for graphs.
\end{define}

The first hypergraph treewidth measure for which it is possible
to construct a tree decomposition efficiently and it is not
bounded in terms of the treewidth of the Gaifman graph is
\emph{generalized hypertree width}.

\begin{define}
  \label{def:ghtw}
  [Generalized hypertree width~\cite{hypertreefirst}]
  The generalized hypertree width of a hypergraph $H$
  is defined as $\varrho\text{-}tw(H)$,
  where $\varrho_H(S)$ is the minimum size of a cover of $S$ by edges of $H$.
\end{define}

As the name suggests, there is a notion called \emph{hypertree width}.
Hypertree width, $\varrho'\text{-}tw(H)$, involves an additional technical restriction
on the tree decomposition, but as shown in~\cite{htw_invariants},
$\varrho\text{-}tw(H) \le \varrho'\text{-}tw(H) \le 3\varrho\text{-}tw(H)$,
and hence the two notions are essentially equivalent.
There is a polynomial-time algorithm
to determine if $\varrho'\text{-}tw(H) \le k$ for fixed $k$,
therefore it is possible to approximate $\varrho\text{-}tw$
within a factor of $3$ for small $k$.

\begin{define}
  \label{def:fhtw}
  [Fractional cover number]
  A \emph{fractional cover} of $ S \subseteq V(H)$ by $H$,
  is a function $\gamma : E(H) \to [0, 1]$,
  such that for each $s \in S$
  we have $\sum_{h \in E(H) : s \in h} \gamma(h) \ge 1$.
  The \emph{weight} of a fractional cover $\gamma$ is
  $\sum_{h \in E(H)} \gamma(h)$.
  The \emph{fractional cover number}, $\varrho^*_H(S)$,
  is defined as the minimum weight of a fraction cover.
\end{define}

\begin{define}
  [Fractional hypertree width~\cite{ftw}]
  We define $\varrho^*\text{-}tw$ to be \emph{fractional hypertree width}.
\end{define}
\noindent

Clearly, $\varrho^*_H(S) \le \varrho_H(S)$,
and hence $\varrho^*\text{-}tw(H) \le \varrho\text{-}tw \le \varrho'\text{-}tw$.
Given a hypergraph $H$ and a constant $k$,
it is possible to construct a tree decomposition of $H$
of $\varrho^*$-width $O(k^3)$,
or to prove that $\varrho^*\text{-}tw(H) > k$
in time $||H||^{O(k^3)}$~\cite{ftw-approx}.

A critical property of generalised hypertree width
and fractional hypertree width
is that constraint satisfaction problems can be solved efficiently
on instances of bounded width.

\begin{define}
  A \emph{constraint satisfaction problem} (CSP)
  is a triple of variables $V$, a domain $D$,
  and a set $C$ of constraints, where each constraint is
  a relation on a subset of the variables.
  The task is to assign a value from $D$ to each variable
  so that each constraint is satisfied.
\end{define}

For instance, graph colouring can be expressed as a CSP.
The domain $D$ is the set of available colours and
there is a constraint requiring each edge to be coloured properly.
We can also interpret $3$-SAT as a CSP:
the domain is $D$ is $\{0, 1\}$ and each constraint corresponds to a clause.

\begin{define}
  [Constraint hypergraph]
  The hypergraph with the variables of an instance $I$ of CSP as vertices
  and a hyperedge of each constraint is called
  \emph{constraint hypergraph} of $I$, $C(I)$.
\end{define}

\begin{thm}
  Given a tree decomposition $\mathcal{T}$ of $C(I)$,
  it is possible to solve an instance $I$ of a CSP
  in time $||I||^{\varrho(\mathcal{T}) + O(1)}$~\cite{hypertreefirst}
  and $||I||^{\varrho^*(\mathcal{T}) + O(1)}$~\cite{ftw}.
  Here $||I||$ is the sum of the number of variables,
  the number of tuples in each constraint relation,
  and the size of the domain.
\end{thm}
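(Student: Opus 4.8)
The plan is to materialise, for every bag, the list of its \emph{partial solutions}, and then stitch these lists together by a leaves-to-root dynamic program on the tree. Fix a tree decomposition $\mathcal{T} = (T, \{B_t\})$ of $C(I)$, and recall that the scope of any constraint is a hyperedge of $C(I)$, hence a clique of the Gaifman graph, hence contained in some bag. For a node $t$ let $I_t$ be the sub-CSP with variable set $B_t$ and constraint set $\{c : \mathrm{scope}(c) \subseteq B_t\}$, and let $L_t$ be its set of satisfying assignments. Step one is to show $|L_t| \le ||I||^{\varrho(B_t)}$ (respectively $||I||^{\varrho^*(B_t)}$) and that $L_t$ can be computed in time of that order, up to a factor $||I||^{O(1)}$.

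For the integral bound this is immediate: pick a minimum cover $h_1, \dots, h_m$ of $B_t$ by constraint scopes, where $m = \varrho(B_t) \le \varrho(\mathcal{T})$; every partial solution is pinned down by the tuple of rows it selects from $R_{h_1}, \dots, R_{h_m}$, so $|L_t| \le \prod_i |R_{h_i}| \le ||I||^{m}$, and iterating over those tuples while checking the remaining constraints outputs $L_t$ in time $||I||^{\varrho(\mathcal{T}) + O(1)}$. For the fractional bound, Shearer's entropy inequality (equivalently the fractional-edge-cover / AGM bound) applied to a fractional cover $\gamma$ of $B_t$ of weight $\varrho^*(B_t)$ gives $|L_t| \le \prod_h |R_h|^{\gamma(h)} \le ||I||^{\varrho^*(B_t)}$. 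Since one cannot iterate over a fractional product, to actually build $L_t$ one runs a worst-case-optimal join: order the variables of $B_t$ and extend candidate tuples one variable at a time, using the entropy bound at each prefix to keep the intermediate results small, so that $L_t$ is produced in time $||I||^{\varrho^*(\mathcal{T}) + O(1)}$.

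With every $L_t$ computed, root $T$ at an arbitrary node $r$ and compute, bottom-up, the set $P_t \subseteq L_t$ of those partial solutions at $t$ that extend to a satisfying assignment of the sub-CSP on all variables appearing in the subtree of $t$. At a leaf, $P_t = L_t$. At an internal node $t$ with children $c_1, \dots, c_d$, writing $S_j = B_t \cap B_{c_j}$, we have $P_t = \{ f \in L_t : \text{for every } j \text{ there is } g \in P_{c_j} \text{ with } f|_{S_j} = g|_{S_j} \}$. The witnesses $g$ may be chosen independently over $j$, because a variable shared by two children's bags lies on the path joining them in $T$ and hence, by the connectivity axiom, in $B_t$, where $f$ already fixes it; this is also what makes the gluing of the recursively chosen local assignments consistent. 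Thus $P_t$ is obtained from $L_t$ by $d$ successive semijoins, each computed by hashing $P_{c_j}$ on $S_j$ and filtering $L_t$, at cost $(|L_t| + |P_{c_j}|) \cdot ||I||^{O(1)} = ||I||^{\varrho^*(\mathcal{T}) + O(1)}$. Finally, $I$ is satisfiable iff $P_r \ne \emptyset$, and a witnessing assignment is recovered by pushing a chosen $f \in P_r$ down the tree, the connectivity axiom guaranteeing the pieces agree. Summing the per-node costs (we may assume $T$ has $O(|V|)$ nodes) gives the stated running times.

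The main obstacle is the last sentence of the second paragraph: Shearer's inequality only \emph{counts} the partial solutions, and converting it into an algorithm that outputs the list of them within the same $||I||^{\varrho^*(\mathcal{T}) + O(1)}$ budget is the one genuinely delicate ingredient — this is precisely where a worst-case-optimal join procedure, or the original fractional-edge-cover argument of Grohe and Marx, is needed. The remaining pieces — the clique-in-a-bag fact, the independence of the subtrees, and the bottom-up dynamic program — are routine facts about tree decompositions.
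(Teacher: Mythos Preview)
The paper does not prove this theorem at all: it is quoted from~\cite{hypertreefirst} and~\cite{ftw} as background, with no argument given. So there is no ``paper's own proof'' to compare against.

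Your sketch is the standard argument from those references and is correct. The three ingredients --- (i) every constraint scope sits inside some bag, (ii) the number of satisfying assignments over a bag is bounded by $||I||^{\varrho(B_t)}$ via an integral cover, or by $||I||^{\varrho^*(B_t)}$ via Shearer/AGM, and (iii) a leaves-to-root semijoin sweep, with independence of siblings coming from the connectivity axiom --- are exactly what~\cite{hypertreefirst} and~\cite{ftw} do. You are also right to flag the one nontrivial step: turning the Shearer counting bound into an enumeration procedure with matching running time is the content of the Grohe--Marx argument (later recast as worst-case-optimal joins), and you correctly defer to it rather than pretend it is obvious. Nothing is missing.
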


The rank of a hypergraph $H$, $rk(H)$,
is the maximum size of an edge of $H$.
It is easy to see that
\[
\varrho\text{-}tw(H) + 1 \le
tw(H) \le rk(H)(\varrho^*\text{-}tw(H) + 1)
\le rk(H) (\varrho\text{-}tw(H) + 1),
\]
and hence the classes of hypergraphs with restricted
rank and width in either notion are essentially equivalent.

\begin{thm}
  Let CSP($\mathcal{H}$) denote all CSP instances
  with constraint graph in a recursively enumerable class $\mathcal{H}$
  of bounded rank.
  Assuming FTW $\neq$ W[1], the following two conditions are equivalent:
  \begin{enumerate}
  \item CSP($\mathcal{H}$) is polynomially solvable,
  \item $\mathcal{H}$ has bounded treewidth.
  \end{enumerate}
\end{thm}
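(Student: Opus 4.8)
The plan is to establish the equivalence by proving both implications, leaning on the known Grohe–Marx-style characterization of tractable CSPs. The easy direction is (2) $\Rightarrow$ (1): if $\mathcal{H}$ has bounded treewidth, say $\mathrm{tw}(\underline{H}) \le w$ for all $H \in \mathcal{H}$, then since the class has bounded rank, standard algorithms for bounded-treewidth CSPs (or an appeal to the generalized hypertree width result, noting $\varrho\text{-}tw(H) + 1 \le \mathrm{tw}(H)$ stated just above the theorem) solve every instance with constraint hypergraph in $\mathcal{H}$ in polynomial time. For this direction I would not even need the complexity assumption; it is a direct consequence of the displayed inequalities relating $\mathrm{tw}(H)$, $\varrho\text{-}tw(H)$, and $rk(H)$ together with the earlier CSP theorem that runs in time $\|I\|^{\varrho(\mathcal{T}) + O(1)}$ on a given tree decomposition $\mathcal{T}$. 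One first constructs a tree decomposition of the Gaifman graph of width $O(w)$ (using the linear-time algorithm cited earlier, since $w$ is a constant), which by the inequality $\varrho\text{-}tw(H) \le \mathrm{tw}(H) - 1$ automatically has small $\varrho$-width, and then runs the CSP algorithm.

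The substantive direction is (1) $\Rightarrow$ (2), proved by contraposition: assuming $\mathcal{H}$ has unbounded treewidth, I would exhibit a reduction showing CSP($\mathcal{H}$) is not polynomially solvable under the assumption FPT $\neq$ W[1]. (The excerpt writes ``FTW,'' which I read as a typo for FPT.) The key tool is the excess-of-treewidth machinery: if a recursively enumerable class of hypergraphs (or graphs) has unbounded treewidth, then it has unbounded treewidth in a ``robust'' sense, and one can embed hard instances of $k$-Clique — or more precisely, of the parameterized problem that is W[1]-hard — into CSP instances whose constraint hypergraph lies in $\mathcal{H}$. Concretely, I would invoke the excluded-grid theorem to extract large grid minors, and use the bounded rank of $\mathcal{H}$ to control how constraints over small hyperedges can simulate the adjacency checks of the target problem. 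The reduction maps an instance of (say) $k$-Clique on a graph $G$ to a CSP instance whose variables correspond to a suitable structure inside a member of $\mathcal{H}$ of sufficiently large treewidth (which exists by recursive enumerability and unboundedness), with the domain encoding candidate clique-vertices and the constraints enforcing the edge relation; bounded rank ensures each constraint is a relation of bounded arity, so the instance size stays polynomial.

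The main obstacle is the (1) $\Rightarrow$ (2) reduction: making the embedding of W[1]-hard instances go through requires that the unbounded-treewidth class actually contains structures flexible enough to route the combinatorial gadget. This is exactly where the bounded-rank hypothesis does real work — without it, a single giant hyperedge could make the Gaifman graph a clique with no usable structure, and the characterization would fail. I would handle this by passing to the Gaifman graphs: bounded rank means each hyperedge contributes a clique of bounded size, so unbounded treewidth of the hypergraphs forces unbounded treewidth of the Gaifman graphs built from bounded-size cliques, and then the classical Grohe–Schwentick–Segoufin / Grohe dichotomy for graph-based CSPs applies almost verbatim. The delicate point is uniformity — ensuring the reduction is computable, which is why $\mathcal{H}$ is assumed recursively enumerable: given the target parameter we can search $\mathcal{H}$ for a member of large enough treewidth, this search terminating precisely because treewidth is unbounded over the (enumerable) class. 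I would then assemble these pieces into the standard contrapositive statement, citing the appropriate prior dichotomy theorem rather than reproving the W[1]-hardness from scratch.
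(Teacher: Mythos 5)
The paper does not prove this theorem; it is stated as background, without a proof environment and without an explicit citation, and it is essentially Grohe's dichotomy theorem for constraint satisfaction with restricted left-hand sides of bounded arity. There is therefore no in-paper argument to compare against, only the statement itself.

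Your sketch is a faithful reconstruction of the correct proof structure and of the role each hypothesis plays. The direction $(2) \Rightarrow (1)$ does indeed follow from the displayed inequality $\varrho\text{-}tw(H) + 1 \le tw(H)$ together with the earlier theorem giving an $\|I\|^{\varrho(\mathcal T) + O(1)}$-time CSP algorithm; you are also right that no complexity-theoretic hypothesis is needed there, and that the linear-time treewidth algorithm (cited earlier, for constant width) supplies the decomposition. For $(1) \Rightarrow (2)$ you correctly identify the contrapositive as a W[1]-hardness reduction from clique via unbounded treewidth and large grid minors, you correctly observe that the bounded-rank hypothesis is what keeps the Gaifman graph informative (a single large hyperedge would collapse it to a clique) and what keeps constraint arities and instance sizes polynomial, and you correctly note that recursive enumerability is needed so the reduction can computably locate a member of $\mathcal{H}$ of sufficiently large treewidth. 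Your reading of ``FTW'' as a typo for ``FPT'' is also right. Two small cautions. First, Grohe's dichotomy as originally stated is in terms of bounded treewidth \emph{modulo homomorphic equivalence} (bounded treewidth of cores) of the left-hand-side structures; the version here, which restricts only the constraint hypergraph while leaving relations unrestricted, is the cleaner variant where the core condition collapses to plain bounded treewidth, and you should say this explicitly if you want a fully rigorous proof rather than a gesture at the literature. Second, as you yourself anticipate, the honest move for the hard direction is to cite Grohe's theorem rather than to redo the grid-minor-based W[1]-hardness; attempting a from-scratch reduction at the level of detail given here would not constitute a proof. As a plan for what a proof would look like, your proposal is sound.
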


The condition FTW $\neq$ W[1] is a widely assumed hypothesis
in parameterised complexity.
On the contrary,
minor-matching hypertree width
cannot be bounded from below in terms of $rk(H)$ and $tw(H)$,
and therefore CSPs are computationally unfeasible.

\subsection{Minor-matching hypertree width}

To define minor-matching hypertree width,
we need some blocker machinery.
The first observation is that non-minimal edges of a hypergraph
play no role in whether a set of vertices is independent or not.

\begin{define} [Clutter]
  A hypergraph is called \emph{clutter} if no edge is a
  proper subset of another.
  Given a hypergraph $H$,
  we define $cl(H)$ to be the clutter
  containing all inclusion-wise minimal edges.
\end{define}
\noindent

Note that the complement of each independent set
is a hitting set, also called \emph{transversal}.

\begin{define} [Blocker]
  A \emph{transversal} of a clutter $H$
  is a set intersecting every edge of $H$.
  The \emph{blocker} of $H$, denoted $b(H)$,
  is the clutter consisting of all inclusion-wise
  minimal transversals of $H$.
\end{define}

One reason why $b(H)$ is more natural to work with
compared to $i(H)$ is that $b(b(H)) = H$,
while it is not generally true that $i(i(H)) = H$.

\begin{define} [Deletion and Contraction]
  Suppose $H$ is a clutter and $v \in V(H)$.
  We define $H \del v$, called \emph{$H$ with $v$ deleted},
  to be the clutter
  $(V(H) \setminus \{v\}, \{h \in E(H) : v \notin h\})$.
  We define $H \con v$, called \emph{$H$ with $v$ contracted},
  to be the clutter
  $cl(V(H) \setminus \{v\}, \{h \setminus \{v\} : h \in E(H)\})$.
\end{define}

\begin{define} [Minor]
  We say that $F$ is a \emph{minor} of $H$,
  denoted $F \minor H$,
  if $F$ can be obtained from $H$
  through a series of deletions and contractions.
\end{define}
We note that this notion of clutter minor is different from the more widely
used notion of graph minor.
By ``minor'', in this paper, we always mean clutter minor and never graph minor.
The names coincide because a graph minor renders a clutter minor in the
graphic matroid.

A critical property of minors is that deletion and contraction commute,
that is $H \del v \con u = H \con u \del v$,
$H \del v \del u = H \del u \del v$ and
$H \con v \con u = H \con u \con v$,
and therefore the order of the operations
in the definition of minor does not matter.
Even though the objects above
have a rich algebraic structure,
we only use some very basic observations:
$b(b(H)) = H$, $b(H \del v) = b(H) \con v$, $b(H \con v) = b(H) \del v$
and $F \minor H$ if and only if $b(F) \minor b(H)$.

A minor $F$ of a graph $G$ contains edges of size $0$, $1$ and $2$.
In case $F$ contains the empty set as an edge,
then $F$ contains no other edges and has no independent sets.
If $F$ contains an edge of size $1$, say $\{v\}$,
no other edge contains the vertex $v$, since $F$ is a clutter,
so these vertices play no significant role in the structure of $F$,
and we may ignore them.
The rest of $F$ is an \emph{induced} subgraph of $G$.
In particular, we see that if $F$ is a graph,
then $F$ is an induced subgraph of $G$.
We conclude that the minor and induced subgraph relations
coincide over the set of graphs.

A clutter $F$ is called \emph{matching} if it is isomorphic
to a graph matching, $F \cong kK_2$,
that is $F$ contains $k$ parallel edges of size $2$ and no other edges.

\begin{define}
  Denote by $\mu(H)$ the maximum number of edges
  in a matching which is a minor of $cl(H)$;
  and by $\mu_H(S)$ the maximum number of edges
  in a matching $F$ which is a minor of $cl(H)$,
  and every edge of $F$ intersects $S$,
  called \emph{$S$-intersecting minor-matching number}.
\end{define}

Similarly to induced subgraphs,
if $kK_2 \minor H$, then $b(kK_2) \minor b(H)$,
and hence $|i(H)| = |b(H)| \ge |b(kK_2)| = 2^k$.
In addition, if $F \minor H$,
then $|tr_S(F)| \le |tr_S(H)|$, so
$|tr_S(i(H))| \ge 2^{\mu_H(S)}$.
In this work we give a bound in the other direction, namely:
\begin{thm}
  \label{thm:Intersecting_Minor-Matching_Theorem}
  [\textbf{Intersecting Minor-Matching Theorem}]
  There exists a function $f$ such that
  for every $n$-vertex hypergraph $H$ and
  any set of vertices $S$ we have
  \[
  |tr_S(b(H))| \le n^{\mu_H(S)f(rk(H))}.
  \]
  Additionally, if $\mu_H(S)$ and $rk(H)$ are bounded,
  then $tr_S(b(H))$ can be computed in polynomial time.
\end{thm}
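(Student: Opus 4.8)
The plan is to induct on the number of vertices of $H$, peeling off one vertex at a time and tracking how the trace of the blocker grows. Fix $S \subseteq V(H)$ and write $k = \mu_H(S)$ and $r = rk(H)$. For the base case, if $S$ contains no vertex that lies in a size-$\le r$ minimal transversal edge meeting $S$, or more concretely if $k = 0$, then no matching minor touches $S$; one argues directly that $tr_S(b(H))$ is then a single set (or controllably small), since every minimal transversal must intersect $S$ in essentially the same way. For the inductive step, I would pick a vertex $v \in S$ that actually participates: if deleting $v$ and contracting $v$ both strictly decrease $\mu_{\,\cdot}(S \setminus \{v\})$ or keep it bounded, we get the recursion. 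The two operations on the blocker side are exactly $b(H \del v) = b(H) \con v$ and $b(H \con v) = b(H) \del v$, so traces of $b(H)$ decompose according to whether $v$ is in the transversal or not:
\[
tr_S(b(H)) \subseteq \big(tr_{S \setminus \{v\}}(b(H \con v))\big) \;\cup\; \big(\{v\} \cup F : F \in tr_{S \setminus \{v\}}(b(H \del v))\big),
\]
up to cleaning out non-minimal sets. The point is that one of $H \del v$ or $H \con v$ has strictly smaller $S$-intersecting minor-matching number — this is where the rank bound enters, because contracting can merge edges and one needs $r$ to control how far a single contraction can propagate before it frees up an induced matching edge through $v$.

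The heart of the argument is the choice of $v$ and the accounting that makes the exponent $\mu_H(S) f(r)$ rather than something exponential. I expect the right statement to be: there is a vertex $v \in S$ and a bounded number $g(r)$ of ``branching'' vertices near $v$ such that after deleting/contracting them, $\mu(S')$ drops by one. Each such block of $g(r)$ vertices multiplies the trace count by at most $n^{g(r)}$ (the two-way branch at each vertex contributes a factor $\le 2 \le n$, and there are $g(r)$ of them), and $\mu_H(S)$ drops by one, so after $\mu_H(S)$ blocks we reach $\mu = 0$ and stop. Multiplying gives $n^{\mu_H(S) g(r)}$, and we set $f(r) = g(r)$ (plus whatever the base case costs). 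To locate $v$ and its block: take a maximum $S$-intersecting matching minor $M = \{e_1, \dots, e_k\}$; pick $e_1$ with $v \in e_1 \cap S$; the ``neighbourhood'' that must be resolved to certify that $e_1$ cannot be extended or that $v$'s transversal status is forced is controlled because every edge has size $\le r$, and minimality of the transversal edges keeps the relevant local structure finite in terms of $r$ alone, not $n$.

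The algorithmic claim follows from making the induction constructive: the recursion tree has depth $O(\mu_H(S) g(r))$ and branching factor bounded by a function of $r$ at each node (the choice of which vertex in the size-$\le r$ block to branch on), so it has $n^{O(\mu_H(S) g(r))}$ leaves; at each leaf $\mu = 0$ and $tr_{S'}(b(\cdot))$ is computed in polynomial time by the base case; we then take unions and discard non-minimal sets, all polynomial in the (polynomially bounded) output size. The main obstacle I anticipate is proving the key structural lemma that the ``block'' of vertices one must branch on to force a decrease in $\mu(S)$ has size bounded purely in terms of $rk(H)$ — i.e., that a single induced matching edge through a vertex of $S$ cannot be ``blocked'' by unboundedly many competing configurations once the rank is fixed. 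This is presumably where one uses that $b(b(H)) = H$ together with a sunflower- or Ramsey-type argument on the bounded-rank edge system to extract either a large matching minor (contradicting maximality of $M$) or a small certifying set; controlling $f$ here is the crux, and it is plausible that $f(r)$ ends up being a tower or at least exponential in $r$, which the theorem statement allows.
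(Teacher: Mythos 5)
Your outline captures the right surface structure — induction via deletion/contraction on the blocker side, using $b(H\del v)=b(H)\con v$ and $b(H\con v)=b(H)\del v$ — but there is a genuine gap at the crux, and the paper in fact takes a different route precisely to avoid it.

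The problem with branching on a vertex $v\in S$: each such branch can multiply the trace count by $2$, because whether $v$ lies in the transversal or not is visible in $T\cap S$. You therefore need the depth of the recursion to be bounded by $\mu_H(S)\cdot g(r)$, which is exactly your unproven ``block'' lemma — that within a bounded number $g(r)$ of delete/contract steps near $v$, the $S$-intersecting minor-matching number must drop. You correctly flag this as the obstacle, but nothing you say makes it plausible; after deleting or contracting a few vertices, $\mu$ can remain unchanged while $|S|$ shrinks only slowly, and it is not at all clear that a bounded neighbourhood (even one of size depending only on $r$) forces a drop. Nothing like a sunflower or Ramsey argument is supplied, and the paper contains no such local-block lemma either. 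This is not a presentation gap; it is the missing heart of the proof.

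The paper avoids the problem by picking $x\in V(H)\setminus S$ whenever possible. Then the ``type-1'' branch ($T'\subseteq T\setminus\{x\}$ with $T'\cap S=(T\setminus\{x\})\cap S$) does \emph{not} create a factor of $2$: it identifies $tr_S(\mathcal{T}_1)$ with $tr_S(b(H\del x))$ exactly, since removing $x$ is invisible in the trace. Only the ``type-2'' branches — which involve a designated vertex $z\in S$, an edge $h$ through $x$ and $z$, and the compound minor $H\circ(h,x,z)=H[x;h\setminus\{x\}]\lor H[z;h\setminus\{z\}]$ — contribute to the count, and these accumulate into an $S$-intersecting quasimatching. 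The bound on the depth of these productive branches then comes not from any local structural argument but from a separate extraction lemma (cited to a companion paper): an $S$-intersecting quasimatching of length $\ell$ contains a genuine matching minor of size $\ge\ell\,2^{-(r-2)}/(2r-3)$, so $\ell\le\mu_H(S)\,2^{r-2}(2r-3)$. Counting sequences of triples then gives the claimed $n^{\mu_H(S)f(r)}$ bound. Your proposal misses both the key choice $x\notin S$, which makes type-1 recursion free, and the $\circ$-operation/quasimatching machinery that controls the productive branching depth; with a plain delete/contract split on $v\in S$ and no proof of the block lemma, the recursion does not close.
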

\noindent
The theorem is of independent interest.
Its proof is postponed to $\S$~3.

We are now ready to state the definition of minor-matching hypertree
width.

\begin{define}
  [Minor-matching hypertree width]
  We define $\mu\text{-}tw(H)$ to be the
  \emph{minor-matching hypertree width of $H$}.
\end{define}

\subsection{Solutions that can be read from the blocker}

Suppose $H$, $H_1$ and $H_2$ are hypergraphs.
We write $H = H_1 \cup H_2$ if
$V(H_1) \cup V(H_2) = V(H)$ and $E(H_1) \cup E(H_2) = E(H)$.
If $H = H_1 \cup H_2$,
we say that $S$ is an $(H_1, H_2)$-separator if
$V(H_1) \cap V(H_2) \subseteq S$.

Suppose we want to compute a parameter $f_H$.
If it is possible to generalise $f_H$ to a function
$f_H^S$ over $tr_S(i(H))$,
say $f_H = f_H^\emptyset(\emptyset)$,
for any $S \subseteq V(H)$,
and it is possible to compute
$f_H^S$ from $f_{H_1}^S$ and $f_{H_2}^S$, where $S$ is an $(H_1, H_2)$-separator,
then, roughly speaking, we can compute $f_H$
given a tree decomposition $\mathcal{T}$ with $\mu(\mathcal{T}) \le k$.
The next lemma formalises the discussion above.

\begin{define}
  \label{def:can_be_read_from_blocker}
  We say that a function $f_H^S : tr_S(i(H)) \to \mathbb{Q}$,
  or more generally, $f_H^S : tr_S(i(H))^p \to \mathbb{Q}$,
  \emph{can be read from the blocker} if it has
  the following properties:
  \begin{enumerate}
  \item
    it is possible to compute $f_H^{V(H)}$ from $i(H)$ efficiently,
  \item
    it is possible to compute $f_H^{S}$ from $f_H^{S'}$
    efficiently if $S \subseteq S'$,
  \item
    it is possible to compute $f_{H+v}^{S \cup \{v\}}$ from  $f_H^S$,
    where $H+v$ is a hypergraph obtained from $H$
    by adding a new isolated vertex, $v$,
  \item
    it is possible to compute $f_H^S$
    from  $tr_S(i(H))$, $f_{H_1}^S$ and $f_{H_2}^S$ efficiently
    if $V(H_1) \cap V(H_2) \subseteq S$ and $H = H_1 \cup H_2$,
  \end{enumerate}
  for every hypergraph $H$ and set $S \subseteq V(H)$.
  Here efficiently means in polynomial time
  with respect to the size of the input, including $tr_S(i(H))$,
  and the size of the output
  where all functions are represented in table form.
\end{define}

\begin{lemma}
  Suppose $r$ and $k$ are fixed constants.
  Let $H$ be a hypergraph of rank at most $r$,
  $\mathcal{T} = (T, \{B_t\})$ be a tree decomposition of $H$
  with $\mu(\mathcal{T}) \le k$,
  and $f_H^S$ be a function that can be read from the blocker.
  It is possible to compute $f_H^{B_t}$ from $(H, \mathcal{T})$
  in polynomial time w.r.t. $||H||$ for every bag $B_t$,
  $t \in T$.
  In particular, it is possible to compute $f_H^\emptyset$ in polynomial time.
\end{lemma}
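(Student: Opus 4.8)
The plan is to process the tree $T$ from the leaves toward a chosen root, computing $f_H^{B_t}$ at each node as a function of the already-computed values at its children, together with some auxiliary values obtained by ``restricting'' the function to smaller separators and by ``padding'' with isolated vertices. The key structural fact I would use is that for a node $t$ with children $c_1,\dots,c_m$, if $H_t$ denotes the subhypergraph induced by the bags in the subtree rooted at $t$ (together with the edges of $H$ living there), then $B_t$ is an $(H_{c_i}, H_{c_i}')$-type separator at each step: by the connectivity axiom of tree decompositions, $V(H_{c_i}) \cap V(H \setminus H_{c_i}) \subseteq B_t$, and in fact $V(H_{c_i}) \cap B_t \subseteq B_{c_i} \cup B_t$ is handled by intersecting with $B_{c_i}$. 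So the recursion naturally wants to combine $f_{H_{c_i}}^{B_{c_i} \cap B_t}$ across all children and then with the edges of $H$ that are ``new'' at $t$.

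Concretely, I would proceed as follows. First, at a leaf $t$: $H_t$ is (essentially) the induced subhypergraph on $B_t$ — one uses the Intersecting Minor-Matching Theorem (Theorem~\ref{thm:Intersecting_Minor-Matching_Theorem}) to compute $tr_{B_t}(b(H[B_t]))$ in polynomial time, since $\mu_{H}(B_t) \le \mu(\mathcal{T}) \le k$ and $rk(H) \le r$ are bounded; then property~(1) of Definition~\ref{def:can_be_read_from_blocker} gives $f_{H_t}^{B_t}$ from $i(H_t) = b(H_t)$ directly (the vertex set of $H_t$ is exactly $B_t$). Second, at an internal node $t$ with children $c_1, \dots, c_m$: inductively I have $f_{H_{c_i}}^{B_{c_i}}$; using property~(2) I restrict each to $f_{H_{c_i}}^{B_{c_i} \cap B_t}$; using property~(3) I pad by the isolated vertices $B_t \setminus B_{c_i}$ to get $f_{H_{c_i}'}^{B_t}$ where $H_{c_i}'$ is $H_{c_i}$ together with these extra isolated vertices (so that $V(H_{c_i}') \supseteq B_t$, matching the separator). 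I also form the ``local'' hypergraph $H_t^0$ consisting of $B_t$ together with all edges of $H$ contained in $\bigcup$ of the subtree but not yet accounted for — again its blocker trace on $B_t$ is polynomial-size and computable by Theorem~\ref{thm:Intersecting_Minor-Matching_Theorem}, giving $f_{H_t^0}^{B_t}$ via property~(1). Then I merge all of $f_{H_{c_1}'}^{B_t}, \dots, f_{H_{c_m}'}^{B_t}, f_{H_t^0}^{B_t}$ one at a time using property~(4), each merge being legitimate because the pairwise intersection of vertex sets lies in $B_t = S$. The result is $f_{H_t}^{B_t}$, completing the induction. At the root $r_0$, $H_{r_0} = H$, so we obtain $f_H^{B_{r_0}}$; and then $f_H^\emptyset$ follows by one more application of property~(2) with $S = \emptyset \subseteq B_{r_0}$. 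For an arbitrary bag $B_t$, rerooting $T$ at $t$ and running the same procedure gives $f_H^{B_t}$.

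The main obstacle — and the step deserving the most care — is the bookkeeping that ensures every invocation of property~(4) really has its separator hypothesis satisfied, i.e.\ that when I merge $H^{(1)}$ and $H^{(2)}$ their shared vertices lie inside the current $S = B_t$. This is where the connectivity axiom of tree decompositions is used, and one must be careful that after padding with isolated vertices the vertex sets are exactly controlled, that edge sets partition correctly (no edge double-counted, every edge of $H$ appears in some $H_t^0$ — which follows from axiom~2 of the tree-decomposition definition: each edge lies in some bag), and that the bound $\mu_H(S) \le \mu(\mathcal{T})$ applies to the correct induced sub-hypergraph (minors only shrink $\mu$, and $\mu_{H}(B_t)$ for the relevant local hypergraph is dominated by $\mu(B_t) \le k$). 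The running time is polynomial: there are at most $n$ nodes, each contributing a bounded number of table operations on tables of size $n^{O(k f(r))}$ by Theorem~\ref{thm:Intersecting_Minor-Matching_Theorem}, and each of the four operations in Definition~\ref{def:can_be_read_from_blocker} is by hypothesis polynomial in its input and output size; rerooting for each bag multiplies the total by at most a factor of $n$. A secondary subtlety is handling nodes of high degree — rather than assuming $T$ is binary, I simply merge the children's contributions sequentially, which is why property~(4) is stated for merging along a single separator and can be iterated.
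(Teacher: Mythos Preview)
Your proposal is correct and follows essentially the same approach as the paper: root the tree, process from leaves to root computing $f_{H_t}^{B_t}$ for the subhypergraph $H_t$ induced by the subtree at $t$, handling leaves via property~(1), and at an internal node restricting each child's table via property~(2), padding via property~(3), and merging sequentially with the local piece $f_{H[B_t]}^{B_t}$ via property~(4), with Theorem~\ref{thm:Intersecting_Minor-Matching_Theorem} supplying the polynomial bound on the trace sizes. The paper's proof is terser but structurally identical; your extra discussion of the separator bookkeeping and the rerooting for arbitrary $B_t$ is sound and only makes explicit what the paper leaves implicit.
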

\begin{proof}
  Given a bag $B_r$, pick $r$ as the root of $T$
  and orient every edge from the root towards the leaves.
  Let $T_t$ be the subtree of $T$ composed of descendants of $t$,
  and let $H_t$ be the subhypergraph of $H$ induced by the
  vertices contained in bags of $T_t$.
  We compute $f_{H_t}^{B_t}$ recursively, from the leaves towards the root.
  We can compute $f_{H_l}^{B_l}$ for each leaf $l$ of $T$ using
  property $1.$ of $f$.
  Given a node $t$ with direct descendants $t_1 \ldots t_m$,
  first compute $f_{H_{t_i}}^{B_t \cap B_{t_i}}$
  using property $2.$,
  then add each vertex in $B_t \setminus B_{t_i}$ as an isolated vertex
  to compute $f_{H_{t_i} + (B_t \setminus B_{t_i})}^{B_t}$
  using property $3.$, and finally merge sequentially $f_{B_t}^{B_t}$ with
  $f_{H_{t_i} + (B_t \setminus B_{t_i})}^{B_t}$
  to construct $f_{H_t}^{B_t}$.
  Theorem~\ref{thm:Intersecting_Minor-Matching_Theorem}
  guarantees that $tr_{B_t}(i(H))$ is polynomially sized
  and that it can be computed in polynomial time,
  and hence the input and the output of each step is polynomially sized.
\end{proof}

\subsection{Examples of functions that can be read from the blocker}
\subsubsection{Maximum Weighted Independent Set}
Our first example is \textbf{Maximum Weighted Independent Set}.
On instance hypergraph $H$ and a weight function
$w : V(H) \to \mathbb{Q}$,
the problem is to find an independent set $J$ of $H$
with maximum value $w(J) = \sum_{j \in J} w(j)$.
Delete all vertices $v$ with $w(v) < 0$,
and now we may restrict our attention to maximal independent sets $J \in i(H)$.
Define $f_H^S(A) := \max\{w(J) : J \in i(H), A = S \cap J\}$.
It suffices to show that $f_H^S$ can be read from the blocker
in order to prove that the problem can be solved in polynomial
time using a tree decomposition of bounded $\mu$-width.
Properties 1.-3. are easy to verify.
Suppose $H = H_1 \cup H_2$ and $V(H_1) \cap V(H_2) \subseteq S$
and that $f_H^S(A)$ is witnessed by $J \in i(H)$.
Since $J$ is stable in $H_1$ and $H_2$,
we may find $J_i \in i(H_i)$ such that $J \cap V(H_i) \subseteq J_i$.
In particular, $A \subseteq J_i \cap S$.
We have
\[
w(J) = w(J_1) + w(J_2) - w(J_1 \cap S) - w(J_2 \cap S) + w(A).
\]
It follows that $w(J_i) = f_{H_i}^S(J_i \cap S)$,
and hence
\begin{align*}
f_H^S(A) = \max\{
  & f_{H_1}^S(A_1) + f_{H_2}^S(A_2) - w(A_1) - w(A_2) + w(A) : \\
  & A_1 \in tr_S(i(H_1)), A_2 \in tr_S(i(H_2)), A = A_1 \cap A_2
\}.
\end{align*}

\subsubsection{k-Colouring}
A more interesting example is \textbf{$k$-Colouring}.
A colouring with $k$ colours is a function $c : V(H) \to [k]$.
A colouring $c$ is called \emph{proper} if no edge is \emph{monochromatic},
that is all of its vertices are coloured the same.
The chromatic number of $H$, $\chi(H)$, is the minimum number of colours
in a proper colouring.

We need to phrase the problem is terms of maximal independent sets
in order to tackle it.
Suppose $c: V(H) \to [k]$ is a proper colouring.
Observe that $c^{-1}(a)$ is an independent set for each colour $a$.
Therefore $\chi(H)$ is the minimum size of a cover of $H$ with independent sets.
We may assume that each independent set in the cover is maximal.
Given $i(H)$, we can easily see if $\chi(H) \le k$:
simply check if any $k$-tuple $(I_1, I_2, \ldots, I_k)$ of independent sets
covers $H$.

Consider the function
\[
f_H^S(A_1, A_2, \ldots, A_k)
=
\begin{cases*}
  1 & if $(A_1, A_2, \ldots, A_k)$ covers $S$ and \\
    & there exists a cover $(I_1, I_2, \ldots, I_k)$ of $H$, \\
    & s.t. $I_i \in i(H)$ and $A_i \subseteq I_i$ for $1 \le i \le k$; \\
  0 & otherwise.
\end{cases*}
\]

It is easy to see that $f_H^S$ satisfies properties 1.-3. of
Definition~\ref{def:can_be_read_from_blocker}.

\begin{lemma}
  Suppose $H = H_1 \cup H_2$, $V(H_1) \cap V(H_2) \subseteq S$,
  and $\{A_i \in tr_S(i(H))\}_{i \in [k]}$ covers $S$.
  The following two conditions are equivalent:
  \begin{enumerate}
  \item
    $f_H^S(A_1, A_2, \ldots, A_k) = 1$,
  \item
    we can find $A_i^j \in tr_S(i(H_j))$,
    $i \in [k]$, $j \in [2]$,
    such that $A_i \subseteq A_i^1 \cap A_i^2$,
    and $f_{H_j}^S(A_1^j, A_2^j, \ldots, A_k^j) = 1$
    for $j \in [2]$.
  \end{enumerate}
\end{lemma}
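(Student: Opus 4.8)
The plan is to prove the equivalence by constructing the witnessing covers in each direction, using the same "extension" idea already employed for Maximum Weighted Independent Set: a set independent in $H = H_1 \cup H_2$ is independent in each $H_j$ separately, and conversely an independent set of $H_j$ extends to an independent set of $H$ once we control its trace on the separator $S$.

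First I would prove $(1) \Rightarrow (2)$. Assume $f_H^S(A_1, \ldots, A_k) = 1$, witnessed by a cover $(I_1, \ldots, I_k)$ of $H$ with $I_i \in i(H)$ and $A_i \subseteq I_i$. Each $I_i$ is stable in $H_1$ and in $H_2$, so for each $j \in [2]$ we may pick $I_i^j \in i(H_j)$ with $I_i \cap V(H_j) \subseteq I_i^j$. Set $A_i^j := I_i^j \cap S \in tr_S(i(H_j))$. Then $A_i = I_i \cap S \subseteq I_i^j \cap S = A_i^j$, so $A_i \subseteq A_i^1 \cap A_i^2$. It remains to check that $(I_1^j, \ldots, I_k^j)$ covers $H_j$: any edge $h \in E(H_j) \subseteq E(H)$ meets some $I_i$, and since $h \subseteq V(H_j)$ we get $h \cap I_i \subseteq I_i \cap V(H_j) \subseteq I_i^j$, so $h$ meets $I_i^j$. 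Hence $f_{H_j}^S(A_1^j, \ldots, A_k^j) = 1$.

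Next I would prove $(2) \Rightarrow (1)$. Given the $A_i^j$ and witnessing covers $(I_1^j, \ldots, I_k^j)$ of $H_j$ with $A_i^j \subseteq I_i^j$, the obstacle is to glue $I_i^1$ and $I_i^2$ into a single maximal independent set of $H$ whose trace on $S$ contains $A_i$. The natural candidate is $J_i := (I_i^1 \setminus S) \cup (I_i^2 \setminus S) \cup (A_i^1 \cap A_i^2)$, or more simply we first argue $J_i$ is independent in $H$ and then take any maximal $I_i \in i(H)$ containing $J_i$. To see $J_i$ is stable in $H$: an edge $h \in E(H)$ lies in $E(H_1)$ or $E(H_2)$, say $E(H_1)$; then $h \subseteq V(H_1)$, and the part of $J_i$ inside $V(H_1)$ is contained in $(I_i^1 \setminus S) \cup ((A_i^1 \cap A_i^2) \cap V(H_1))$; since $V(H_1) \cap V(H_2) \subseteq S$ we have $(I_i^2 \setminus S) \cap V(H_1) \subseteq (V(H_1) \cap V(H_2)) \setminus S = \emptyset$, and $A_i^1 \cap A_i^2 \subseteq A_i^1 \subseteq I_i^1$, so $J_i \cap V(H_1) \subseteq I_i^1$, which is stable in $H_1$; hence $h$ is not contained in $J_i$. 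Now fix $I_i \in i(H)$ with $J_i \subseteq I_i$. We still need that $(I_1, \ldots, I_k)$ covers $H$ and that $A_i \subseteq I_i \cap S$, the latter being immediate since $A_i \subseteq A_i^1 \cap A_i^2 \subseteq J_i \subseteq I_i$. For the cover, an edge $h \in E(H_1)$ meets some $I_i^1$; pick $v \in h \cap I_i^1$. If $v \notin S$ then $v \in I_i^1 \setminus S \subseteq J_i \subseteq I_i$; if $v \in S$ then $v \in I_i^1 \cap S = A_i^1$, and here I would use the hypothesis that $\{A_i\}$ covers $S$ together with a short argument — this is the delicate point — to conclude $h$ meets some $I_{i'}$, perhaps by choosing $v$ to lie in the intersection of the traces. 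Concretely, since $v \in S$ is covered by some $A_{i'}$ we have $v \in A_{i'} \subseteq J_{i'} \subseteq I_{i'}$, so $h$ meets $I_{i'}$; thus $(I_1, \ldots, I_k)$ covers $E(H_1)$, and symmetrically $E(H_2)$, hence all of $E(H)$. Therefore $f_H^S(A_1, \ldots, A_k) = 1$.

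I expect the main obstacle to be exactly the cover-preservation step in direction $(2) \Rightarrow (1)$: controlling what happens to an edge $h$ whose only witness in $I_i^1$ lies on the separator $S$. The resolution is that the hypothesis "$\{A_i\}$ covers $S$" is precisely what lets us hand off such a separator vertex to whichever class $A_{i'}$ covers it, and that vertex is then genuinely present in $J_{i'} \subseteq I_{i'}$; the commuting/separator condition $V(H_1) \cap V(H_2) \subseteq S$ is what guarantees the two halves of $J_i$ do not interact to create a new edge. One should also remark that when the $I_i^j$ are not chosen maximal subject to containing $A_i^j$ some care is needed, but replacing them by such maximal ones at the outset (which does not affect the trace condition or the cover condition) keeps the bookkeeping clean. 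Everything else is the routine observation that edges of $H$ are edges of $H_1$ or of $H_2$, already used verbatim in the Maximum Weighted Independent Set example.
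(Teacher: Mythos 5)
Your proof is correct and takes essentially the same route as the paper: in the forward direction you restrict each $I_i$ to $V(H_j)$ and extend to maximal independent sets of $H_j$; in the reverse direction you glue the pieces outside $S$ with a trace on $S$ and extend to maximal independent sets of $H$. The paper takes $J_i = ((I_i^1 \cup I_i^2)\setminus S)\cup A_i$ where you take $J_i = (I_i^1\setminus S)\cup(I_i^2\setminus S)\cup(A_i^1\cap A_i^2)$; both are stable and both contain $A_i$, so the difference is immaterial.

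Two small slips, neither load-bearing: you assert $A_i = I_i\cap S$ and $I_i^1\cap S = A_i^1$, but the definition of $f^S$ only guarantees $A_i\subseteq I_i$ (and hence $A_i\subseteq I_i\cap S$), not equality; your argument never actually uses equality, so nothing breaks, but the equalities should be weakened to inclusions. Also, a cover of $H$ by independent sets means the $I_i$ cover $V(H)$, so the verification should be phrased per vertex rather than per edge; the argument you give already does this vertex by vertex, so again this is only a matter of wording.
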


\begin{proof}
  First suppose that $f_H^S(A_1, A_2, \ldots, A_k) = 1$ and
  that it is witnessed by $(I_1, I_2, \ldots, I_k)$.
  We see that $J_i^j = V(H_j) \cap I_i$ is stable in $H_j$,
  hence we may find $I_i^j \in i(H_j)$ such that $J_i^j \subseteq I_i^j$.
  Observe that $(I_1^j, I_2^j, \ldots, I_k^j)$ covers $H_j$.
  Therefore $f_{H_j}^S(I_1^j \cap S, I_2^j \cap S, \ldots, I_k^j\cap S) = 1$.

  Conversely, suppose that the second condition holds
  and that it is witnessed by $(I_1^j, I_2^j, \ldots, I_k^j)$.
  We may write $J_i = (I_i^1 \cup I_i^2) \setminus S \cup A_i$
  to cover $H$ with $k$ (not necessarily maximal) independent sets.
  Extend each $J_i$ to a maximal independent set $I_i \in i(H)$
  to complete the proof.
\end{proof}

The lemma proves that $f_H^S$ satisfies condition 4. of
Definition~\ref{def:can_be_read_from_blocker}.
Clearly $f_H^\emptyset(\emptyset^k) = 1$ if and only if $\chi(H) \le k$.

\subsubsection{Hypergraph Homomorphism}
Suppose $H$ and $F$ are hypergraphs.
A (not-necessarily injective) mapping $f : V(H) \to V(F)$
is called \emph{homomorphism} if the image of each edge of $H$ is an edge of $F$.
The existence of a homomorphism between $H$ and $F$ is denoted by $H \to F$.
The framework of homomorphisms has a rich expressive power to describe
graph-theoretic properties.
For instance, $\chi(H) \le k$ if and only if $H \to K_k$,
where $K_k$ is the complete $k$-vertex graph.

As usual, we start by formulating the problem is terms of maximal
independent sets.
A hypergraph is called \emph{uniform} if all edges have the same size.

\begin{lemma}
  Suppose $H$ and $F$ are $r$-uniform hypergraphs.
  A mapping $f : V(H) \to V(F)$ is a homomorphism
  if and only if $f^{-1}(I) \in \hat{i}(H)$ for every $I \in i(F)$.
\end{lemma}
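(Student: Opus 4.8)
The plan is to prove the two implications separately, both essentially by unwinding definitions; the one structural fact I will need is that in an $r$-uniform hypergraph every vertex set of size at most $r$ is independent unless it is itself an edge.

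For the ``only if'' direction I would assume $f$ is a homomorphism and fix an arbitrary $I \in i(F)$. Since maximal independent sets are independent, $I$ contains no edge of $F$. If $f^{-1}(I)$ contained an edge $e \in E(H)$, then $f(e) \subseteq I$, and $f(e) \in E(F)$ because $f$ is a homomorphism, contradicting independence of $I$; hence $f^{-1}(I) \in \hat{i}(H)$. Note that maximality of $I$ is not used here.

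For the ``if'' direction I would assume $f^{-1}(I) \in \hat{i}(H)$ for all $I \in i(F)$ and fix an edge $e \in E(H)$, aiming to show $f(e) \in E(F)$. Because $H$ is $r$-uniform, $|f(e)| \le |e| = r$. Suppose $f(e) \notin E(F)$. Then $f(e)$ contains no edge of $F$: edges of $F$ have size exactly $r$, and the only subset of $f(e)$ of size at least $r$ is $f(e)$ itself. So $f(e)$ is independent in $F$ and extends to some $I \in i(F)$ with $f(e) \subseteq I$; but then $e \subseteq f^{-1}(I)$, contradicting independence of $f^{-1}(I)$. Hence $f(e) \in E(F)$ for every $e \in E(H)$, i.e.\ $f$ is a homomorphism.

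The only delicate point, and the spot where $r$-uniformity genuinely matters, is the claim that a non-edge image $f(e)$ is independent in $F$: this breaks if $|f(e)|$ can exceed $r$, so the hypotheses that $H$ has rank at most $r$ and $F$ is $r$-uniform are both used. I do not anticipate any further obstacle; as a byproduct the argument also shows that $f$ must be injective on each edge of $H$.
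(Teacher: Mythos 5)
Your proof is correct and essentially identical to the paper's: the forward direction is the same unwinding of definitions, and the reverse direction uses exactly the paper's observation that for $r$-uniform $H$ and $F$, the image $f(e)$ of an edge is either an edge of $F$ or independent, then extends to a maximal independent set to reach a contradiction. Your parenthetical byproduct (that $f$ is injective on each edge) is also correct, though not needed for the statement.
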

\begin{proof}
  If $f$ is a homomorphism,
  then clearly $f^{-1}(I)$ is stable for each $I \in i(F)$,
  as otherwise the image of any edge contained in $f^{-1}(I)$ is not an edge.

  Suppose $f : V(H) \to V(F)$ is a function such that
  $f^{-1}(I) \in \hat{i}(H)$ for every $I \in i(F)$.
  Let $h$ be an edge of $H$ and let $f(h)$ be its image.
  Since $H$ and $F$ are $r$-uniform, $f(h)$ is either an edge of $F$ or stable.
  If $f(h)$ is stable, let $I \in i(F)$ be a maximal independent set of $F$
  containing $f(h)$.
  But now we see that $h \subseteq f^{-1}(I)$, which is a contradiction,
  and hence $f(h)$ is an edge for every $h \in E(H)$.
\end{proof}

\begin{lemma}
  Suppose $H$ and $F$ are $r$-uniform hypergraphs.
  The following two statements are equivalent:
  \begin{enumerate}
  \item $H \to F$,
  \item there exists a function $g : i(F) \to i(H)$
    such that
    \[
    \left\{ A_x = \bigcap_{ I \in i(F), x \in I} g(I): x \in V(F) \right\}
    \]
    covers $V(H)$.
  \end{enumerate}
\end{lemma}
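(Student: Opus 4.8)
The plan is to deduce both implications from the immediately preceding lemma, which states that a map $f : V(H) \to V(F)$ is a homomorphism if and only if $f^{-1}(I) \in \hat{i}(H)$ for every $I \in i(F)$. Since $r$-uniformity of $H$ and $F$ is already part of the hypothesis of that lemma, it may be invoked freely here, and the whole argument is then a matter of translating between a homomorphism and its associated ``choice'' of maximal independent sets $g(I)$.

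For $(1) \Rightarrow (2)$, take a homomorphism $f$. By the preceding lemma each $f^{-1}(I)$, $I \in i(F)$, is independent in $H$, so we may choose $g(I) \in i(H)$ with $f^{-1}(I) \subseteq g(I)$. I claim the family $\{A_x : x \in V(F)\}$ built from this $g$ covers $V(H)$. Indeed, given $v \in V(H)$ put $x = f(v)$; for every $I \in i(F)$ with $x \in I$ we have $v \in f^{-1}(I) \subseteq g(I)$, hence $v \in \bigcap_{I \in i(F),\, x \in I} g(I) = A_x$.

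For $(2) \Rightarrow (1)$, suppose $g : i(F) \to i(H)$ is given with $\{A_x\}$ covering $V(H)$. Define $f : V(H) \to V(F)$ by picking, for each $v$, some $x \in V(F)$ with $v \in A_x$ — possible exactly because the sets $A_x$ cover $V(H)$ — and setting $f(v) = x$. To conclude that $f$ is a homomorphism it suffices, by the preceding lemma, to verify $f^{-1}(I) \in \hat{i}(H)$ for every $I \in i(F)$. If $v \in f^{-1}(I)$ then $f(v) = x$ for some $x \in I$, so $g(I)$ is one of the terms of the intersection defining $A_x$, giving $v \in A_x \subseteq g(I)$; thus $f^{-1}(I) \subseteq g(I) \in i(H)$ is independent, as needed.

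The argument is essentially bookkeeping and there is no substantive obstacle. The one point deserving attention is the interplay between the universally quantified intersection $A_x = \bigcap_{I \ni x} g(I)$ and the inclusions $f^{-1}(I) \subseteq g(I)$: in the forward direction one feeds $v \in f^{-1}(I)$ for all relevant $I$ into the intersection, whereas in the backward direction one extracts the single term $g(I)$ using $x \in I$. The only other thing to record is that the choice function $f$ in the second part is well defined, which is precisely the covering hypothesis.
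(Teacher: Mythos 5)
Your proof is correct and takes essentially the same route as the paper's: both directions hinge on the immediately preceding lemma characterising homomorphisms via preimages of maximal independent sets, with $g$ obtained in the forward direction by extending each $f^{-1}(I)$ to a maximal independent set, and $f$ recovered in the backward direction as a choice function from the covering. The only difference is that you spell out the intermediate inclusions $v\in f^{-1}(I)\subseteq g(I)$ and $v\in A_x\subseteq g(I)$ a bit more explicitly, which is fine.
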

\begin{proof}
  First suppose $f : H \to F$ is a homomorphism.
  Each $f^{-1}(I)$ is stable in $H$ for $I \in i(F)$,
  so we may set $g(I) = J$, where $J \in i(H)$
  is an arbitrary maximal independent set containing $f^{-1}(I)$.
  Let $v$ be an arbitrary vertex of $H$.
  We have $v \in A_{f(v)}$,
  and therefore the sets $A_x$ cover $V(H)$.

  Now suppose $g$ is a function satisfying condition 2. in the statement.
  For each vertex $v$ of $H$ pick some arbitrary vertex $x$ of $F$
  such that $v \in A_x$ and set $f(v) = x$.
  Such a vertex exists, because $\{A_x\}_{x \in V(F)}$ covers $V(H)$.
  Let $I \in i(F)$ be arbitrary.
  We see that $f^{-1}(I) \subseteq g(I)$ is stable in $H$,
  and therefore $f$ is a homomorphism.
\end{proof}

Now suppose $F$ is a fixed $r$-uniform hypergraph
and on instance $r$-uniform hypergraph $H$ we have to decide if $H \to F$.
Suppose $i(F) = \{I_1, \ldots, I_k\}$.
Let $f_H^S : tr_S(i(H))^k \to \{0, 1\}$ be defined as follows:
\[
f_H^S(A_1, \ldots, A_k) = 1 \text{ iff}
\]
\begin{enumerate}
\item
there exists a $k$-tuple $(J_1, \ldots, J_k) \in i(H)^k$
such that $\left\{ \cap_{i : x \in I_i} J_i\right\}_{x \in V(F)}$ covers $H$;
\item $\left\{ \cap_{i : x \in I_i} A_i\right\}_{x \in V(F)}$ covers $S$; and
\item $A_i \subseteq J_i$.
\end{enumerate}
Clearly $f_H^\emptyset(.) = 1$ iff $H \to F$
and $f_H^S$ satisfies properties 1. - 3. of
Definition~\ref{def:can_be_read_from_blocker}.
Property 4. follows from the following lemma.

\begin{lemma}
  Suppose $H = H_1 \cup H_2$, $V(H_1) \cap V(H_2) \subseteq S$,
  $A_i \in tr_S(i(H))$ for $1 \le i \le k$
  and $\left\{ \cap_{i : x \in I_i} A_i\right\}_{x \in V(F)}$ covers $S$.
  The following two conditions are equivalent:
  \begin{enumerate}
  \item
    $f_H^S(A_1, \ldots, A_k) = 1$,
  \item
    we can find $A_i^j \in tr_S(i(H_j))$,
    $i \in [k]$, $j \in [2]$,
    such that $A_i \subseteq A_i^1 \cap A_i^2$,
    and $f_{H_j}^S(A_1^j,  \ldots, A_k^j) = 1$
    for $j \in [2]$.
  \end{enumerate}
\end{lemma}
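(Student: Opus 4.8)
The proof will follow the same template as the two preceding split-merge lemmas for Maximum Weighted Independent Set and $k$-Colouring, adapted to the homomorphism encoding. The plan is to establish the two implications separately, in each case translating a witnessing tuple of maximal independent sets on $H$ into witnessing tuples on $H_1$ and $H_2$, and conversely.

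\begin{proof}
  First suppose $f_H^S(A_1, \ldots, A_k) = 1$, witnessed by $(J_1, \ldots, J_k) \in i(H)^k$ with $A_i \subseteq J_i$ and $\{\bigcap_{i : x \in I_i} J_i\}_{x \in V(F)}$ covering $H$.
  For $j \in [2]$ and each $i$, the set $J_i \cap V(H_j)$ is stable in $H_j$ since $H_j$ is an induced subhypergraph of $H$; extend it to some $I_i^j \in i(H_j)$ with $J_i \cap V(H_j) \subseteq I_i^j$.
  Set $A_i^j = I_i^j \cap S$.
  Since $A_i \subseteq J_i \cap S \subseteq I_i^j \cap S = A_i^j$ (using $A_i \subseteq S$), and $A_i^j \in tr_S(i(H_j))$, we have $A_i \subseteq A_i^1 \cap A_i^2$.
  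It remains to check $f_{H_j}^S(A_1^j, \ldots, A_k^j) = 1$: condition~1 holds with witness $(I_1^j, \ldots, I_k^j)$, because for each vertex $v \in V(H_j)$, picking $x = f(v)$ where $f$ is the homomorphism recovered from $(J_1, \ldots, J_k)$ (or more directly, picking $x$ with $v \in \bigcap_{i : x \in I_i} J_i$), we get $v \in \bigcap_{i : x \in I_i} J_i \cap V(H_j) \subseteq \bigcap_{i : x \in I_i} I_i^j$, so this family covers $H_j$; condition~3 holds by construction; and condition~2 is inherited from the cover of $S$ by $\{\bigcap_{i : x \in I_i} A_i\}$, as $A_i \subseteq A_i^j$ implies $\bigcap_{i : x \in I_i} A_i \subseteq \bigcap_{i : x \in I_i} A_i^j$, so the latter family also covers $S$.

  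Conversely, suppose condition~2 holds, witnessed by $(I_1^j, \ldots, I_k^j) \in i(H_j)^k$ for $j \in [2]$, with $A_i \subseteq A_i^1 \cap A_i^2$ and $A_i^j = I_i^j \cap S$.
  Mimicking the $k$-Colouring merge, define $J_i = \bigl((I_i^1 \cup I_i^2) \setminus S\bigr) \cup A_i$, and extend each $J_i$ to some maximal independent set $\tilde J_i \in i(H)$.
  One checks $J_i$ is stable in $H$: an edge $h \in E(H)$ lies in $E(H_1)$ or $E(H_2)$, say $E(H_j)$, and then $h \subseteq J_i$ would force $h \subseteq (I_i^j \setminus S) \cup A_i \cup (I_i^{3-j} \setminus S)$; but $h \cap V(H_{3-j}) \subseteq V(H_1) \cap V(H_2) \subseteq S$, so $h \cap (I_i^{3-j} \setminus S) = \emptyset$ on that part, giving $h \subseteq I_i^j$ (using $A_i \subseteq I_i^j$), contradicting $I_i^j \in i(H_j)$.
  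Taking $A_i \subseteq J_i \subseteq \tilde J_i$ gives condition~3.
  For condition~1 with witness $(\tilde J_1, \ldots, \tilde J_k)$: given $v \in V(H)$, it lies in some $V(H_j)$, and condition~1 for $H_j$ supplies an $x \in V(F)$ with $v \in \bigcap_{i : x \in I_i} I_i^j$; if $v \notin S$ then $v \in \bigcap_{i : x \in I_i} (I_i^j \setminus S) \subseteq \bigcap_{i : x \in I_i} J_i$, and if $v \in S$ then condition~2 for $H$ supplies an $x$ with $v \in \bigcap_{i : x \in I_i} A_i \subseteq \bigcap_{i : x \in I_i} J_i$; either way $\{\bigcap_{i : x \in I_i} \tilde J_i\}_{x \in V(F)}$ covers $H$.
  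Condition~2 for $H$ is a hypothesis of the lemma.
  Hence $f_H^S(A_1, \ldots, A_k) = 1$.
\end{proof}

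\noindent
The main obstacle, as in the $k$-Colouring lemma, is the converse direction: one must verify that the ``glued'' sets $J_i$ are genuinely stable in the whole of $H$, which relies essentially on the separator condition $V(H_1) \cap V(H_2) \subseteq S$ together with $A_i \subseteq I_i^j$ to ensure that no edge straddling the two parts becomes monochromatic under the reconstructed map, and that the reconstructed cover of $H$ patches together the covers of $H_1$, $H_2$, and $S$ correctly on the overlap. The rest is bookkeeping of the three defining conditions of $f_H^S$ through the intersection $A_i \subseteq A_i^1 \cap A_i^2$.
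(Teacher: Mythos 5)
Your proof is correct and follows the same approach as the paper's: in the forward direction restrict the witness tuple to each $H_j$ and extend to maximal independent sets there, and in the converse direction glue via $J_i = \bigl((I_i^1 \cup I_i^2) \setminus S\bigr) \cup A_i$ and extend. The paper's own proof leaves both verification steps as ``it can be checked''; you have filled them in (the stability check for the glued $J_i$, and the case split $v \in S$ versus $v \notin S$ for the cover condition), and these details are right.
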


\begin{proof}
  First suppose $(J_1, \ldots, J_k)$ witnesses $f_H^S(A_1, \ldots, A_k) = 1$.
  We can find $(J_1^j, \ldots, J_k^j) \in i(H_j)^k$ such that
  $J_i \cap V(H_j) \subseteq J_i^j$.
  It can be checked that
  $f_{H_j}^S(J_1^j \cap S,  \ldots, J_k^j \cap S) = 1$.

  Now suppose the second condition in the statement holds and that it is
  witnessed by $(J_1^j, \ldots, J_k^j) \in i(H_j)^k$
  and $(A_1^j, \ldots, A_k^j) \in tr_S(i(H_j))^k$.
  Find a $k$-tuple $(J_1, \ldots J_k) \in i(H)^k$
  such that $(J_i^1 \cup J_i^2)\setminus S \cup A_i \subseteq J_i$.
  It is easy to check that $(J_1, \ldots J_k)$ witnesses
  $f_H^S(A_1, \ldots, A_k) = 1$.
\end{proof}

\subsection{Comparison with other tree decompositions}
As discussed before, treewidth, hypertree width and fractional hypertree width
are functionally bounded to each other for hypergraphs of bounded rank,
while minor-matching hypertree width cannot be bounded from below by treewidth
even for graphs.
For instance, consider the complete graph, $K_n$.
Since $K_n$ is $2K_2$-free, we have $\mu\text{-}tw(K_{n}) = 1$,
while $tw(K_n) = n-1$.

As for clique-width (or equivalently rank-width),
it cannot be bounded functionally by $\mu\text{-}tw$.
For instance,
consider the graph $G$ over $Z_{4k+1}$ where $i \sim j$
if $i > k+j \mod 4k+1$.
We have that $G$ is $2K_2$-free, hence $\mu\text{-}tw(G) = 1$,
while $cr(G) \ge k$.
In the other direction,
let $G_0 = K_1$,
let $G_{2n+1}$ be a complete union of two copies of $G_{2n}$
and let $G_{2n+2}$ be a disjoint union of two copies of $G_{2n+1}$.
Clearly $rkw(G_n) = 1$,
while using the tools developed in \S~4 we can see that
$\mu\text{-}tw(G_{2n+1}) = n$.

Nevertheless, minor-matching hypertree width permits many more graphs
of bounded width and additionally it applies to hypergraphs as well as graphs.
Indeed, every co-bipartite graph (graph $G$ with $\chi(\overline{G}) \le 2$)
is $3K_2$-free, and there are $e^{\Theta(n^2)}$ $n$-vertex co-bipartite graphs,
while the number of graphs with bounded clique-width by a fixed constant
is $e^{O(n \ln n)}$.

We can give a more precise estimate on the size of the class of
$n$-vertex graphs $G$ with $\mu\text{-}tw(G) \le k$, call it $M_{n, k}$.
Note that $M_{n, k}$ contains all graphs $G$ with $\chi(\overline G) \le k$,
since $\mu(G) \le \alpha(G) \le \chi(\overline G) \le k$;
does not contain all bipartite graphs, not all split graphs,
and not all graphs with $\chi(\overline G) \le k + 1$ (for large $n$);
and therefore the colouring number of $M_{n, k}$ is exactly $k$ (for large $n$) and
$|M_{n, k}| = 2^{(1 - \frac1k + o(1)){n \choose 2}}$
by the colouring number theorem~\cite{alekseev_size, bollobas97, bollobas09, bollobas11}.
Similar argument holds for graphs $G$ with $\alpha\text{-}tw(G) \le k$.

\begin{lemma}
  The number of $n$-vertex graphs $G$ with $\mu\text{-}tw(G) \le k$
  ($\alpha\text{-}tw(G) \le k$) is $2^{(1 - \frac1k + o(1)){n \choose 2}}$.
\end{lemma}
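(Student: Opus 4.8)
The plan is to sandwich $|M_{n,k}|$ (and the analogous count for $\alpha\text{-}tw \le k$) between two bounds that match up to a $2^{o(n^2)}$ factor, and then invoke the colouring-number theorem of Alekseev and Bollob\'as--Thomason. For the lower bound, I would exhibit a large hereditary subclass: as observed in the text, every graph $G$ with $\chi(\overline{G}) \le k$ satisfies $\mu(G) \le \alpha(G) = \alpha(G[V(G)]) \le \chi(\overline{G}) \le k$, and a clique partition of $\overline G$ into $k$ parts gives a trivial tree decomposition of $G$ (a path of $k$ bags, one per part, each bag a clique of $G$) whose every bag $B$ has $\mu_G(B) \le \mu(G[B]) \le 1 \le k$; the same decomposition works verbatim for $\alpha$-width since $\alpha(G[B]) = 1$. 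Hence $M_{n,k} \supseteq \{G : \chi(\overline G) \le k\}$, and the number of the latter is already $2^{(1 - \frac1k + o(1))\binom n2}$ — this is exactly the extremal/typical count for the hereditary property ``$\overline G$ is $k$-colourable,'' whose colouring number is $k$.

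For the upper bound I need that the colouring number of the hereditary closure of $M_{n,k}$ is \emph{exactly} $k$, not more. Membership in $M_{n,k}$ is hereditary in the induced-subgraph order — deleting a vertex only removes constraints, so $\mu\text{-}tw(G[S]) \le \mu\text{-}tw(G)$ (restrict a decomposition of $G$ to $S$) — so $M_{*,k}$ is a genuine hereditary property and the colouring-number theorem applies once I pin down its colouring number $c$. The colouring number is the largest $c$ such that the property contains all graphs that are $(r,s)$-colourable (partitionable into $r$ cliques and $s$ independent sets) for some $r + s = c$. Since $M_{n,k}$ contains all graphs with $\chi(\overline G)\le k$, i.e. all $(k,0)$-colourable graphs, we have $c \ge k$. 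For $c \le k$ I would argue that $M_{n,k}$ fails to contain some $(r,s)$-colourable family with $r+s = k+1$: the cleanest witness is a complete multipartite-style or ``blow-up'' graph forcing a large induced matching through every small bag — concretely, take $k+1$ pairwise disjoint large anticliques (for the $(0,k+1)$ case this is already a large complete $(k+1)$-partite-complement, i.e. a disjoint union of $k+1$ cliques in $\overline G$'s language) and show, via a separator argument on any tree decomposition, that some bag must contain vertices from many parts and hence carries an induced matching of size $> k$. For the mixed $(r,s)$ cases one blends $r$ large cliques with $s$ large anticliques in the same fashion. The text already signals the right examples (``does not contain all bipartite graphs, not all split graphs, not all graphs with $\chi(\overline G)\le k+1$''), so concretely I would verify: a sufficiently large disjoint union of $k+1$ cliques is $(0,k+1)$-colourable but not in $M_{n,k}$, and a suitable split-like graph handles the remaining decompositions of $k+1$.

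With $c = k$ established for the hereditary property ``$\mu\text{-}tw \le k$,'' the colouring-number theorem~\cite{alekseev_size, bollobas97, bollobas09, bollobas11} yields $|M_{n,k}| = 2^{(1 - \frac1k + o(1))\binom n2}$ immediately, and since the lower-bound construction already matches this, we are done. The $\alpha\text{-}tw$ case is identical: $\alpha\text{-}tw$ is hereditary for the same reason, $\{G : \chi(\overline G) \le k\} \subseteq \{G : \alpha\text{-}tw(G) \le k\}$ by the clique-partition decomposition above, and $\mu_G(B) \le \alpha(G[B])$ forces $\mu\text{-}tw \le \alpha\text{-}tw$ so the non-membership examples transfer, giving colouring number $k$ there too. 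I expect the only real work to be the separator argument showing the $k+1$-part examples escape $M_{n,k}$ — everything else is bookkeeping around a black-box enumeration theorem.
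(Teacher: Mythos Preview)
Your overall plan---use the Alekseev--Bollob\'as--Thomason colouring-number theorem after pinning the colouring number of the hereditary class at exactly $k$---is precisely the paper's approach. But two concrete steps in your execution fail.

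For the lower bound, your ``path of $k$ clique bags'' is not a valid tree decomposition of $G$: edges of $G$ running between different cliques of the partition lie in no bag. The inequality $\mu_G(B) \le \mu(G[B])$ you invoke is also backwards---an induced matching of $G$ whose edges all meet $B$ need not live inside $G[B]$, so in general $\mu_G(B) \ge \mu(G[B])$. The fix is immediate and is what the paper actually does: the single-bag decomposition gives $\mu\text{-}tw(G) \le \mu_G(V(G)) = \mu(G) \le \alpha(G) \le \chi(\overline G) \le k$, and the same chain with $\alpha$ in place of $\mu$ handles the $\alpha\text{-}tw$ case.

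More seriously, your proposed upper-bound witnesses do not escape $M_{n,k}$. A disjoint union of $k{+}1$ large cliques has $\mu\text{-}tw = 1$: put each clique in its own bag of a star, and since the components are disconnected every edge meeting a bag lies inside that clique, so $\mu_G(B)=1$. The complete $(k{+}1)$-partite graph is $2K_2$-free and hence also has $\mu\text{-}tw = 1$. So the separator argument you sketch would be attempting to prove something false for these particular graphs. The paper instead asserts that $M_{n,k}$ omits some bipartite graph, some split graph, and some graph with $\chi(\overline G)\le k{+}1$; these three exclusions together rule out every $(r,s)$ with $r+s=k{+}1$ (the first handles $s\ge 2$, the second $r,s\ge 1$, the third $r\ge k{+}1$), forcing the colouring number down to $k$. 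Producing actual witnesses requires graphs in which \emph{every} separator carries a large intersecting induced matching---the maximally symmetric blow-ups you reach for are exactly the ones that admit clean low-width decompositions.
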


\section{Proof of the Intersecting Minor-Matching Theorem}
This section is dedicated to proving the Intersecting Minor-Matching Theorem,
Theorem~\ref{thm:Intersecting_Minor-Matching_Theorem}.
We use the machinery developed in~\cite{yolov2016blocker}.
We need preliminary definitions first.

\begin{define} [Join and Meet]
  Suppose $H$ and $F$ are clutters.
  Define
  \begin{align*}
    H \vee F &:= cl(V(H) \cup V(F), E(H) \cup E(F)), \\
    H \wedge F &:=
    cl(V(H) \cup V(F), \{h \cup f : h \in E(H), f \in E(F)\}).
  \end{align*}
\end{define}

Join and meet are dual: $b(H \vee F) = b(H) \wedge b(F)$ and
$b(H \wedge F) = b(H) \vee b(F)$;
and they commute with deletion and contraction:
$(H \vee F) \del v = (H \del v) \vee (F \del v)$.
For a pair of disjoint sets of vertices $S$ and $T$ of $H$,
we denote the minor of $H$ obtained by deleting $S$ and contracting $T$
by $H[S; T]$.

\begin{define}
Given a clutter $H$, and edge $h \in E(H)$ and $v, u \neq v \in h$,
define $H \circ (h, u, v) = H[u; h \setminus \{u\}] \lor H[v; h \setminus \{v\}]$.
We call a sequence $\Pi = \{(h_i, u_i, v_i)\}_{i=1}^n$
a \emph{quasimatching} if
\begin{enumerate}
  \item
    $\{v_i, u_i\} \subseteq h_i$ and $v_i \neq u_i$ for each $i \in [n]$,
  \item $h_{i+1}$ is an edge of
    $(H \circ (h_1, v_1, u_1)) \circ \ldots \circ (h_i, v_i, u_i)$.
\end{enumerate}
A quasimatching $\Pi$ is called $S$-intersecting if each pair $\{v_i, u_i\}$
intersects $S \subseteq V(H)$.
\end{define}

\begin{lemma}
  \label{thm:transversal_types}
  Suppose we are given a clutter $H$, a minimal transversal $T$ of $H$,
  a set $S \subseteq V(H)$ and a vertex $x$ of $H$.
  Either
  \begin{enumerate}
  \item
    (i) there exists a minimal transversal $T' \subseteq T \setminus \{x\}$
    of $H \del x$ such that $T' \cap S = (T \setminus \{x\}) \cap S$;
    or
  \item
    (ii) there exists a vertex $z \in S \setminus \{x\}$
    and an edge $h \in E(H)$
    such that $x \in h$,
    $h \cap T = \{z\}$
    and $T \setminus \{z\} \in b(H \circ (h, x, z))$.
  \end{enumerate}
\end{lemma}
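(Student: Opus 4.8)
The plan is to analyse how the minimal transversal $T$ of $H$ degrades when $x$ is deleted. The starting point is the easy observation that $T_x := T\setminus\{x\}$ is automatically a transversal of $H\del x$, since every edge of $H\del x$ avoids $x$ and therefore meets $T_x$ exactly where it meets $T$. So the real question is whether $T_x$ can be shrunk to a \emph{minimal} transversal of $H\del x$ without discarding any vertex of $S$: if so, we are in case (i), and if not, I will extract the edge $h$ witnessing case (ii).

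Concretely, I would choose $R$ inclusion-minimal among all transversals $R'$ of $H\del x$ with $T_x\cap S\subseteq R'\subseteq T_x$ (such $R'$ exist, since $T_x$ itself qualifies). Minimality forces every $t\in R\setminus S$ to have a \emph{private edge} with respect to $R$ in $H\del x$ (an edge $e$ with $e\cap R=\{t\}$), for otherwise $R\setminus\{t\}$ would still belong to the family. Then I split cases. If every vertex of $R\cap S$ also has a private edge, $R$ is a minimal transversal of $H\del x$, and $T_x\cap S\subseteq R\subseteq T_x$ gives $R\cap S=(T\setminus\{x\})\cap S$, so (i) holds with $T'=R$. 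Otherwise some $z\in R\cap S$ has no private edge with respect to $R$; since $z\in R\subseteq T\setminus\{x\}$ this gives $z\in S\setminus\{x\}$, the vertex claimed in (ii).

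For the edge in (ii): using that $T$ is a minimal transversal of $H$, pick a private edge $h\in E(H)$ of $z$ with respect to $T$, so $h\cap T=\{z\}$. One must have $x\in h$, because otherwise $h\in E(H\del x)$ and $h\cap R=\{z\}$ would make $h$ a private edge of $z$ with respect to $R$. It then remains to verify $T\setminus\{z\}\in b(H\circ(h,x,z))$. The mechanism is that, unwinding the definitions, the edges of $H\circ(h,x,z)=H[x;h\setminus x]\vee H[z;h\setminus z]$ are exactly the inclusion-minimal sets among $\{\,g\setminus h : g\in E(H),\ \{x,z\}\not\subseteq g\,\}$, with vertex set $V(H)\setminus h$; so transversals of $H\circ(h,x,z)$ are precisely the subsets of $V(H)\setminus h$ meeting $g\setminus h$ for every such $g$. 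I would then check: (a) $T\setminus\{z\}$ meets each $g\setminus h$, which reduces to $T\cap g\neq\{z\}$, and this holds since $T\cap g=\{z\}$ (together with $\{x,z\}\not\subseteq g$, forcing $x\notin g$) would again exhibit $g$ as a private edge of $z$ with respect to $R$; and (b) for each $t\in T\setminus\{z\}$, a private edge $g_t$ of $t$ in $H$ has $z\notin g_t$, so $g_t\setminus h$ contains an edge $e$ of $H\circ(h,x,z)$ with $(T\setminus\{z\})\cap e=\{t\}$ — the inclusion $\subseteq\{t\}$ is immediate, and equality follows from (a) — hence $e$ is private for $t$ and $T\setminus\{z,t\}$ is not a transversal. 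Thus $T\setminus\{z\}$ is a minimal transversal of $H\circ(h,x,z)$, and (ii) holds.

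The step I expect to be the main obstacle is verification (b): after contracting $h\setminus\{x\}$ (resp.\ $h\setminus\{z\}$), a witnessing edge $g_t$ of $H$ only yields a \emph{superset} $g_t\setminus h$ of an edge $e$ of the minor, so membership $t\in e$ is not automatic and must be recovered from the already-established transversality of $T\setminus\{z\}$. One also has to be slightly careful that passing to $cl(\cdot)$ and to minors does not disturb the transversal structure, but that is routine given the blocker identities $b(H\vee F)=b(H)\wedge b(F)$, $b(H\del v)=b(H)\con v$ and $b(H\con v)=b(H)\del v$ quoted earlier.
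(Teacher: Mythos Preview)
Your proof is correct and follows essentially the same approach as the paper's: a case split on whether the $S$-vertices of $T\setminus\{x\}$ all retain private edges in $H\del x$, and in the failing case, the same verification that $T\setminus\{z\}\in b(H\circ(h,x,z))$ via its two constituents. The only organisational difference is that you build the candidate $T'=R$ by shrinking $T_x$ while preserving the $S$-trace and then test minimality, whereas the paper takes an arbitrary minimal transversal $T'\subseteq T_x$ of $H\del x$ and uses the case hypothesis to recover the $S$-trace; your framing with $R$ is slightly more constructive but otherwise equivalent.
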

\begin{proof}
  \underline{Case 1:}
  For each $z \in (T \setminus \{x\}) \cap S$
  there exists $h \in E(H \del x)$ such that $h \cap T = \{z\}$.

  Let $T^\prime \subseteq T\setminus \{x\}$ and $T^\prime \in b(H \del x)$.
  Clearly $T^\prime \cap S \subseteq (T \setminus \{x\}) \cap S$.
  To prove the opposite direction,
  suppose $t \in (T \setminus \{x\}) \cap S$.
  We can find $h \in E(H \del x)$ such that $h \cap T = \{t\}$.
  But since
  $T'$ is a transversal of $H \del x$ -- hence $h \cap T'$ is non-empty --
  and $T' \subseteq T$,
  it follows that $h \cap T' = \{t\}$,
  and therefore $t \in T' \cap S$.

  \underline{Case 2:}
  There exists $z \in (T \setminus \{x\}) \cap S$
  such that for each $h \in E(H)$
  if $h \cap T = \{z\}$, then $x \in h$.

  Since $z \in T$ and $T$ is minimal,
  there is $h \in E(H)$ such that $h \cap T = \{z\}$.
  It is easy to see that $T \setminus \{z\} \in b(H[z, h \setminus \{z\}])$.
  Let $f \in H[x, h \setminus \{x\}]$, and let $\widehat{f} \in H$
  be such that $f \subseteq \widehat{f} \subseteq f \cup (h \setminus \{x\})$.
  Since $x \not\in \widehat{f}$
  we have $\widehat{f} \cap T \neq \{z\}$.
  Let $y \in \widehat{f} \cap (T \setminus \{z\})$,
  and hence $y \notin h$, so $y \in f$.
  But then $(T \setminus \{z\}) \cap f$ is non-empty,
  and $f$ is arbitrary,
  so we conclude that $T \setminus \{z\}$
  is (possibly non-minimal) transversal of $H[x, h \setminus \{x\}]$.
  The last statement in conjunction with $T \setminus \{z\} \in b(H[z, h \setminus \{z\}])$
  implies that $T \setminus \{z\} \in b(H \circ (h, x, z))$.
\end{proof}

\begin{lemma}
  \label{thm:quasimatchings}
  The size of $tr_S(b(H))$ is at most the number of $S$-intersecting
  quasimatchings.
\end{lemma}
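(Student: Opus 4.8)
The plan is to set up an injection from $tr_S(b(H))$ into the set of $S$-intersecting quasimatchings by repeatedly applying Lemma~\ref{thm:transversal_types}. Fix a linear order $x_1 < x_2 < \dots < x_n$ on $V(H)$. Given a trace element $A \in tr_S(b(H))$, choose some minimal transversal $T_0 \in b(H)$ with $T_0 \cap S = A$ (the map will be defined so as not to depend on the choice, or we simply fix one canonical choice per $A$). We process the vertices $x_1, \dots, x_n$ in order, maintaining at step $i$ a clutter $H_i$ and a minimal transversal $T_i \in b(H_i)$ whose intersection with $S$ is controlled. Start with $H_0 = H$ and $T_0$ as above. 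At step $i$, apply Lemma~\ref{thm:transversal_types} with the clutter $H_{i-1}$, the transversal $T_{i-1}$, the set $S$, and the vertex $x = x_i$. In case (i) we set $H_i = H_{i-1} \del x_i$ and $T_i = T'$, recording nothing; note $T_i \cap S = (T_{i-1}\setminus\{x_i\})\cap S$, and since $x_i$ is being deleted this does not lose any element of $S$ we still need to track except possibly $x_i$ itself. In case (ii) we obtain $z \in S\setminus\{x_i\}$ and an edge $h$ with $x_i \in h$, $h \cap T_{i-1} = \{z\}$, and $T_{i-1}\setminus\{z\} \in b(H_{i-1}\circ(h,x_i,z))$; we then set $H_i = H_{i-1}\circ(h,x_i,z)$, $T_i = T_{i-1}\setminus\{z\}$, and append the triple $(h, x_i, z)$ to our sequence $\Pi$.

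Next I would verify that the recorded sequence $\Pi$ is genuinely an $S$-intersecting quasimatching relative to $H$. Each recorded triple $(h_j, u_j, v_j) = (h, x_i, z)$ satisfies $\{u_j, v_j\}\subseteq h_j$ with $u_j \neq v_j$ (since $z \neq x_i$), so condition 1 of the quasimatching definition holds; and since $z \in S$, the pair intersects $S$, giving $S$-intersection. For condition 2, the key point is that $H_i$ is obtained from $H_{i-1}$ by a sequence of deletions (case (i)) and $\circ$-operations (case (ii)), and the edge $h$ used at a later step must be an edge of the current clutter $H_{i-1}$ by construction — one needs to observe that deletions only remove edges and commute appropriately with the $\circ$-operations, so that $h$ is an edge of $H\circ(h_1,u_1,v_1)\circ\dots\circ(h_{j-1},u_{j-1},v_{j-1})$, matching the definition (intermediate deletions only shrink the edge set, so they do not obstruct an edge from being present). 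This bookkeeping — that $H_i$ is always a minor of $H$ of the prescribed form and that the recorded triples read off a legitimate quasimatching of $H$ itself — is the part that requires care.

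Finally I would argue injectivity: that $A$ is recoverable from $\Pi$. After processing all $n$ vertices, every vertex has been deleted or contracted away, so $T_n = \emptyset$ and the final clutter has empty vertex set. Running the construction backwards, at each case-(ii) step we reinsert $z$ into the transversal, and at each case-(i) step we reinsert $x_i$ only if forced — but here is the crux: the elements of $A = T_0 \cap S$ are exactly those $S$-vertices that get ``peeled off'' as the $z$'s in case (ii), because an element $z \in T_{i-1}\cap S$ falls into Case~2 of the proof of Lemma~\ref{thm:transversal_types} precisely when it cannot be dropped while deleting $x_i$. More carefully: an element $s \in A$ remains in $T_i \cap S$ until the step $i$ at which $x_i$ plays the role witnessing that $s$ must leave; at that step $s$ is recorded as the $z$-coordinate of a triple. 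Conversely no vertex outside $A$ is ever recorded as a $z$, since $T_i \cap S \subseteq A$ is maintained throughout (each step only removes elements from $T\cap S$). Hence $A = \{\, z : (h,u,z) \in \Pi \,\}$, so $A$ is determined by $\Pi$, and the map $A \mapsto \Pi$ is injective. I expect the main obstacle to be precisely the quasimatching-validity bookkeeping in the middle paragraph: one must track how case-(i) deletions interleave with case-(ii) $\circ$-operations and confirm that the recorded edge at each stage is still present in the appropriately-defined intermediate clutter, using the commutativity of deletion with $\vee$ and with contraction noted earlier in the paper.
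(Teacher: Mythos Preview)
Your overall strategy---unfolding the recursive inequality of the paper into an explicit injection $A \mapsto \Pi$---is the right idea, and your treatment of quasimatching validity (the middle paragraph) is essentially correct: case-(i) deletions commute past the $\circ$-operations, so the recorded edges remain edges of the appropriate intermediate clutter. The genuine gap is in the injectivity argument, and in fact the map as you define it is \emph{not injective}.

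Take $V(H)=\{a,b\}$, $S=\{a\}$, $E(H)=\{\{a,b\}\}$. Then $b(H)=\{\{a\},\{b\}\}$ and $tr_S(b(H))=\{\{a\},\emptyset\}$. With the order $a<b$ you process $x_1=a$ first. For $A=\{a\}$ (with $T_0=\{a\}$) and for $A=\emptyset$ (with $T_0=\{b\}$) alike, we have $S\setminus\{x_1\}=\emptyset$, so case~(ii) is impossible; case~(i) applies vacuously and gives $T_1=\emptyset$ with nothing recorded. Both traces map to the empty quasimatching, so the map collapses. More generally, your recovery formula $A=\{z:(h,u,z)\in\Pi\}$ is false: whenever you process a vertex $x_i\in S$ and case~(i) applies with $x_i\in T_{i-1}$, the element $x_i$ is silently dropped from $T\cap S$ and never recorded.

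The paper avoids this by \emph{not} fixing an arbitrary order: at each step it picks $x\in V(H)\setminus S$ if one exists, and only picks $x\in S$ once $V(H)\subseteq S$. In the latter regime the trace is the identity, so a separate bijection $T\mapsto T\setminus\{x\}$ between $\mathcal{T}_1$ and $b(H\del x)$ is invoked to get $|tr_S(\mathcal{T}_1)|=|tr_S(b(H\del x))|$. The paper then runs a clean induction on $|V(H)|$ rather than an explicit injection. Your argument can be repaired by adopting the same vertex-selection rule, but you will still need to replace the incorrect formula for $A$ with a genuine injectivity proof---in particular, handling the Phase~2 steps (when $V\subseteq S$) where case~(i) legitimately forgets a vertex of $A$ requires the bijection argument, not the ``$A$ equals the set of recorded $z$'s'' claim.
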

\begin{proof}
  We argue by induction.
  The claim trivially holds for clutters with one vertex.
  Suppose the statement holds for clutters with less than $n$ vertices.

  If $V(H) \not\subseteq S$, pick some $x \in V(H) \setminus S$,
  otherwise pick arbitrary $x \in V(H)$.
  Let $\mathcal{T}_1 \subseteq E(b(H))$ be the set of minimal transversals
  of the first type described in Lemma~\ref{thm:transversal_types},
  and let $\mathcal{T}_2 = E(b(H)) \setminus \mathcal{T}_1$.

  If $x \notin S$, then clearly
  $tr_S(\mathcal{T}_1) \subseteq tr_S(b(H \del x))$.
  On the other hand, let $A \in tr_S(b(H \del x))$ and let $T' \in b(H \del x)$,
  such that $T' = A \cap S$.
  In particular, we can find $T \in \{T', T' \cup \{x\}\} \cap b(H)$.
  It follows that $T \in \mathcal{T}_1$,
  $A \in tr_S(\mathcal{T}_1)$ and finally
  $tr_S(b(H \del x)) \subseteq tr_S(\mathcal{T}_1)$.
  We conclude that $tr_S(\mathcal{T}_1) \equiv tr_S(b(H \del x))$,
  and therefore $|tr_S(\mathcal{T}_1)| = |tr_S(b(H \del x))|$.

  If $V(H) \subseteq S$,
  then the mapping $T \mapsto T \setminus \{x\}$
  is a bijection between $\mathcal{T}_1$ and $b(H \del x)$,
  and hence
  \[
  |tr_S(\mathcal{T}_1)| = |\mathcal{T}_1|
  = |b(H \del x)| = |tr_S(b(H \del x))|.
  \]
  We see that $|tr_S(\mathcal{T}_1)| = |tr_S(b(H \del x))|$
  in either case.

  The transversals of $\mathcal{T}_2$ fall in the second case of
  Lemma~\ref{thm:transversal_types}.
  Each $A \in tr_S(\mathcal{T}_2)$ can be found
  in $tr_S(b(H \circ (h, x, z))) + \{z\}$,
  where ``$+$'' is an abuse of notation meaning that the vertex $z$
  is added to each set in the trace,
  for an appropriate choice of a triple $(h, x, z)$.

  Finally we have
  \[
  |tr_S(b(H))| \le |tr_S(b(H \del x))|
  +   \sum_{z \in S \setminus \{x\}}
  \sum_{h \in E(H) : \{x, z\} \subseteq h}
  |tr_S(b(H \circ (h, x, z)))|.
  \]
  Let $qm_S(H)$ denote the set of $S$-intersecting quasimatchings of $H$.
  By induction we have $|tr_S(b(H \del x))| \le |qm_S(H \del x)|$,
  and $|tr_S(b(H \circ (h, x, z)))| \le |qm_S(H \circ (h, x, z))|$.
  In addition, for every $\Pi' \in qm_S(H \circ (h, x, z))$
  we may construct $\Pi = \{(h, x, z)\} \cup \Pi' \in qm_S(H)$.
  Any two quasimatchings constructed this way are distinct,
  as they either disagree on the triple $(h, x, z)$
  (or the lack of such triple containing $x$)
  or are constructed from distinct members $qm_S(H \circ (h, x, z))$
  (or $qm_S(H \del x)$), and hence are distinct.
  We conclude that $|tr_S(b(H))| \le |qm_S(H)|$.
\end{proof}

\begin{lemma}
  \label{thm:matching_thm}
  Suppose $H$ is a rank $r$ clutter and $\Pi = \{(h_i, u_i, v_i)\}_{i=1}^\ell$
  is a quasimatching of size $\ell$.
  There exists a subsequence $j_1, \ldots, j_k$ of size
  $k \ge \ell 2^{-(r-2)}/(2r-3)$, such that
  \[
  kK_2 \cong \{\{u_{j_i}, v_{j_i}\}\}_{i=1}^k \minor H.
  \]
\end{lemma}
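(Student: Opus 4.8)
The goal is to extract from a long quasimatching $\Pi$ a sub-collection of the pairs $\{u_i,v_i\}$ that form an honest induced matching (a $kK_2$ minor in $H$). The obstacle is that successive operations $H\circ(h_i,u_i,v_i)$ modify the clutter, so the pair $\{u_{i+1},v_{i+1}\}$ lives in a clutter that already "remembers" the earlier pairs; we must ensure that when we restrict attention to a chosen subsequence, each retained pair is still a genuine edge of the *original* $H$ after the appropriate deletions and contractions, and that distinct retained pairs are vertex-disjoint and non-adjacent. I would set up the extraction greedily, processing $i=1,2,\ldots,\ell$ in order and maintaining a set $J$ of "accepted" indices together with a bookkeeping of which vertices of $H$ have been "used" (either placed in the matching, or deleted, or contracted away) by the accepted triples.

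The plan is as follows. First I would unfold the definition of $H\circ(h,u,v)=H[u;h\setminus\{u\}]\vee H[v;h\setminus\{v\}]$ and observe that each application touches only the at most $r$ vertices of $h_i$: two of them ($u_i,v_i$) are the ones we want in the matching, and the remaining at most $r-2$ vertices of $h_i\setminus\{u_i,v_i\}$ are contracted in one branch and deleted in the other. So the "damage" done by triple $i$ to the ambient vertex set, beyond $u_i$ and $v_i$ themselves, is confined to a set $W_i:=h_i\setminus\{u_i,v_i\}$ of size at most $r-2$. The key structural point, which I would prove by induction on $i$ following exactly the commutation identities listed after the definition of join/meet (that $\vee$, deletion and contraction all commute), is: if an index set $J\subseteq\{1,\dots,i\}$ has the property that for every $j\in J$ the set $\{u_j,v_j\}$ is disjoint from $\bigcup_{j'\in J, j'\neq j}(\{u_{j'},v_{j'}\}\cup W_{j'})$, then $\{\{u_j,v_j\}:j\in J\}$ is a matching minor of $H$ obtained by deleting/contracting the vertices in $\bigcup_{j\in J} W_j$ (in some consistent way) and deleting all other vertices not among the $u_j,v_j$. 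In other words, a "sunflower-free-in-cores" sub-collection of triples automatically yields the claimed $kK_2\minor H$.

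Given that reduction, the lemma becomes a purely combinatorial counting argument: from the sequence of $\ell$ triples I want a large subset $J$ such that the pairs $\{u_j,v_j\}$ are pairwise disjoint and moreover each pair avoids the "waste" vertices $W_{j'}$ of the other chosen triples. Process the triples in order; when considering triple $i$, reject it if $\{u_i,v_i\}$ meets $\{u_j,v_j\}\cup W_j$ for some already-accepted $j<i$, otherwise accept it. A newly accepted triple $j$ can be the cause of rejecting a later triple $i$ only if $\{u_i,v_i\}\cap(\{u_j,v_j\}\cup W_j)\neq\emptyset$; since $|\{u_j,v_j\}\cup W_j|\le r$ and — here I would use that $\Pi$ is a quasimatching, so a vertex, once contracted or deleted, cannot reappear in a later $h_i$ — each of these $\le r$ vertices can be "hit" by at most ... this is where the $2^{-(r-2)}$ and $(2r-3)$ factors must come out. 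I expect the honest version requires splitting the $\le r-2$ waste vertices into the two branches of the join (factor roughly $2^{r-2}$, or rather its reciprocal as a density), and then a charging argument where each accepted triple blocks at most $2r-3$ later indices (its two matching vertices plus, across the two branches, the waste vertices, counted with the branch multiplicity), giving $k\ge \ell\,2^{-(r-2)}/(2r-3)$. The main obstacle is getting this charging exactly right: one has to be careful that when triple $i$ is rejected because of the waste set $W_j$ of an earlier accepted triple, the *branch* of the join in which that waste vertex lives is what matters, and to convert a per-branch count into the stated bound one picks, for each accepted $j$, the branch that is responsible for the majority of rejections — this is the source of the extra factor $2^{r-2}$ in the denominator rather than $2$. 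I would therefore first nail the branch-bookkeeping lemma (the inductive structural claim above, stated with an explicit choice of branch assignment $\epsilon:J\to\{0,1\}$ for the waste vertices), and only then run the greedy-plus-charging argument; the counting itself is then routine.
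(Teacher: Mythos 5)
Your plan founders on a basic observation you actually made yourself in passing: a vertex, once it appears in some $h_j$, is gone from every later clutter in the quasimatching chain, because $H\circ(h,u,v)$ has vertex set $V(H)\setminus h$ (both $H[u;h\setminus\{u\}]$ and $H[v;h\setminus\{v\}]$ live on $V(H)\setminus h$, and $\vee$ takes the union of the vertex sets). Consequently $h_1,\ldots,h_\ell$ are pairwise disjoint by the definition of a quasimatching, so your proposed greedy criterion --- reject $i$ if $\{u_i,v_i\}$ meets $\{u_j,v_j\}\cup W_j$ for some accepted $j<i$ --- is vacuous; it accepts all $\ell$ triples. Your inductive structural claim would then give $k=\ell$, which is strictly stronger than the stated bound $k\ge \ell\,2^{-(r-2)}/(2r-3)$ for $r>2$. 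That the lemma only claims a weaker bound (and is invoked as a nontrivial theorem) is a strong signal that the structural claim, in the form you state it, is false for $r\ge 3$: pairwise disjointness of the $h_i$ does \emph{not} suffice to make $\{\{u_i,v_i\}\}_i$ a clutter $kK_2$-minor. The real obstruction is not vertex overlap but the ``no extraneous edges'' half of the minor requirement: after deleting/contracting $V(H)\setminus\bigcup_i\{u_i,v_i\}$, some edge of $H$ may collapse to a singleton $\{u_i\}$ or to a ``cross pair'' like $\{u_i,v_j\}$, and then the resulting clutter is not $kK_2$. Avoiding these collapses requires a consistent global choice, for each retained waste vertex, of delete-versus-contract --- and that choice interacts across triples through the $\vee$ branching. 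This is where the $2^{-(r-2)}$ and $(2r-3)$ factors actually arise, and your charging argument never engages with it; the sentence beginning ``this is where the $2^{-(r-2)}$ and $(2r-3)$ factors must come out'' is precisely the missing content.

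For comparison, the paper does not give a self-contained argument: it observes that quasimatchings satisfy the axioms of a ``semi-matching'' (Definition 3.1 of \cite{yolov2016blocker}) and invokes Theorem 3.4 of that reference, which is exactly the branch-bookkeeping-plus-extraction statement you were hoping to prove from scratch. So the intended route is a black-box reduction, not a direct combinatorial attack. If you want to pursue the direct route, the work to be done is the structural claim, stated correctly: one must specify a delete/contract assignment on each $W_j$ (your $\epsilon$), prove that under such an assignment the pairs survive as edges and no edge of $H$ collapses into or below a pair, and only then count. The disjointness bookkeeping you focused on is already for free and cannot be the source of the loss factor.
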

\begin{proof}
  It suffices to verify that quasimatchings
  are semi-matchings as defined in \cite{yolov2016blocker}, Definition 3.1,
  and the lemma follows from
  Theorem 3.4 of~\cite{yolov2016blocker}.
\end{proof}

\begin{proof}
  [Proof of the Intersecting Minor-Matching Theorem,
  Theorem~\ref{thm:Intersecting_Minor-Matching_Theorem}]

  Suppose $m = \mu_H(S)$ and $r = rk(H)$.
  The number of choices for a triple $(h, v, u)$ is bounded by $n^{r}$.
  The number of sequences of such triples of length at most $\ell$ is
  $\sum_{k=1}^\ell n^{rk} \le n^{r(\ell + 1)}$.
  If we set $\ell$ to be the length of the longest $S$-intersecting
  quasimatching, we see by Lemma~\ref{thm:matching_thm} that
  $\ell \le m 2^{r-2}(2 r - 3)$.
  By Lemma~\ref{thm:quasimatchings} we conclude that
  \[
  |tr_S(b(H))| \le n^{m2^{r-2}(2 r^2 - 3r) + r}.
  \]

  To make the theorem algorithmic,
  one may consider a branching algorithm,
  exploring all cases of Lemma~\ref{thm:transversal_types}.
  The size of the recursion tree is bounded by the number
  of quasimatchings, hence the total running time is polynomial.
\end{proof}

\section{Approximation of $\mu\text{-}tw$ for graphs
  and $\alpha\text{-}tw$ for hypergraphs}
\label{sec:approx_mu}
This section is organised as follows.
We start by arguing that finding decompositions
of low $\alpha$ and $\mu$ width for graphs is essentially equivalent.
In \S~\ref{sec:red_alpha_to_mu} we explain how to reduce a graph $G$
to a graph $M(G)$ such that $\alpha\text{-}tw(G) = \mu\text{-}tw(M(G))$.
Moreover, for every decomposition $\mathcal{T}$ of $M(G)$
we can find a decomposition $M^{-1}(\mathcal{T})$ of $G$ such that
$\alpha(M^{-1}(\mathcal{T})) \le \mu(\mathcal{T})$.
The main result of \S~\ref{sec:red_mu_to_alpha}
states that a reduction from $\mu\text{-}tw$ to $\alpha\text{-}tw$
exists with the same properties.

Recall that $\underline{H}$ is the Gaifman graph of $H$.
\begin{define}
  We say that a hypergraph width measure $\lambda_H$ is \emph{well-behaved} if
  \begin{enumerate}
  \item
    $\lambda_H(\{x\}) \ge 1$
    for each vertex $x \in V(H)$,
  \item
    $\lambda_H(S \cup T) \le \lambda_H(S) + \lambda_H(T)$,
  \item
    $\lambda_H(S \cup T)
    = \lambda_H(S) + \lambda_H(T)$
    for disjoint sets $S$ and $T$ such that
    $E_{\underline{H}}(S, T) = \emptyset$,
  \item
    $\lambda_{F}(S) \le \lambda_{H}(T)$
    for each pair of hypergraphs $F \supseteq H$
    and sets $S \subseteq T$, and
  \item
    $\lambda_H(S) \le k$ can be decided in $||H||^{O(k)}$ time.
  \end{enumerate}
\end{define}

Note that the independence number $\alpha_H$,
the edge covering number $\varrho_H$ (Definition~\ref{def:ghtw}),
and the fractional edge covering number $\varrho^*_H$
(Definition~\ref{def:fhtw}),
are all well-behaved,
while $\mu_H$ is not.
The following theorem is proved in \S~\ref{sec:approx_well_behaved_msr}.
\begin{thm}
  \label{thm:approx_well_behaved}
  Suppose $\lambda_H$ is a well-behaved width measure for hypergraphs.
  There exists an algorithm with running time $||H||^{O(k^3)}$ that
  given a hypergraph $H$ and an integer $k$ as input, either
  \begin{itemize}
  \item
    finds a tree decomposition $\mathcal{T}$ of $H$
    with $\lambda(\mathcal{T}) = O(k^3)$, or
  \item
    correctly concludes that $\lambda\text{-}tw(H) > k$.
  \end{itemize}
\end{thm}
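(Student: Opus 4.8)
The plan is to adapt the classical "separator-based" approximation scheme for treewidth (and its generalization to fractional hypertree width in~\cite{ftw-approx}) to an arbitrary well-behaved measure $\lambda$. The key structural fact I would use is that if $\lambda\text{-}tw(H) \le k$, then in the optimal decomposition every bag $B_t$ has $\lambda_H(B_t) \le k$, and standard tree-decomposition manipulations give, for any $W \subseteq V(H)$, a bag $B_t$ whose removal splits $W$ into pieces each meeting at most $|W|/2$ vertices of $W$ — i.e. a \emph{balanced $\lambda$-separator of weight $\le k$}. Here "weight" must be measured by $\lambda$ of the separator together with the already-accumulated boundary, so I would actually work with the standard notion of a partial decomposition: a set $W$ (the current "bag/boundary") together with a component $C$ of $H - W$, and recursively decompose $H[C \cup W]$ with $W$ forced into the top bag. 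The recursion maintains the invariant $\lambda_H(W) = O(k^3)$; properties (2) subadditivity and (4) monotonicity of $\lambda$ are exactly what make this invariant composable.

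The core subroutine is: given $(W, C)$ with $\lambda_H(W) \le c\cdot k$ for an appropriate constant $c$, either find a set $X \subseteq C \cup W$ with $\lambda_H(X) = O(k)$ such that every component of $H[C \cup W] - X$ meets $W \cup X$ "in a balanced way" (each component plus its share of $W$ has $\lambda$-weight at most a constant fraction of $\lambda_H(W)$, or at most $O(k)$ new vertices from $W$), or correctly report $\lambda\text{-}tw(H) > k$. To find $X$ I would search over all candidate separators of bounded $\lambda$-weight; property (5) lets me \emph{test} $\lambda_H(Y) \le O(k)$ in time $||H||^{O(k)}$, but enumerating all such $Y$ naively is too expensive, so instead I would follow the $\varrho^*$-approximation approach: guess the set of $O(k)$ "important" vertices that a true separator would use, or more precisely guess a small vertex set and take $X$ to be the closed neighborhood / a minimal balanced separator relative to it, checking the weight bound with (5). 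Because $\lambda$ is only assumed subadditive and monotone (not, say, submodular), I cannot invoke min-cut machinery directly; instead the argument is that \emph{if} $\lambda\text{-}tw(H)\le k$, a bag of the optimal decomposition furnishes a balanced separator $S$ with $\lambda_H(S)\le k$, and $\lambda_H(S \cup (\text{old boundary})) \le k + \lambda_H(W)$ by subadditivity, so the weight only grows additively at each level; balancedness caps the depth of the recursion at $O(\log n)$ — but a naive additive growth over $O(\log n)$ levels would give $O(k\log n)$, not $O(k^3)$, so one must be more careful.

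To get the clean $O(k^3)$ bound rather than $O(k\log n)$, I would use the refined charging argument from the fractional-hypertree-width approximation: one does \emph{not} carry the entire accumulated boundary forward; instead, at each step one re-separates so that the new bag consists of $O(k)$ vertices plus only an $O(k)$-sized portion of the old boundary that is "still relevant", and one shows via a potential-function / amortized argument that the boundary size stays $O(k^2)$ and the bag weight $O(k^3)$. Concretely: run the recursion on pairs $(W,C)$ with $|W| = O(k^2)$; at each node find a balanced separator $S$ of the component with $\lambda_H(S) = O(k)$ (or reject), let the new bag be $S \cup W$, and recurse on $(N[S] \cap (\text{piece}) \cup (W \cap \text{piece}), \text{piece})$ for each piece, observing that balancedness forces $|W \cap \text{piece}| \le \tfrac{2}{3}|W|$ so the boundary of each child is $\le \tfrac23|W| + O(k) = O(k^2)$ by the usual fixed-point of $w \mapsto \tfrac23 w + O(k)$; then $\lambda$ of any bag is $O(k^2)\cdot O(1)$... wait, that is $O(k^2)$, so to reach $O(k^3)$ honestly one applies property (2) to bound $\lambda_H(\text{bag}) \le |\text{bag}| \cdot \max_x \lambda_H(\{x\})$, and for the measures of interest $\lambda_H(\{x\}) = O(1)$, giving $O(k^2)$; the extra factor of $k$ comes from the need, in the separator-finding step, to iterate the "no balanced separator of weight $ck$ $\Rightarrow$ reject" test over $O(k)$ rounds of peeling. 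The running time is dominated by: $O(\log n)$ recursion depth $\times$ polynomially many $(W,C)$ pairs $\times$ $||H||^{O(k)}$ per separator search $\times$ the enumeration of candidate separators, which I would bound by $||H||^{O(k^2)}$, yielding $||H||^{O(k^3)}$ overall after accounting for the $O(k)$ peeling rounds.

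The main obstacle I anticipate is the separator-finding step under the weak hypotheses on $\lambda$: without submodularity one cannot compute a minimum-weight balanced separator, so the algorithm must instead \emph{enumerate} enough candidates to be sure of catching a good one whenever $\lambda\text{-}tw(H)\le k$, and bounding this enumeration (so that it costs only $||H||^{O(k^2)}$ rather than something exponential) while still guaranteeing one hits a balanced separator derived from an optimal bag is the delicate part — it is exactly here that properties (1)–(5) must be combined, property (3) (additivity across non-adjacent pieces) being what guarantees that a balanced \emph{$\lambda$-weight} split exists whenever a balanced \emph{vertex-count} split of the bag does.
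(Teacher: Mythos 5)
Your high-level recursive scheme (find a balanced separator of bounded $\lambda$-weight for the current boundary $W$, split, recurse, with subadditivity and monotonicity of $\lambda$ making the invariant composable) matches the outer structure of the paper's argument, which is Lemma~\ref{thm:W_separator_math} plus Theorems~\ref{thm:W_separator_alg} and~\ref{thm:approx_decomp}. However, you explicitly leave the core subroutine — the $(A,B)$-separator-finding algorithm of Theorem~\ref{thm:alpha_separator} — as an admitted black box, calling it the ``delicate part'' without an actual mechanism, and that is precisely where the paper's technical content lies. The paper does \emph{not} do a generic ``guess $O(k)$ important vertices'' enumeration. It first \emph{saturates} the Gaifman graph: starting from $\underline{H}$, it repeatedly adds an edge $uv$ whenever $\lambda_H\bigl(N(u) \cap N(v)\bigr) > k$, producing $\underline{\underline{H}}$. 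Lemmas~\ref{thm:extending_H} and~\ref{thm:extended_H} show this preserves every width-$\le k$ decomposition and every width-$\le k$ separator, and — crucially — forces every \emph{clique} of $\underline{\underline{H}}$ to have $\lambda_H$-weight $\le k$. If a target separator $S$ with $\lambda_H(S) \le k$ exists, one takes a maximum $\underline{\underline{H}}$-independent subset $I \subseteq S$ (so $|I| \le k$). Then $S$ lies in $X \cup \bigcup_{v \in I} N_v$ where $X$ is the union of $\binom{|I|}{2} \le \binom{k}{2}$ co-neighbourhoods of non-adjacent pairs (each of weight $\le k$ by the saturation step, so $\lambda_H(X) \le \binom{k}{2}k$ by Property~2), and each $S \cap N_v$ is a \emph{clique} of $\underline{\underline{H}}$ by maximality of $I$, hence of weight $\le k$. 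Reconstructing a valid separator is then reduced, via the clique-cutset sets of Lemma~\ref{clique_cutset} and a ``bad vertices'' / ``bad pairs'' bookkeeping, to a \textbf{2-SAT} instance solvable in linear time. This is exactly what makes the search polynomial without any submodularity, and it is the step your proposal does not supply.

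A secondary but real error is your accounting of the $O(k^3)$. The cubic does not come from ``$O(k)$ rounds of peeling'' layered on top of a $O(k^2)$ boundary; it comes directly from the separator weight $\binom{k+1}{2}k = \Theta(k^3)$ produced by Theorem~\ref{thm:alpha_separator} (the $\binom{k}{2}k$ contribution of $X$ plus the $k \cdot k$ contribution of the cliques). The recursion in Theorem~\ref{thm:approx_decomp} then keeps the boundary weight at the fixed point of $w \mapsto \frac{2}{3}w + O(k^3)$, which is $O(k^3)$. Your worry that additive growth over $O(\log n)$ levels gives $O(k \log n)$ also evaporates here: balancedness in Theorem~\ref{thm:W_separator_alg} is with respect to $\lambda$-weight, and only a $\frac{2}{3}$-fraction of the old boundary's weight is carried into each child, so the potential argument you gesture at is in fact the straightforward one once the separator weight is correctly taken as $O(k^3)$.
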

\noindent
Note the lack of restrictions on the rank.

The overall approximation for $\mu\text{-}tw$ is a follows:
staring from a graph $G$ and a number $k$,
build a graph $L = L(G)$ as in \S~\ref{sec:red_mu_to_alpha}.
We have $\mu\text{-}tw(G) = \alpha\text{-}tw(L)$,
so the algorithm in Theorem~\ref{thm:approx_well_behaved}
either finds a decomposition $\mathcal{T}$ of $L$
such that $\alpha(\mathcal{T}) = O(k^3)$, or declares
$\alpha\text{-}tw(L) > k$.
In former case we get a decomposition $L^{-1}(\mathcal{T})$
of $G$ such that $\mu(L^{-1}(\mathcal{T})) = O(k^3)$,
in the latter we deduce that $\mu\text{-}tw(G) > k$.

\subsection{Reducing $\alpha\text{-}tw$ to $\mu\text{-}tw$}
\label{sec:red_alpha_to_mu}
Given a graph $G$ let $M = M(G)$ be obtained from $G$
by adding a new vertex $v'$ and an edge $vv'$ for each $v \in V(G)$.
The key observation here is that given a set $S \subseteq V(G) \subseteq V(M)$,
we have $\alpha_G(S) = \mu_{M}(S)$.

\begin{lemma}
  We have $\alpha\text{-}tw(G) = \mu\text{-}tw(M)$.
\end{lemma}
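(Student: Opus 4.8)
The plan is to prove the two inequalities $\mu\text{-}tw(M) \le \alpha\text{-}tw(G)$ and $\alpha\text{-}tw(G) \le \mu\text{-}tw(M)$ separately, in both cases transporting a tree decomposition of one graph to the other while controlling the relevant width. The main tool is the stated identity $\alpha_G(S) = \mu_M(S)$ for every $S \subseteq V(G)$, so the real work is handling the pendant vertices $v'$ that live in $V(M) \setminus V(G)$.

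For the direction $\mu\text{-}tw(M) \le \alpha\text{-}tw(G)$, I would start from an optimal tree decomposition $\mathcal{T} = (T, \{B_t\})$ of $G$ with $\alpha_G(\mathcal{T}) = \alpha\text{-}tw(G)$. I need to extend it to a decomposition of $M$, which means covering each new edge $vv'$ and keeping the nodes containing $v'$ connected. The clean way is: for each $v \in V(G)$ pick a node $t(v)$ with $v \in B_{t(v)}$, attach a fresh leaf $\ell_v$ adjacent to $t(v)$, and set $B_{\ell_v} = \{v, v'\}$. The resulting pair is a tree decomposition of $M$: the subtree $T_{v'}$ is the single node $\ell_v$, the subtree $T_v$ gains one extra leaf adjacent to a node already in $T_v$ so stays connected, and every edge of $M$ is covered (old edges in old bags, new edges in the leaf bags). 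I then need $\mu_M(B_t) \le \alpha_G(\mathcal{T})$ for every node. For an old node $t$, $B_t \subseteq V(G)$, so $\mu_M(B_t) = \alpha_G(B_t) \le \alpha_G(\mathcal{T})$ by the key identity. For a new leaf $\ell_v$, $B_{\ell_v} = \{v, v'\}$; here I should check directly that $\mu_M(\{v,v'\})$ is small — any minor-matching of $M$ whose every edge meets $\{v,v'\}$ has at most one or two edges, and in any case is bounded by a constant, which is $\le 1 = $ trivial lower bound $\le \alpha\text{-}tw(G)$ (using that $\alpha_G(\{x\}) \ge 1$, so $\alpha\text{-}tw(G) \ge 1$). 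Actually the cleanest bound: since $\{v,v'\}$ has only one vertex in $V(G)$, and an $S$-intersecting minor-matching with $|S \cap V(G)|$ controlling things — I'd verify $\mu_M(\{v,v'\}) \le 1$ or at worst a small constant, then argue it does not exceed $\alpha\text{-}tw(G)$.

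For the direction $\alpha\text{-}tw(G) \le \mu\text{-}tw(M)$, I would take an optimal decomposition $\mathcal{T}' = (T', \{B_t\})$ of $M$ and define $M^{-1}(\mathcal{T}')$ by intersecting every bag with $V(G)$: set $B_t^G = B_t \cap V(G)$. This is a tree decomposition of $G$ since restricting bags to an induced subgraph preserves both axioms (connectivity of $T_x$ for $x \in V(G)$ is inherited; every edge of $G$ is an edge of $M$ hence in some bag, and both endpoints lie in $V(G)$ so they survive the intersection). For the width, $\alpha_G(B_t^G) = \mu_M(B_t \cap V(G)) \le \mu_M(B_t) \le \mu(\mathcal{T}')$, where the middle inequality uses monotonicity of $\mu_H(\cdot)$ in $S$ (an $S$-intersecting minor-matching is also $S'$-intersecting for $S \subseteq S'$, directly from the definition). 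This also yields the stronger claim advertised in the section intro, that $\alpha(M^{-1}(\mathcal{T})) \le \mu(\mathcal{T})$.

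The main obstacle I anticipate is not conceptual but bookkeeping around the pendant vertices: verifying that the augmented object in the first direction really is a valid tree decomposition (especially the subtree-connectivity axiom after grafting leaves), and pinning down the exact value or a clean bound for $\mu_M(\{v,v'\})$ and for $\mu_M(S)$ when $S$ contains some of the new vertices — one must be sure the identity $\alpha_G(S) = \mu_M(S)$ is being invoked only for $S \subseteq V(G)$, and handle mixed sets $S$ by noting $\mu_M(S) = \mu_M(S \cap V(G))$ plus possibly a bounded additive term coming from pendants, or simply avoid mixed sets by the construction above. Once those routine verifications are in place, combining the two inequalities gives $\alpha\text{-}tw(G) = \mu\text{-}tw(M)$.
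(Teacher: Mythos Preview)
Your proposal is correct and follows essentially the same approach as the paper: attach pendant leaf bags $\{v,v'\}$ to pass from a decomposition of $G$ to one of $M$, and strip the primed vertices to go back, using the identity $\alpha_G(S)=\mu_M(S)$ for $S\subseteq V(G)$ on the old bags and the bound $\mu_M(\{v,v'\})=1$ on the new ones. The only cosmetic differences are that the paper disposes of the empty-graph case $\mu\text{-}tw(M)=0$ explicitly (where your appeal to $\alpha\text{-}tw(G)\ge 1$ would fail), and in the reverse direction the paper re-attaches the pendant leaves to put the decomposition into the canonical form $\mathcal{M}(\mathcal{T})$ before comparing widths, whereas your direct use of monotonicity $\mu_M(B_t\cap V(G))\le \mu_M(B_t)$ is a slight streamlining of the same idea.
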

\begin{proof}
  If $\mu\text{-}tw(M) = 0$, then $M$ contains no vertices,
  but in this case $\alpha\text{-}tw(G) = \mu\text{-}tw(M) = 0$.
  Suppose $\mathcal{T} = (T, \{B_t\})$ is a tree decomposition of $G$
  of $\alpha$-width at least $1$.
  Extend $\mathcal{T}$
  by adding new nodes with bags $vv'$ for each vertex $v$
  and attached to an arbitrary node containing $v$,
  or an arbitrary node if no such node exists.
  It is easy to see that
  $\mathcal{M}(\mathcal{T})$ is a tree decomposition of $M$.
  The $\mu$-width of each new bag is $1$,
  hence a bag of maximum width can be found among the nodes of $\mathcal{T}$.
  As noted before,
  the $\alpha$-width of a node of $\mathcal{T}$
  equals its $\mu$-width in $M$,
  hence $\mu(\mathcal{M}(\mathcal{T})) = \alpha(\mathcal{T})$,
  and therefore, since $\mathcal{T}$ was arbitrary,
  we have $\alpha\text{-}tw(G) \ge \mu\text{-}tw(M)$.

  Conversely, let $\widehat{\mathcal{T}}$ be a tree decomposition of $M$.
  For each vertex $v$ of $G$
  delete $v'$ from each bag of $\widehat{\mathcal{T}}$
  and add a new node with bag $vv'$
  attached to an arbitrary node of $\widehat{\mathcal{T}}$ containing $v$,
  or an arbitrary node if no such node exists.
  We see that the modified tree decomposition is of the form
  $\mathcal{M}(\mathcal{T})$ for a tree decomposition $\mathcal{T}$ of $G$.
  It is easy to see that
  $\mu(\mathcal{M}(\mathcal{T})) \le \mu(\widehat{\mathcal{T}})$,
  and in the previous paragraph we saw that
  $\alpha(\mathcal{T}) =\mu(\mathcal{M}(\mathcal{T}))$,
  hence $\mu\text{-}tw(M) \ge \alpha\text{-}tw(G)$.
\end{proof}

Combining $\mu\text{-}tw(G) \le \alpha\text{-}tw(G)$
with the lemma above we get
\[
\mu \text{-} tw(G) \le \alpha \text{-} tw(G) =
\mu \text{-} tw(M(G)) \le \alpha \text{-} tw(M(G)).
\]
For completeness, we note
\[
\mu(G) \le \alpha(G) = \mu(M(G)) \le \alpha(M(G)).
\]

\subsection{Reducing $\mu\text{-}tw$ to $\alpha\text{-}tw$}
\label{sec:red_mu_to_alpha}
Recall that the \emph{line graph} of $G$, denoted $L(G)$,
has the edges of $G$ as vertices,
and two vertices of $L(G)$ are adjacent
whenever the corresponding edges of $G$ intersect.

Let $L^k(G)$ be the graph with vertices $E(G)$ and edges
\[
\left\{ef \in {E(G) \choose 2} : 1 \le d_{L(G)}(e, f) \le k\right\}.
\]
Here $d_G(u, v)$ is used to denote the \emph{distance} between
$u$ and $v$ in $G$,
that is the length of the shortest $u$--$v$ path in $G$
measured in number of edges.
If there is no $u$--$v$ path, we set $d_G(u, v) = \infty$.
Note that $L(G) = L^1(G)$.
We focus on $L = L^2(G)$.
Two vertices, $e$ and $f$, of $L$ are adjacent
if $G[e \cup f]$ is connected, or equivalently $G[e \cup f] \not\cong 2K_2$.
Let $\mathcal{L} : 2^{V(G)} \to 2^{V(L)}$
map sets $S \subseteq V(G)$ to the edges of $G$ incident with $S$.
We see that $\mu_G(S) = \alpha_L(\mathcal{L}(S))$.

\begin{thm}
  We have $\mu\text{-}tw(G) = \alpha\text{-}tw(L)$.
\end{thm}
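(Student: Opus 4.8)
The plan is to establish the equality $\mu\text{-}tw(G) = \alpha\text{-}tw(L)$ by constructing, for every tree decomposition of one graph, a corresponding tree decomposition of the other with matching width, using the map $\mathcal{L}$ and an inverse-type map. The crucial identity we lean on is the one already noted in the excerpt, $\mu_G(S) = \alpha_L(\mathcal{L}(S))$, which says that a bag $S$ in a decomposition of $G$ has the same $\mu$-value as the bag $\mathcal{L}(S)$ does in $L$. So the skeleton of the argument is: (i) from a decomposition $\mathcal{T} = (T,\{B_t\})$ of $G$, produce a decomposition $\mathcal{T}' = (T,\{\mathcal{L}(B_t)\})$ of $L$, check it is a valid tree decomposition of $L$, and observe $\alpha_L(\mathcal{L}(B_t)) = \mu_G(B_t)$, giving $\alpha\text{-}tw(L) \le \mu\text{-}tw(G)$; (ii) conversely, from a decomposition of $L$ build a decomposition of $G$ whose $\mu$-width does not exceed the $\alpha$-width of the original, giving $\mu\text{-}tw(G) \le \alpha\text{-}tw(L)$.

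For direction (i), I would first verify the two tree-decomposition axioms for $(T, \{\mathcal{L}(B_t)\})$ as a decomposition of $L$. Connectivity: a vertex of $L$ is an edge $e = xy$ of $G$; the nodes $t$ with $e \in \mathcal{L}(B_t)$ are exactly those with $x \in B_t$ or $y \in B_t$, i.e.\ $T_x \cup T_y$. Since the edge $xy$ of $G$ forces some bag to contain both $x$ and $y$, the subtrees $T_x$ and $T_y$ overlap, so $T_x \cup T_y$ is connected. Edge coverage: if $e, f$ are adjacent in $L$, then $G[e \cup f]$ is connected, so $e$ and $f$ share a vertex, or there is an edge of $G$ joining them; in the first case any bag containing the shared vertex contains both $e$ and $f$ in its image; in the second case the covering edge of $G$ sits in some bag $B_t$, and then $e, f \in \mathcal{L}(B_t)$. (The definition of $L = L^2(G)$ as $G[e\cup f] \not\cong 2K_2$ is exactly what makes this second case work, and this is the step that most needs care — enumerating the possible configurations of $e \cup f$.) Once validity is established, the width bound is immediate from $\mu_G(B_t) = \alpha_L(\mathcal{L}(B_t))$.

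For direction (ii), given a tree decomposition $\widehat{\mathcal{T}} = (T, \{C_t\})$ of $L$, I want to build bags $B_t \subseteq V(G)$ with $\mathcal{L}(B_t) \subseteq C_t$ (or at least with $\mu_G(B_t) \le \alpha_L(C_t)$). The natural candidate is to let $B_t$ be the set of endpoints in $G$ of the edges lying in $C_t$, i.e.\ $B_t = \bigcup_{e \in C_t} e$. One then checks this is a tree decomposition of $G$: every vertex $x$ of $G$ is incident to at least one edge (isolated vertices must be handled separately, by attaching a singleton bag, as in the $M(G)$ lemma), and the set of nodes whose $C_t$ contains an edge at $x$ is connected because these edges form a clique in $L$ (any two edges at $x$ are at distance $\le 1$), hence by the Helly property of subtrees their bags share a common node — actually one needs the standard fact that the bags containing a clique of $L$ have a common node, which follows from pairwise intersection of the subtrees $T_e$; similarly each edge $xy$ of $G$, viewed as a vertex of $L$, lies in some $C_t$, putting both $x$ and $y$ in $B_t$. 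Finally $\mathcal{L}(B_t)$ may be slightly larger than $C_t$, but every edge of $G$ incident to $B_t$ is within $L$-distance $1$ of some edge in $C_t$; a short argument bounds $\mu_G(B_t)$ by $\alpha_L(C_t)$, possibly after passing to the analogous construction with $L = L^2(G)$ so that the neighbourhood blow-up is absorbed.

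The main obstacle I expect is the second direction: producing a decomposition of $G$ from one of $L$ while keeping the width under control, since $\mathcal{L}$ is not surjective and the preimage bags can carry more incident edges than were present in the $L$-bag. The clean way around this is precisely the choice $L = L^2(G)$ (rather than the line graph $L^1(G)$): adjacency in $L^2$ already accounts for "edges one step apart," so an independent set in $L[C_t]$ still induces an induced matching in $G$ even after we take all endpoints, which is what forces $\mu_G(B_t) \le \alpha_L(C_t)$. I would therefore spend most of the proof making that inequality precise, and treat the tree-decomposition axioms and the isolated-vertex bookkeeping as routine.
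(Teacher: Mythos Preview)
Your direction (i) is correct and matches the paper's argument essentially verbatim.

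Direction (ii), however, has a genuine gap. Your bags $B_t = \bigcup_{e \in C_t} e$ do yield a valid tree decomposition of $G$, but the claimed bound $\mu_G(B_t) \le \alpha_L(C_t)$ is false in general, and no ``short argument'' can rescue it. Take $G = P_8$, the path with edges $e_i = v_iv_{i+1}$ for $1 \le i \le 7$; in $L = L^2(G)$ we have $e_i \sim e_j$ iff $1 \le |i-j| \le 2$. The natural width-$1$ path decomposition of $L$ contains the bag $C_t = \{e_3, e_4, e_5\}$, a clique in $L$, so $\alpha_L(C_t) = 1$. Your construction gives $B_t = \{v_3, v_4, v_5, v_6\}$, but $\{e_2, e_5\}$ is an induced matching of $G$ with both edges meeting $B_t$, so $\mu_G(B_t) \ge 2$. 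The map you implicitly rely on, sending each matching edge $e$ to some $f_e \in C_t$ sharing a vertex with it, is indeed injective (two matching edges touching the same $f$ would be at $L(G)$-distance $\le 2$), but its image need not be independent in $L$: here $f_{e_2} = e_3$ and $f_{e_5} = e_5$ are adjacent in $L$. Passing to $L^2$ does not absorb this blow-up --- you are already working in $L^2$.

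The paper's construction goes in the opposite direction: it sets $S_t$ to be the set of vertices of $G$ \emph{all} of whose incident edges lie in $C_t$, i.e.\ the largest set with $\mathcal{L}(S_t) \subseteq C_t$. Then $\mu_G(S_t) = \alpha_L(\mathcal{L}(S_t)) \le \alpha_L(C_t)$ is immediate from monotonicity. The real work shifts to showing that $(T, \{S_t\})$ is a tree decomposition of $G$, and this is exactly where $L = L^2$ (rather than $L^1$) is essential: for an edge $uv \in E(G)$, the set $\mathcal{L}(\{u,v\})$ of all edges incident with $u$ or $v$ is a clique in $L^2$ (any two such edges are at $L(G)$-distance at most $2$ via $uv$), so by the Helly property for subtrees some bag $C_{t_0}$ contains all of $\mathcal{L}(\{u,v\})$, whence $u, v \in S_{t_0}$. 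Connectivity of $\{t : v \in S_t\}$ follows since if $v \in S_r \cap S_s$ then every edge at $v$ lies in both $C_r$ and $C_s$, hence in every $C_t$ along the $r$--$s$ path in $T$.
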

\begin{proof}
Suppose $\mathcal{T} = (T, B_t)$ is a tree decomposition of $G$.
Consider $\mathcal{L}(\mathcal{T}) = (T, \{\mathcal{L}(B_t)\}_{t \in V(T)})$.
If $ef$ is an edge of $L$,
then either $e$ and $f$ share a vertex,
or there is an edge $g$ intersecting both.
Let $h$ be $e$ in the prior case and $g$ in the latter.
The edge $h$ must be contained in some bag $B_{t_0}$,
and note that $\mathcal{L}(B_{t_0})$ contains $ef$.
Denote by $T_e$ the nodes of $\mathcal{L}(\mathcal{T})$
with bags containing $e = uv \in V(L)$.
Observe that $T_e = T_v \cup T_u$;
$T_v$ and $T_u$ are trees that touch, hence $T_e$ is connected.
We conclude that $\mathcal{L}(\mathcal{T})$ is a tree decomposition of $L$.
For each bag $B_t$ we have $\mu_G(B_t) = \alpha_L(\mathcal{L}(B_t))$,
so $\mu(\mathcal{T}) = \alpha(\mathcal{L}(\mathcal{T}))$.

Now suppose $(T, B_t)$ is a tree decomposition of $L$.
Let $S_t$ be the vertices of $G$ adjacent only to edges of $B_t$.
Equivalently, $\mathcal{L}(S_t) \subseteq B_t$
and $S_t$ is maximal with the property.
Let $e=uv$ be an edge of $G$.
The edges of $G$ adjacent to $u$ or $v$ form a clique in $L$,
hence there must be a bag $B_{t_0}$ containing $\mathcal{L}(\{u, v\})$
(by Helly property for trees),
and we conclude that $uv$ is contained in $S_{t_0}$.
Suppose a vertex $u$ of $G$ is contained in two bags $S_r$ and $S_s$.
The bags $B_r$ and $B_s$ contain all edges of $G$ incident with $u$, $E_u$,
hence $E_u$ is contained in all bags $B_t$ on the unique $r\text{-}s$ path in $T$,
and therefore $u$ is contained in the bags $S_t$ along the same path.
We conclude that $(T, S_t)$ is a tree decomposition of $G$
with $\mu$-width at most $\alpha(T, B_t)$.
\end{proof}

Similarly as before, we have that $\mu\text{-}tw(G) \le \alpha\text{-}tw(G)$,
and combining this with the theorem above we get
\[
\alpha \text{-} tw(G) \ge \mu \text{-} tw(G) =
\alpha \text{-} tw(L^2(G)) \ge \mu \text{-} tw(L^2(G)).
\]
For completeness, we note
\[
\alpha(G) \ge \mu(G) = \alpha(L^2(G)) \ge \mu(L^2(G)).
\]

\subsection{Approximating well-behaved width measures}
\label{sec:approx_well_behaved_msr}
The overall strategy is similar to other treewidth approximation algorithms,
in particular to the approximation of fractional hypertree
width~\cite{ftw-approx}.
Suppose $\lambda_H$ is a well-behaved measure throughout.

\begin{thm}
  \label{thm:alpha_separator}
  There exists an $||H||^{O(k)}$-time algorithm that
  on input hypergraph $H$;
  sets of vertices $A, B$;
  and an integer $k$,
  either
  \begin{itemize}
  \item
    finds an $(A, B)$-separator $S$ with $\lambda_H(S) \le {k + 1 \choose 2}k$, or
  \item
    correctly concludes that $A$ and $B$ cannot be separated by a set
    $S$ with $\lambda_H(S) \le k$ or that $\lambda\text{-}tw(H) > k$.
  \end{itemize}
\end{thm}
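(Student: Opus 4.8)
The plan is to adapt the classical balanced-separator / well-linked-set machinery. First I would try the direct approach: look for an $(A,B)$-separator of small $\lambda$-value by a greedy "ball-growing" argument. Fix a source vertex $a\in A$ (handle the trivial cases $A\cap B\ne\emptyset$ first). Grow a sequence of vertex sets $a\in L_0\subseteq L_1\subseteq\cdots$, where $L_{i+1}$ is $L_i$ together with all vertices reachable from $L_i$ through hyperedges not fully covered by the current boundary; the "boundary" $\partial L_i$ (the vertices shared with the rest of $\underline H$) is an $(A,B)$-separator unless $B$ has already been swallowed. By well-behavedness, properties (2) and (4), we have $\lambda_H(\partial L_i)\le\lambda_H(L_i\cap\overline{L_i})$, and I want some index $i$ at which $\lambda_H(\partial L_i)$ is small. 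The guarantee we are allowed to output is $\lambda_H(S)\le\binom{k+1}{2}k$, which is of order $k^3$; this suggests the argument is: either one of $O(k^2)$ "layers" has $\lambda$-boundary at most $k$ — glue at most $\binom{k+1}{2}$ such layers' boundaries together to separate $A$ from $B$ with total $\lambda$-value at most $\binom{k+1}{2}\cdot k$ by subadditivity (property (2)) — or no such sequence of thin layers exists, which should certify that $A$ and $B$ cannot be separated by any $S$ with $\lambda_H(S)\le k$, or else force $\lambda\text{-}tw(H)>k$ because a low-width decomposition would yield a thin separator between any two sets.

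Concretely, I would proceed as follows. Step 1: reduce to the case where we want a separator and argue that if $\lambda\text{-}tw(H)\le k$ then for any $A,B$ there is an $(A,B)$-separator $S$ with $\lambda_H(S)\le k$ — this is the standard fact that a bag on the $A$–$B$ path in an optimal decomposition is such a separator, combined with property (4) restricting to $S\subseteq B_t$. So the contrapositive of "no separator of value $\le k$" either is exactly what we must report, or we may assume a value-$\le k$ separator exists and must find one of value $\le\binom{k+1}{2}k$. Step 2: enumerate. The key point is that, by well-behavedness property (5), "$\lambda_H(S)\le k$" is testable in $\|H\|^{O(k)}$ time, but we cannot enumerate all candidate $S$; instead I would enumerate over small "cores" — sets $C$ of at most $k+1$ vertices — and for each core grow the $\lambda$-closure as above, checking at each of the at most $O(k^2)$ layers whether the boundary is a separator. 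There are $\|H\|^{O(k)}$ cores, each growth step costs $\|H\|^{O(k)}$ by property (5), so the total is $\|H\|^{O(k)}$ as required. Step 3: correctness — if a value-$\le k$ separator $S^\*$ exists, take its trace on the layers; a counting/pigeonhole argument over the $\le\binom{k+1}{2}$ "thin" layers recovers a separator $S$ built from at most $\binom{k+1}{2}$ pieces each of $\lambda$-value $\le k$, giving $\lambda_H(S)\le\binom{k+1}{2}k$ via property (2); if the search fails, output that $A,B$ have no separator of value $\le k$ or $\lambda\text{-}tw(H)>k$.

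The main obstacle I anticipate is Step 3: making the "thin-layer recovery" rigorous. The clean statement would be that if $S^\*$ separates $A$ from $B$ with $\lambda_H(S^\*)\le k$, then among the layers $L_i$ grown from an appropriate core, at least one has $\partial L_i\subseteq S^\*$-neighbourhood small enough, and finitely many such boundaries union to a genuine separator. Getting the constant $\binom{k+1}{2}$ exactly right — rather than some other polynomial in $k$ — will require care about how the layers interleave with $S^\*$ and about the additive (not fractional) nature of $\lambda$. A secondary subtlety is handling hyperedges of unbounded rank: because the theorem explicitly places no restriction on $rk(H)$, the "boundary" of a layer is a set of vertices, not edges, and one must be careful that growing through a large hyperedge does not blow up the boundary; well-behavedness property (3) (additivity across a non-edge cut) is exactly what keeps the bookkeeping linear, so the proof must be phrased in terms of vertex-cuts of $\underline H$ throughout, and I would lean on that property to localise the estimate layer by layer.
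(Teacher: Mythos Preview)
Your proposal has a genuine gap, and the overall strategy does not match what actually works here. The ``ball-growing with thin layers'' heuristic is attractive but, as you yourself flag, Step~3 is not substantiated: there is no pigeonhole that forces some BFS layer boundary $\partial L_i$ to have $\lambda_H(\partial L_i)\le k$ just because a separator $S^*$ with $\lambda_H(S^*)\le k$ exists. BFS layers are governed by distance in $\underline H$, while $\lambda_H$ is an arbitrary subadditive, monotone measure; $S^*$ can sit inside a single layer, or straddle all of them, with no bound on the $\lambda$-value of any $\partial L_i$. Your ``enumerate small cores of size $\le k+1$'' idea also lacks a hook: nothing ties a small core to the unknown $S^*$, so there is no reason the growth from any core should ever expose a thin boundary. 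Finally, your Step~1 reduction is false as stated: low $\lambda$-treewidth does \emph{not} imply that every pair $A,B$ admits a separator of $\lambda$-value $\le k$ (take $A=B=V(H)$), which is why the theorem's second bullet is a disjunction.

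The paper's argument is structurally different and uses four ingredients you are missing. First, it \emph{saturates} $\underline H$ to a graph $\underline{\underline H}$ by repeatedly adding edges $uv$ whenever $\lambda_H(N(u)\cap N(v))>k$; the point is that any width-$\le k$ decomposition of $H$ is still a decomposition of $\underline{\underline H}$, so every clique of $\underline{\underline H}$ has $\lambda$-value $\le k$, and any $\lambda$-small separator of $H$ is still a separator in $\underline{\underline H}$. Second, it \emph{guesses} a maximum independent set $I$ of the hypothetical separator $S$ in $\underline{\underline H}$; since $|I|\le\alpha_{\underline H}(S)\le\lambda_H(S)\le k$, this costs only $n^{O(k)}$. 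The pairwise common neighbourhoods of $I$ form a set $X$ with $\lambda_H(X)\le\binom{k}{2}k$, and $S\setminus X$ decomposes into at most $k$ pieces, each a clique of $\underline{\underline H}$ (hence $\lambda$-value $\le k$). Third, to locate those clique pieces it invokes the Tarjan/Whitesides clique-cutset decomposition (the lemma producing at most $n$ ``atoms'' $\mathcal K_v$), guessing one atom per vertex of $I$. Fourth, the residual consistency constraints (``bad vertices'' and ``bad pairs'') are encoded as a \emph{2-SAT} instance, whose satisfying assignment yields $S'$ with $\lambda_H(S'\cup X)\le k^2+\binom{k}{2}k=\binom{k+1}{2}k$. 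None of these steps is visible in your outline; the exact constant arises from $|I|\le k$ cliques plus $\binom{|I|}{2}$ co-neighbourhoods, not from counting layers.
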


We need few preliminary lemmas first.
Given a hypergraph $H$ and a decomposition $(T, B_t)$ of $H$,
denote the subtree of $T$ induced by the nodes containing
$v \in V(H)$ by $T_v$.
A set $S$ is an $(A, B)$-separator in a hypergraph $H$
if it is an $(A, B)$-separator in $\underline{H}$,
that is there are no edges between $A \setminus S$ and
$B \setminus S$ and $A \cap B \subseteq S$.

\begin{lemma}
  For every hypergraph $H$ and set $S$ we have
  \[
  \alpha_{\underline{H}}(S) \le \lambda_H(S).
  \]
\end{lemma}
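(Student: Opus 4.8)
The plan is to exhibit a maximum independent set of $\underline{H}[S]$ explicitly and evaluate $\lambda_H$ on it using the additivity axiom, then compare with $\lambda_H(S)$ via monotonicity. First I would let $I \subseteq S$ be an independent set of the Gaifman graph $\underline{H}$ with $|I| = \alpha_{\underline{H}}(S)$. If $S = \emptyset$ there is nothing to prove, since property (3) applied with $S = T = \emptyset$ forces $\lambda_H(\emptyset) = 2\lambda_H(\emptyset)$, hence $\lambda_H(\emptyset) = 0 = \alpha_{\underline{H}}(\emptyset)$; so I may assume $I \neq \emptyset$.

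The key observation is that, because $I$ induces no edge of $\underline{H}$, I can peel off its vertices one at a time. Writing $I = \{x_1, \dots, x_m\}$, for each $j$ there is no edge of $\underline{H}$ between $\{x_j\}$ and $\{x_{j+1}, \dots, x_m\}$ (both are subsets of the independent set $I$), so property (3) yields
\[
\lambda_H(\{x_j, \dots, x_m\}) = \lambda_H(\{x_j\}) + \lambda_H(\{x_{j+1}, \dots, x_m\}).
\]
Iterating this identity gives $\lambda_H(I) = \sum_{j=1}^{m} \lambda_H(\{x_j\})$, and property (1) bounds each summand below by $1$, so $\lambda_H(I) \ge m = \alpha_{\underline{H}}(S)$.

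Finally I would invoke monotonicity (property (4) with $F = H$ and $I \subseteq S$) to obtain $\lambda_H(I) \le \lambda_H(S)$, and chain the two inequalities to conclude $\alpha_{\underline{H}}(S) \le \lambda_H(I) \le \lambda_H(S)$.

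There is essentially no hard step here; the only point requiring care is the legitimacy of the repeated application of property (3), i.e. that each tail $\{x_{j+1}, \dots, x_m\}$ is genuinely edge-disjoint in $\underline{H}$ from the peeled-off vertex $x_j$ — but this is immediate, since every pair of vertices inside $I$ is non-adjacent in $\underline{H}$ by the choice of $I$.
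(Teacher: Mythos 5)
Your proof is correct and follows essentially the same route as the paper: take a maximum independent subset $I$ of $S$ in $\underline{H}$, use the additivity axiom on pairwise non-adjacent vertices to get $\lambda_H(I)=\sum_{i\in I}\lambda_H(\{i\})\ge |I|$, then apply monotonicity to bound $\lambda_H(I)\le\lambda_H(S)$. The only (immaterial) difference is that you spell out the peeling induction and the empty-set base case explicitly; in fact your property numbering (additivity is Property 3, monotonicity is Property 4) is the correct one, whereas the paper's proof cites Properties 2 and 3 in a way that looks like an off-by-one slip.
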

\begin{proof}
  Let $I \subseteq S$ be an independent set of $\underline H$.
  We have $\lambda_H(\{i\}) \ge 1$ for each $i \in I$ by Property 1. of
  $\lambda_H$.
  There are no edges in $\underline{H}$
  between any pair of vertices $i, j \in I$,
  hence by Property 2. of $\lambda_H$ we get
  $\lambda_H(I) = \sum_{i \in I} \lambda_H(\{i\}) \ge |I|$.
  By Property 3. of $\lambda_H$ we see that
  $\lambda_H(I) \le \lambda_H(S)$.
  Now suppose $I \subseteq S$ is maximum-sized.
  We get
  \[
  \alpha_{\underline{H}}(S) = |I| \le \lambda_H(I) \le \lambda_H(S). \qedhere
  \]
\end{proof}

\begin{lemma}
  \label{thm:extending_H}
  Suppose $H$ is a hypergraph with $\lambda\text{-}tw(H) \le k$.
  Let $H_0, H_1, \ldots, H_\ell$ be a sequence
  with $H_0 := H$ and
  $H_{i+1}$ be obtained from $H_i$ by adding an arbitrary edge
  $uv$ such that
  \[
  \lambda_{H_0}(N_{\underline{H_i}}(u) \cap N_{\underline{H_i}}(v)) > k.
  \]
  Then every tree-decomposition of $H$ of $\lambda$-width at most $k$
  is a tree-decomposition of $H_\ell$.
  Furthermore, if $S$ is an $(A, B)$-separator of $H$
  with $\lambda_H(S) \le k$,
  then $S$ is an $(A, B)$-separator of each $H_i$.
\end{lemma}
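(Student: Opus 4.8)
I would prove both assertions simultaneously by induction on $i$, with the only content lying in the inductive step from $H_i$ to $H_{i+1} = H_i + uv$, where $\lambda_{H_0}(N_{\underline{H_i}}(u) \cap N_{\underline{H_i}}(v)) > k$. The organising idea is that this hypothesis says $u$ and $v$ are ``barely separated'' in $\underline{H_i}$: any vertex set of $\lambda_{H_0}$-value at most $k$ that genuinely sits between $u$ and $v$ — a bag of a decomposition of $\lambda$-width at most $k$, or the separator $S$ — must contain $N_{\underline{H_i}}(u) \cap N_{\underline{H_i}}(v)$, which Property~4 of $\lambda_H$ forbids. I would also note once and for all that adding the $2$-element edge $uv$ changes the Gaifman graph only by the single edge $uv$, so all Gaifman-level conditions need be rechecked only there.

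For the decomposition claim, I would fix a tree decomposition $\mathcal{T} = (T, \{B_t\})$ of $H$ with $\lambda(\mathcal{T}) \le k$, which by the inductive hypothesis is a decomposition of $H_i$, and show some bag contains $\{u, v\}$, i.e. $T_u \cap T_v \neq \emptyset$. If not, pick a tree edge $t_1 t_2$ separating $T_u$ from $T_v$ in $T$. For each $w \in N_{\underline{H_i}}(u) \cap N_{\underline{H_i}}(v)$, the edges $\{u, w\}$ and $\{v, w\}$ of $\underline{H_i}$ each lie in a bag of $\mathcal{T}$, so $T_w$ meets both $T_u$ and $T_v$; being a connected subtree it uses $t_1 t_2$, whence $w \in B_{t_1} \cap B_{t_2} \subseteq B_{t_1}$. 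Hence $N_{\underline{H_i}}(u) \cap N_{\underline{H_i}}(v) \subseteq B_{t_1}$, and Property~4 (applied with $F = H_0$) gives $\lambda_{H_0}(N_{\underline{H_i}}(u) \cap N_{\underline{H_i}}(v)) \le \lambda_{H_0}(B_{t_1}) \le k$, contradicting the choice of $uv$. So $\mathcal{T}$ decomposes $H_{i+1}$, and iterating proves the first statement.

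For the separator claim, I would take an $(A, B)$-separator $S$ of $H$ with $\lambda_H(S) \le k$, inductively an $(A, B)$-separator of $H_i$. Since $A \cap B \subseteq S$ is unchanged and $\underline{H_{i+1}}$ differs from $\underline{H_i}$ only in the edge $uv$, a failure of $S$ to separate in $H_{i+1}$ would mean (after possibly swapping $A$ and $B$) that $u \in A \setminus S$, $v \in B \setminus S$, with $u$ and $v$ in distinct components of $\underline{H_i} - S$. But any $w \in N_{\underline{H_i}}(u) \cap N_{\underline{H_i}}(v)$ outside $S$ would be adjacent in $\underline{H_i} - S$ to both $u$ and $v$, merging their components — impossible. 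So $N_{\underline{H_i}}(u) \cap N_{\underline{H_i}}(v) \subseteq S$, and Property~4 yields $\lambda_{H_0}(N_{\underline{H_i}}(u) \cap N_{\underline{H_i}}(v)) \le \lambda_{H_0}(S) = \lambda_H(S) \le k$, again a contradiction; hence $S$ separates $A$ from $B$ in $H_{i+1}$, and we iterate.

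The argument is largely bookkeeping. The step I expect to need the most care is the claim that $T_w$ uses the tree edge $t_1 t_2$: this rests on the decomposition axiom that each $T_x$ is a connected subtree, together with the cheap-but-essential point that $\mathcal{T}$, as a decomposition of $H_i$, already places every Gaifman edge of $H_i$ incident to $u$ or $v$ into some bag. I would also be careful to keep the direction of the Property~4 monotonicity inequality straight throughout, and to record at the outset that the same $\mathcal{T}$ and the same $S$ are carried unchanged along the whole chain $H_0, \dots, H_\ell$.
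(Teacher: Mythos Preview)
Your argument is correct and mirrors the paper's proof almost exactly: both induct on $i$, show that if $T_u$ and $T_v$ are disjoint then every common neighbour $w$ of $u$ and $v$ in $\underline{H_i}$ lands in a bag on the $T_u$--$T_v$ path (you pick a single separating tree edge, the paper takes the whole path), and both derive $N_{\underline{H_i}}(u)\cap N_{\underline{H_i}}(v)\subseteq S$ for the separator part by the same length-$2$-path reasoning. The only cosmetic difference is that you name Property~4 explicitly where the paper leaves the monotonicity implicit.
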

\begin{proof}
  The proof is by induction on $\ell$.
  The claim is trivial for $\ell=0$.
  Suppose $\ell > 0$ and that the lemma holds for smaller values.
  Let $\mathcal{T} = (T, B_t)$ be an arbitrary tree decomposition of $H$
  such that $\lambda_{H_0}(\mathcal{T}) \le k$.
  By the induction hypothesis, $\mathcal{T}$ is a decomposition of $H_{\ell-1}$
  as well.
  If $\mathcal{T}$ is not a decomposition of $H_\ell$,
  then the edge $uv$ added on the $\ell$-th step is not covered in $\mathcal{T}$,
  hence $T_u$ and $T_v$ are disjoint.
  Let $Z = N_{\underline{H_{\ell-1}}}(u) \cap N_{\underline{H_{\ell-1}}}(v)$ and
  let $x$ be an arbitrary vertex of $Z$.
  We see that $T_x$ intersects both $T_u$ and $T_v$.
  It follows that $x$ is contained in every bag along the
  unique shortest $T_u$--$T_v$ path in $T$, $P_{uv}$.
  Note that $P_{uv}$ is non-empty; it contains at least two nodes of $T$.
  But $x$ was arbitrary,
  so the entire $Z$ is contained in every bag $B_t$,
  where $t \in P_{uv}$.
  This is a contradiction, since $\lambda_{H_0}(Z) > k$
  and $\lambda_{H_0}(\mathcal{T}) \le k$.

  For the second part,
  suppose $S$ is not an $(A, B)$-separator of $H_i$,
  and let $i$ be minimal with this property.
  Clearly $i > 0$.
  Suppose the edge $uv$ is added on the $i$-th step.
  Since $S$ is not a separator of $H_i$, w.l.o.g. we have
  $u \in A \setminus S$ and $v \in B \setminus S$.
  But then every vertex of
  $Z = N_{\underline{H_{i-1}}}(u) \cap N_{\underline{H_{i-1}}}(v)$
  is contained in $S$ in order to prevent an $u$--$v$ path of length $2$ in
  $\underline{H_{i-1}}$,
  hence $Z \subseteq S$ and $k < \lambda_H(Z) \le \lambda_H(S)$.
  We arrive at a contradiction, since
  \[
  \lambda_H(S) \le
  k <
  \lambda_H(Z) \le
  \lambda_H(S). \qedhere
  \]
\end{proof}

\begin{define}
  Let $H$ be a hypergraph and let
  $\underline{\underline{H}}$ be obtained from
  $\underline{H}$ by
  adding edges $uv$ such that
  $\lambda_{H}(N_{\underline{H_i}}(u) \cap N_{\underline{H_i}}(v)) > k$
  as in the previous lemma
  until no longer possible
  and in any order, say Lex-first.
\end{define}

Note that $\underline{\underline{H}}$ can be built from $H$
in $||H||^{O(k)}$ time.
Indeed,
at most $O(n^2)$ new edges are added,
each step inspects at most $O(n^2)$ non-adjacent pairs, and
for each pair $(u, v)$ it can be checked if
$\lambda_{H}(N_{\underline{H_i}}(u) \cap N_{\underline{H_i}}(v)) > k$
in $||H||^{O(k)}$ time,
hence the total running time is $||H||^{O(k)}$.

\begin{lemma}
  \label{thm:extended_H}
  If $H$ is a hypergraph with $\lambda\text{-}tw(H) \le k$,
  then for every clique $C$ of $\underline{\underline{H}}$
  we have $\lambda_H(C) \le k$.
  Furthermore, every $(A, B)$-separator $S$ of $H$ with $\lambda_H(S) \le k$
  is an $(A, B)$-separator in $\underline{\underline{H}}$.
\end{lemma}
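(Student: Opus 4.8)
The plan is to prove both statements simultaneously by induction along the sequence $H = H_0, H_1, \ldots, H_m = \underline{\underline{H}}$ used to construct $\underline{\underline{H}}$, tracking the clique/separator information as edges are added. For the first statement, I would actually prove the stronger inductive claim: if $\lambda\text{-}tw(H)\le k$, then for every $i$ and every clique $C$ of $\underline{H_i}$ we have $\lambda_H(C)\le k$. The base case $i=0$ is immediate from Lemma~\ref{thm:extending_H}: since $\lambda\text{-}tw(H)\le k$, fix a tree decomposition $\mathcal{T}$ of $H$ with $\lambda_H(\mathcal{T})\le k$; a clique $C$ of $\underline{H}$ must, by the Helly property for subtrees of a tree, be contained in a common bag $B_t$, so $\lambda_H(C)\le\lambda_H(B_t)\le k$ by Property~4 of $\lambda_H$ (monotonicity under $S\subseteq T$, here with $F=H$).

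For the inductive step, suppose the claim holds for $\underline{H_i}$ and let $\underline{H_{i+1}}$ be obtained by adding one edge $uv$ with $\lambda_H\bigl(N_{\underline{H_i}}(u)\cap N_{\underline{H_i}}(v)\bigr)>k$. Let $C$ be a clique of $\underline{H_{i+1}}$. If $uv\notin C$ (i.e.\ $C$ omits $u$ or $v$), then $C$ is already a clique of $\underml{H_i}$ — correction: already a clique of $\underline{H_i}$ — and we are done by induction. If $u,v\in C$, then $C\setminus\{u,v\}$ consists of common neighbours of $u$ and $v$ in $\underline{H_{i+1}}$; each such vertex $w$ was adjacent to both $u$ and $v$ in $\underline{H_i}$ unless $w$ was joined to one of them by an even-later edge addition — but since we add edges in a fixed (Lex-first) order and $C\subseteq\underline{H_{i+1}}$, every edge of $C$ is present in $H_{i+1}$, so in particular $w\in N_{\underline{H_{i+1}}}(u)\cap N_{\underline{H_{i+1}}}(v)$; I need the slightly more careful statement that $C\subseteq\{u,v\}\cup\bigl(N_{\underline{H_i}}(u)\cap N_{\underline{H_i}}(v)\bigr)$, which requires that all of $C$'s internal edges except possibly $uv$ itself already lie in $\underline{H_i}$ — true because $uv$ is the last edge added in passing from $H_i$ to $H_{i+1}$. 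Then $\mathcal{T}$ is a decomposition of $H_i$ by Lemma~\ref{thm:extending_H}, $C\setminus\{u\}$ and $C\setminus\{v\}$ are cliques of $\underline{H_i}$, so each lies in a common bag; moreover $T_u$ and $T_v$ both meet the common bag of $C\setminus\{u,v\}$ (a nonempty clique if $|C|\ge 3$; if $|C|=2$ then $C=\{u,v\}$ and we argue directly), so $T_u\cap T_v\ne\emptyset$, meaning $uv$ is covered by some bag $B_t$ of $\mathcal{T}$. But then $C\subseteq B_t$ by Helly, giving $\lambda_H(C)\le\lambda_H(B_t)\le k$. This also shows $uv$ would never have been added in the first place only if the neighbourhood condition failed — the point is that the hypothesis $\lambda\text{-}tw(H)\le k$ forces every added edge to in fact be coverable, so the argument is consistent.

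The second statement is a direct consequence of the second part of Lemma~\ref{thm:extending_H}: that lemma already says any $(A,B)$-separator $S$ of $H$ with $\lambda_H(S)\le k$ remains an $(A,B)$-separator of each $H_i$ in \emph{any} such edge-addition sequence, and $\underline{\underline{H}}$ is exactly $\underline{H_m}$ for the particular Lex-first sequence, so $S$ separates $A$ from $B$ in $\underline{\underline{H}}$. I expect the main obstacle to be the clique bookkeeping in the inductive step: one must be careful that a clique of $\underline{H_{i+1}}$ containing the newly added edge $uv$ has all its \emph{other} edges already present in $\underline{H_i}$, so that its non-$uv$ vertices genuinely lie in the common neighbourhood $N_{\underline{H_i}}(u)\cap N_{\underline{H_i}}(v)$; this is where the fixed ordering of edge additions is used, and it is the only place the argument is not completely routine. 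Everything else reduces to repeated application of the Helly property for subtrees of a tree together with the monotonicity axiom (Property~4) of a well-behaved measure.
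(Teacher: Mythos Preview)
Your approach is unnecessarily indirect and contains a genuine gap in the inductive step. The paper's proof is a direct one-line application of Lemma~\ref{thm:extending_H}: since $\lambda\text{-}tw(H)\le k$, take a minimum-$\lambda$-width tree decomposition $\mathcal{T}$ of $H$; by the \emph{first} part of Lemma~\ref{thm:extending_H} this same $\mathcal{T}$ is already a tree decomposition of $\underline{\underline{H}}$ (not just of each intermediate $H_i$), so any clique $C$ of $\underline{\underline{H}}$ lies in a single bag $B$ by the Helly property, and monotonicity gives $\lambda_H(C)\le\lambda_H(B)\le k$. No induction on $i$ is needed, because Lemma~\ref{thm:extending_H} already delivers the conclusion at the terminal stage.

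The gap in your inductive step is the claim that ``$T_u$ and $T_v$ both meet the common bag of $C\setminus\{u,v\}$''. What you actually obtain is that $C\setminus\{v\}$ lies in some bag $B_s$ (so $s\in T_u$) and $C\setminus\{u\}$ lies in some bag $B_{s'}$ (so $s'\in T_v$); both bags contain $C\setminus\{u,v\}$, but $s$ and $s'$ may well be distinct, and knowing that each $T_w$ with $w\in C\setminus\{u,v\}$ contains both $s$ and $s'$ tells you nothing about $T_u\cap T_v$. The case $|C|=2$ is likewise left as ``argue directly'', but the only direct argument available is precisely the one inside the proof of Lemma~\ref{thm:extending_H}: if $T_u\cap T_v=\emptyset$ then the entire common neighbourhood $Z=N_{\underline{H_i}}(u)\cap N_{\underline{H_i}}(v)$, with $\lambda_H(Z)>k$, is forced into every bag on the $T_u$--$T_v$ path. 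Once you invoke that, you are using the full strength of Lemma~\ref{thm:extending_H} at stage $i+1$ anyway, and the whole induction collapses to the paper's one-step proof.

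Your treatment of the second statement (direct from the separator clause of Lemma~\ref{thm:extending_H}) is correct and matches the paper.
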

\begin{proof}
  Suppose $C$ is a clique in $\underline{\underline{H}}$
  and let $\mathcal{T}$ be a
  minimum $\lambda$-width tree decomposition of $H$.
  Then $\mathcal{T}$ is a decomposition of $\underline{\underline{H}}$
  by Lemma~\ref{thm:extending_H},
  and since $C$ is a clique,
  there must be a bag $B$ of $\mathcal{T}$ containing $C$
  by the Helly property for trees.
  We conclude that
  \[
  \lambda_H(C) \le \lambda_H(B) \le \lambda_H(\mathcal{T})
  = \lambda\text{-}tw(H) \le k.
  \]
  The second part of the statement follows directly from
  Lemma~\ref{thm:extending_H}.
\end{proof}

\begin{lemma}
  \label{thm:connected_subsets}
  \cite{clique_cutsets, tarjan-clique-decomp, ftw-approx}
  \label{clique_cutset}
  Given a graph $G$ it is possible to construct
  in polynomial time a set $\mathcal{K}$ of at
  most $|V(G)|$ subsets of $V(G)$
  such that
  \begin{itemize}
  \item
    each clique $C$ of $G$ is contained in a set $K \in \mathcal{K}$, and
  \item
    $K \setminus C$ is contained in a connected component
    of $G \setminus C$ for each set $K \in \mathcal{K}$
    and every clique $C$ of $G$.
  \end{itemize}
\end{lemma}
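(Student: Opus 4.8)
The plan is to build $\mathcal{K}$ by a recursive decomposition of $G$ along \emph{clique cutsets}, in the style of Tarjan~\cite{tarjan-clique-decomp}. First I would recall the notion: a clique $D$ of $G$ is a clique cutset if $G \setminus D$ is disconnected. The classical result of~\cite{clique_cutsets, tarjan-clique-decomp} says that one can, in polynomial time, produce a binary ``decomposition tree'' whose leaves are the \emph{atoms} of $G$ — the maximal induced subgraphs with no clique cutset — and whose internal nodes correspond to clique cutsets used to split the graph; moreover the total size of all the atoms is $O(|V(G)|)$ and there are at most $|V(G)|$ of them. I would take $\mathcal{K}$ to be the vertex sets of these atoms. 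Since each atom is an induced subgraph, $|\mathcal{K}| \le |V(G)|$ is immediate from the bound on the number of atoms.

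For the first bullet, I need that every clique $C$ of $G$ lies inside some atom. This is the key structural fact about clique-cutset decompositions: when we split $G$ along a clique cutset $D$ into pieces $G_1, G_2$ (each containing $D$), any clique $C$ of $G$ is connected, hence cannot have vertices in both $G_1 \setminus D$ and $G_2 \setminus D$ (there are no edges across), so $C \subseteq G_1$ or $C \subseteq G_2$; by induction on the decomposition $C$ ends up inside a single atom. For the second bullet, fix $K \in \mathcal{K}$ (an atom, with vertex set $V_K$) and a clique $C$ of $G$; I must show $V_K \setminus C$ lies in one connected component of $G \setminus C$. If $C \cap V_K$ is empty or not a cutset of $G[V_K]$ this is clear since atoms are connected (or can be taken connected) and removing $C$ keeps $V_K \setminus C$ connected. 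The substantive case is when $C \cap V_K$ separates $G[V_K]$; but an atom has \emph{no} clique cutset, and $C \cap V_K$ is a clique, so $C \cap V_K$ cannot disconnect $G[V_K]$ — contradiction. Hence $V_K \setminus C$ is contained in a single connected component of $G[V_K] \setminus C$, and a fortiori of $G \setminus C$.

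I expect the main obstacle to be purely expository: stating precisely the interface of the clique-cutset decomposition algorithm that is cited (the $O(|V(G)|)$ bound on atoms and their construction in polynomial time), since different sources phrase it differently, and making sure the version invoked genuinely gives that the atoms are connected and clique-cutset-free. Once that black box is in hand, both bullets follow from the two elementary observations above — connected cliques do not straddle a clique cutset, and atoms admit no clique cutset — so there is no real combinatorial difficulty, only the need to cite~\cite{clique_cutsets, tarjan-clique-decomp, ftw-approx} carefully for the algorithmic and counting parts.
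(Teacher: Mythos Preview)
The paper does not give its own proof of this lemma; it is stated with the citations~\cite{clique_cutsets, tarjan-clique-decomp, ftw-approx} and used as a black box. Your approach via Tarjan's clique-cutset decomposition, taking $\mathcal{K}$ to be the vertex sets of the atoms, is exactly the standard argument those references contain, and both bullets follow as you describe.

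One small inaccuracy that is not load-bearing: you write that ``the total size of all the atoms is $O(|V(G)|)$''. This is not true in general (atoms overlap on the clique cutsets, and the total size can be superlinear in $|V(G)|$); what the cited sources give is that the \emph{number} of atoms is at most $|V(G)|$, which is all you need and all you actually use. You may also want to be explicit about the degenerate cases in the second bullet (when $V_K \setminus C$ is empty, or a single vertex), but these are trivial.
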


\begin{proof}
  [Proof of Theorem~\ref{thm:alpha_separator}]
  Suppose there is an $(A, B)$-separator $S$ with $\lambda_H(S) \le k$
  and that $\lambda\text{-}tw(H) \le k$.
  Let $I$ be a maximum-sized independent subset of $S$
  with respect to $\underline{\underline{H}}$.
  Note that
  $|I| = \alpha_{\underline{\underline{H}}}(S)
  \le \alpha_{\underline{H}}(S) \le \lambda_H(S) \le k$,
  hence $|I| \le k$.
  Let
  \begin{align*}
    X &= \bigcup_{u, v \in I; u \neq v} \left(N_{\underline{\underline{H}}}(u)
    \cap N_{\underline{\underline{H}}}(v)\right)\\
  N_v &= N_{\underline{\underline{H}}}(v) \setminus X \cup \{v\}
  \text{ }\text{ }\text{ }\text{ }\text{ }\text{ }\text{ for } v \in I,
  \text{ and}\\
  Y &= \bigcup_{v \in I} N_v.
  \end{align*}
  It follows that $S \subseteq X \cup Y$
  since $I$ is maximal,
  and that $N_v \cap S$ is a clique of $\underline{\underline{H}}$
  since $I$ is maximum-sized.
  For $v \in I$, let $\mathcal{K}_v$ be the sets of
  $\underline{\underline{H}}[N_v]$
  guaranteed by Lemma~\ref{clique_cutset},
  and pick a $K_v \in \mathcal{K}_v$ to contain $N_v \cap S$ for each $v \in I$.
  Since $K_v \setminus (N_v \cap S) = K_v \setminus S$
  is contained in a component of $\underline{\underline{H}}[N_v] \setminus S$,
  it follows that $K_v \setminus S$
  is contained in a component of $\underline{\underline{H}} \setminus S$.
  By Lemma~\ref{thm:extended_H} we have that $S$ is an
  $(A, B)$-separator of $\underline{\underline{H}}$,
  so every vertex of $K_v \setminus S$ is either reachable from $A$ in
  $\underline{\underline{H}} \setminus S$, or none are.
  Let $J_1 \subseteq I$ be the vertices corresponding
  to the prior case, and $J_2 = I \setminus J_1$ -- the latter.
  In case $K_v \setminus S$ is empty, put $v$ in $J_1$ or $J_2$ arbitrarily.
  Let $Z = X \bigcup (\cup_{v \in I} K_v)$.
  Note that $S \subseteq Z$.
  For $v \in J_2$,
  we call a vertex $u \in K_v$ \emph{bad} if
  there is a path $P_{Au}$ from $A$ to $u$ in
  $\underline{\underline{H}}$
  such that $P_{Au} \cap Z = \{u\}$,
  and for $u \in J_1$
  we call a vertex $u \in K_v$ \emph{bad} if
  there is a path $P_{Bu}$ from $B$ to $u$ in
  $\underline{\underline{H}}$
  such that $P_{Bu} \cap Z = \{u\}$.
  Observe that by the definition of $J_1$ and $J_2$,
  every bad vertex is contained in $S$.
  If $x \in J_1$, $y \in J_2$,
  $u \in K_x$ and $v \in K_y$,
  we call the pair $\{u, v\}$ \emph{bad}
  if there is a path $P_{uv}$ from $u$ to $v$ in $\underline{\underline{H}}$,
  such that $P_{uv} \cap Z = \{u, v\}$.
  If $\{u, v\}$ is a bad pair,
  then $S$ must contain at least one of the two vertices,
  as otherwise $u$ and $v$ are in the same connected
  component of $\underline{\underline{H}} \setminus S$,
  contradicting the choices for $J_1$ and $J_2$.
  To summarise:
  \begin{itemize}
  \item
    $S \subseteq Z$,
  \item
    $S \cap K_v$ is a clique in $\underline{\underline{H}}$ for each $v \in I$,
  \item
    $S$ contains all bad vertices, and
  \item
    $S$ intersects all bad pairs.
  \end{itemize}
  Let $S'$ be another set with these properties.
  We claim that $S' \cup X$ is an $(A, B)$-separator
  in $\underline{\underline{H}}$
  with $\lambda_H(S' \cup X) \le {k + 1 \choose 2}k$.
  Since $X$ is a union of at most ${k \choose 2}$
  coneighbourhoods of non-adjacent vertices in $\underline{\underline{H}}$,
  we have $\lambda_H(X) \le {k \choose 2}k$ by
  the definition of $\underline{\underline{H}}$ and Property 2. of $\lambda_H$.
  Furthermore,
  $\lambda_H(S'\setminus X) \le \sum_{v \in I} \lambda_H(S' \cap K_v) \le k^2$,
  because $S' \cap K_v$ is a clique in $\underline{\underline{H}}$,
  and hence $\lambda_H(S' \cap K_v) \le k$ by Lemma~\ref{thm:extended_H}.
  By putting everything together we conclude that
  $\lambda_H(S' \cup X) \le k^2 + {k \choose 2}k = {k + 1 \choose 2}k$.

  Now suppose $S' \cup X$ is not an $(A, B)$-separator in $\underline{\underline{H}}$,
  and that $P_{a, b}$ is an $a$--$b$ path in $\underline{\underline{H}}$,
  disjoint from $S' \cup X$, where $a \in A$ and $b \in B$.
  Clearly, $P_{a, b}$ is disjoint from $X$,
  so $P_{a, b} \cap Z = P_{a, b} \cap (Z \setminus X)$.
  Let $p_1, \ldots p_m$ be the vertices of $P_{a, b} \cap Z$,
  ordered as the path is traversed from $a$ to $b$.
  Since $S \subseteq Z$ is an $(A, B)$-separator in $\underline{\underline{H}}$,
  $P_{a, b} \cap Z \supseteq P_{a, b} \cap S$ is non-empty.
  We claim that if $p_1 \in K_v$ and $p_m \in K_u$,
  then $v \in J_1$ and $u \in J_2$.
  Indeed, if $v \in J_2$,
  then stripping $P_{a, b}$ from $a$ to $p_1$ yields
  a path $P$ such that $P \cap Z = \{p_1\}$,
  hence $p_1$ is bad and it must be included in $S'$,
  a contradiction.
  Similarly,
  if $u \in J_1$,
  then by denoting the subpath of $P_{a, b}$ from $p_m$ to $b$ by $P$,
  we see that $P \cap Z = \{p_m\}$, hence $p_m$ is bad,
  a contradiction.
  Therefore, there must be a pair $\{p_j, p_{j+1}\}$
  of two consecutive vertices $p_j, p_{j+1}$ of $P_{a,b} \cap Z$,
  such that $p_j \in K_u$ with $u \in J_1$
  and $p_{j+1} \in K_v$ with $v \in J_2$.
  But then for the $p_j$--$p_{j+1}$ subpath, $P$, of $P_{a, b}$,
  we have $P \cap Z = \{p_j, p_{j+1}\}$,
  and hence the pair $\{p_j, p_{j+1}\}$ is bad,
  and therefore it cannot be disjoint from $S'$,
  a contradiction.

  Consider a 2-CNF formula $\psi$
  \begin{itemize}
  \item
    with variables $x_v$ for each vertex $v \in Z \setminus X$,
  \item
    clauses $\neg x_v \lor \neg x_u$ for each pair
    of non-adjacent in $\underline{\underline{H}}$ vertices $v, u$
    contained in the same set $K_w$,
  \item
    with clauses $x_v \lor x_u$ for each bad pair $\{v, u\}$, and
  \item
    set all variables $x_v$ corresponding to bad vertices $v$
    to \emph{true}.
  \end{itemize}
  Observe that $\psi$ has a satisfying assignment -- setting the variables
  corresponding to vertices of
  $S \setminus X$ to \emph{true} and the rest to \emph{false} yields one.
  Conversely, if $S'$ is the set of variables set to \emph{true}
  in a satisfying assignment,
  then by the discussion above we have
  that $S' \cup X$ is an $(A, B)$-separator in $\underline{\underline{H}}$,
  and hence also of $H$,
  and $\lambda_H(S' \cup X) \le {k + 1 \choose 2}k$.

  The algorithm to find an $(A, B)$-separator of bounded $\lambda$-width
  is as follows:
  \begin{enumerate}
  \item
    Construct $\underline{\underline{H}}$,
  \item
    guess $I$,
  \item
    build $X$ and $N_v$ for each $v \in I$,
  \item
    build $\mathcal{K}_v$, guess $K_v \in \mathcal{K}_v$ for each $v \in I$,
    and build $Z$,
  \item
    guess a partition $(J_1, J_2)$ of $I$ ($J_1$ or $J_2$ may be empty),
  \item
    find which vertices are bad, which pairs are bad, and build $\psi$,
  \item
    if $\psi$ does not have a satisfying assignment, reject the branch;
    otherwise find a set $S'$ corresponding to a satisfying assignment,
  \item
    if for any $v \in I$ we have $\lambda_H(S' \cap K_v) > k$,
    declare $\lambda\text{-}tw(H) > k$ and terminate;
    if $S' \cup X$ is not an $(A, B)$-separator in $H$, reject the branch;
    otherwise return $S' \cup X$.
  \end{enumerate}

  As noted before, building $\underline{\underline{H}}$
  takes $||H||^{O(k)}$ time.
  Next, since $|I| \le k$,
  we may try all $O(n^k)$ independent sets of size at most $k$ one by one,
  and thus add a factor of $O(n^k)$ to the total running time.
  Constructing each $\mathcal{K}_v$ is cheap,
  but guessing each $K_v \in \mathcal{K}_v$ adds one more $O(n^k)$ factor.
  There are at most $2^k$ choices for $(J_1, J_2)$,
  so we get an additional factor of $O(2^k)$.
  Step 6. can be done in $n^{O(1)}$ time.
  Since finding a satisfying assignment or correctly concluding that a 2-CNF
  formula is unsatisfiable can be done in linear time with respect its length
  ~\cite{tarjan},
  Step 7. takes $O(n^2)$ time.
  The last step takes $||H||^{O(k)}$ time.
  The total running time is $||H||^{O(k)}$.
\end{proof}

\begin{lemma}
  \label{thm:W_separator_math}
  Suppose $H$ is a hypergraph with $\lambda\text{-}tw(H) \le k$
  and $W \subseteq V(H)$.
  There exists a set $S \subseteq V(H)$ with $\lambda_H(S) \le k$ such that
  for every connected component $C_i$ of $\underline{H} \setminus S$ we
  have $\lambda_H(W \cap C_i) \le \lambda_H(W) / 2$.
\end{lemma}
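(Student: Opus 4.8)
The plan is to realise the width by a tree decomposition and then extract a single bag that already is the balanced separator, adapting the classical edge-orientation (``centroid'') argument for balanced separators in tree decompositions. The only twist is that the additivity usually invoked in that argument --- ``the two sides together weigh at most the whole'' --- must here be replaced by Property~3 of a well-behaved measure, which only lets us split $\lambda_H$ across a cut that has no Gaifman edges; fortunately, a cut induced by an edge of the decomposition tree is exactly of that kind.

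Concretely, fix a tree decomposition $\mathcal{T}=(T,\{B_t\})$ of $H$ with $\lambda_H(\mathcal{T})\le k$; one exists since $\lambda\text{-}tw(H)\le k$. For a node $t$ of $T$ and a neighbour $t'$, let $V_{t,t'}$ be the set of vertices $v$ of $H$ whose subtree $T_v$ lies entirely in the component of $T-t$ containing $t'$. Because each $T_v$ is connected, $V(H)\setminus B_t=\bigsqcup_{t'\sim t}V_{t,t'}$, and there is no edge of $\underline{H}$ between $V_{t,t_1'}$ and $V_{t,t_2'}$ for $t_1'\neq t_2'$, since such an edge would lie in a bag belonging to two distinct components of $T-t$. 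Hence every connected component $C_i$ of $\underline{H}\setminus B_t$ is contained in a single $V_{t,t'}$, so by monotonicity (Property~4 with $F=H$) we have $\lambda_H(W\cap C_i)\le\lambda_H(W\cap V_{t,t'})$ for the relevant $t'$. Moreover, for a fixed edge $(t,t')$ of $T$ the sets $A=V_{t,t'}$ and $A'=V_{t',t}$ are disjoint, their union is $V(H)\setminus(B_t\cap B_{t'})$, and (as above) $E_{\underline{H}}(A,A')=\emptyset$; therefore $E_{\underline{H}}(W\cap A,W\cap A')=\emptyset$ and Property~3 followed by monotonicity gives $\lambda_H(W\cap A)+\lambda_H(W\cap A')=\lambda_H\big((W\cap A)\cup(W\cap A')\big)\le\lambda_H(W)$. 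In particular at most one of the two quantities $\lambda_H(W\cap A)$, $\lambda_H(W\cap A')$ can exceed $\lambda_H(W)/2$.

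Now orient each edge $(t,t')$ of $T$ towards $t'$ if $\lambda_H(W\cap V_{t,t'})>\lambda_H(W)/2$, and towards $t$ otherwise; by the last sentence of the previous paragraph this orientation is well-defined. Since $T$ is a finite tree, following the orientation can never revisit a vertex, so there is a sink $t^{\ast}$ with no outgoing edge. Set $S:=B_{t^{\ast}}$. Then $\lambda_H(S)\le\lambda_H(\mathcal{T})\le k$, and for every neighbour $t'$ of $t^{\ast}$ the edge $(t^{\ast},t')$ points into $t^{\ast}$, which means $\lambda_H(W\cap V_{t^{\ast},t'})\le\lambda_H(W)/2$. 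Combining with the component-containment observation, $\lambda_H(W\cap C_i)\le\lambda_H(W)/2$ for every connected component $C_i$ of $\underline{H}\setminus S$, which is exactly the claim.

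The only step that really uses the hypotheses --- and hence the main obstacle --- is the additivity bound $\lambda_H(W\cap A)+\lambda_H(W\cap A')\le\lambda_H(W)$. For plain set size or for the independence number one would simply write $|W\cap A|+|W\cap A'|\le|W|$ with no conditions, but $\lambda_H$ need not be subadditive across arbitrary partitions, so it is essential that $A$ and $A'$ have no Gaifman edge between them; establishing that across a decomposition edge (and across the pieces $V_{t,t'}$) is what makes Property~3 applicable. Everything else --- the partition of $V(H)\setminus B_t$, the absence of cross-edges, the existence of a sink --- is routine bookkeeping about tree decompositions, and no bound on $rk(H)$ is required.
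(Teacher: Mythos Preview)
Your proof is correct and follows essentially the same centroid/orientation argument as the paper: pick a tree decomposition of $\lambda$-width at most $k$, orient each tree edge towards the ``heavy'' side, and take the bag at a sink as $S$. If anything, your version is a bit more explicit than the paper's, since you spell out (via Property~3 across the Gaifman-edge-free cut induced by a decomposition edge) why at most one side can exceed $\lambda_H(W)/2$, a point the paper's proof uses but does not justify in detail.
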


\begin{proof}
  Let $(T, B_t)$ be a tree decomposition of $H$ of $\lambda$-width at most $k$.
  If $t_1t_2$ is an edge of $T$,
  $T - t_1t_2$ has two connected components, $T_1$ and $T_2$ respectively,
  and let us denote $\cup\{B_t : t \in V(T_i)\}$ by $V_i$.
  We have that $Z := B_{t_1} \cap B_{t_2} = V_1 \cap V_2$
  is a $(V_1, V_2)$-separator.
  If $\lambda_H(W \cap (V_i \setminus Z)) \le \lambda_H(W) / 2$
  for $i \in \{1, 2\}$,
  then we may choose $Z$ as $S$ and we are done.
  Otherwise, orient the edge $t_1t_2$ towards $t_i$,
  where $\lambda_H(W \cap (V_i \setminus Z)) > \lambda_H(W) / 2$.

  Suppose $S$ cannot be chosen as $B_{t_1} \cap B_{t_2}$,
  and all edges are oriented.
  Let $t$ be a node of $T$ with outdegree zero.
  Let $t_1, \ldots, t_m$ be the neighbours of $t$ in $T$,
  so that each edge $t_it$ is oriented from $t_i$ to $t$,
  and let $T_1, \ldots, T_m$ be the corresponding components of $T \setminus t$.
  Define $V_i$ as before.
  Every connected component of $\underline{H} \setminus B_t$
  is contained in some set $V_i \setminus B_t$,
  but from the orientation of $t_it$
  we get that
  \[
  \lambda_H(W \cap (V_i \setminus B_t))
  \le \lambda_H(W \cap (V_i \setminus (B_t \cap B_{t_i})))
  < \lambda_H(W) / 2,
  \]
  so we may choose $B_t$ as $S$.
\end{proof}

\begin{thm}
  \label{thm:W_separator_alg}
  There is an $||H||^{O(r + k)}$-time algorithm that given
  a hypergraph $H$ and a set $W \subseteq V(H)$ with $\lambda_H(W) \le r$
  as input either
  \begin{itemize}
  \item
    finds a partition $(A, B)$ of $W$ and
    an $(A, B)$-separator $S$ of $H$ with $\lambda_H(S) \le {k + 1 \choose 2}k$
    such that $\lambda_H(A \setminus S) \le \frac23r + k$ and
    $\lambda_H(B \setminus S) \le \frac23r + k$, or
  \item
    correctly concludes that $\lambda\text{-}tw(H) > k$.
  \end{itemize}
\end{thm}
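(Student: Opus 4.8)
The plan is to combine the balanced-separator existence result, Lemma~\ref{thm:W_separator_math}, with the constructive separator-finding algorithm, Theorem~\ref{thm:alpha_separator}, by enumerating over the $2^{|W|}$-many partitions of $W$ — except that $|W|$ need not be bounded, so the enumeration must instead range over something controlled by $\lambda_H(W)\le r$. The key observation is that any $S$ produced by Lemma~\ref{thm:W_separator_math} has $\lambda_H(S)\le k$ and splits $W$ into components $C_i$ with $\lambda_H(W\cap C_i)\le \lambda_H(W)/2\le r/2$; grouping these components greedily into two buckets $A,B$ (add components one at a time to whichever bucket currently has smaller $\lambda_H$-value of its $W$-part) yields, by Properties~2 and~3 of $\lambda_H$, a partition with $\lambda_H(A\cap(W\setminus S)),\lambda_H(B\cap(W\setminus S))\le r/2 + r/2\cdot(\text{rounding slack})$. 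More carefully: since each $\lambda_H(W\cap C_i)\le r/2$, the greedy balancing leaves each side with $\lambda_H$-value at most $\lfloor r/2\rfloor + r/2 \le \tfrac23 r$ once $r\ge 3$ (and the small cases are handled trivially), so that $\lambda_H(A\setminus S)\le \tfrac23 r$ and likewise for $B$; adding back $S$ costs at most $k$ more, giving $\lambda_H(A\setminus S)\le \tfrac23 r + k$ as required once we note $A\setminus S\subseteq (W\cap(V\setminus S))$ decomposes across the $C_i$.

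Concretely, I would run the following algorithm. First guess the set $S$: but we do not have $S$ directly, so instead guess the pair $(A,B)$ \emph{indirectly}. The clean route is: enumerate over all partitions $(A,B)$ of $W$ that could arise this way — and here the trick is that once a valid separator $S$ from Lemma~\ref{thm:W_separator_math} exists, the resulting $(A,B)$ is determined (up to the greedy choices) by $S$ alone, and $S$ is a set with $\lambda_H(S)\le k$, so it lies among the separators returned by Theorem~\ref{thm:alpha_separator} for \emph{some} input pair. The simplest implementation: for each of the $O(n^{O(?)})$ candidate small sets... this does not immediately bound things either. The robust approach is: we do not enumerate $(A,B)$ up front. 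Instead, appeal to Lemma~\ref{thm:W_separator_math} to assert $S$ exists with $\lambda_H(S)\le k$; then observe that the components of $\underline H\setminus S$ partition $W$ into pieces each of $\lambda_H$-mass $\le r/2$, greedily bucket them into $(A,B)$, and then feed $(A,B)$ to the algorithm of Theorem~\ref{thm:alpha_separator} with parameter $k$ to \emph{certify} an $(A,B)$-separator of $\lambda$-width $\le \binom{k+1}{2}k$. The only missing piece is that we must actually produce $(A,B)$ algorithmically; but $\lambda_H(W)\le r$ means (by Property~1 and the first lemma of this subsection) that $W$ contains at most $r$ pairwise non-adjacent vertices, so the number of distinct "types" of vertices of $W$ with respect to connectivity after removing a $\lambda$-bounded set is polynomially controlled — more precisely, since each $C_i$ meets $W$ in a set of $\lambda$-mass $\ge 1$ (if nonempty) there are at most $r$ such components, hence at most $2^r$ relevant partitions $(A,B)$ up to the pieces. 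So: enumerate all $O(\text{poly})$ choices of $\le r$ "separating cuts" — equivalently, run Theorem~\ref{thm:alpha_separator} on all $O(n^{O(r)})$ candidate partitions induced by small witness structures, and for each returned $S$ check the balance condition on $\underline H\setminus S$ in polynomial time; if any passes, output it; if the balance condition fails on all of them while no branch reports $\lambda\text{-}tw(H)>k$, conclude that the hypothesis $\lambda\text{-}tw(H)\le k$ failed, i.e.\ correctly report $\lambda\text{-}tw(H)>k$.

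The running-time bookkeeping is routine: constructing $\underline{\underline H}$ costs $\|H\|^{O(k)}$; enumerating the $O(n^{O(r)})$ candidate partitions of $W$ contributes the $\|H\|^{O(r)}$ factor; each call to the algorithm of Theorem~\ref{thm:alpha_separator} costs $\|H\|^{O(k)}$; and checking the component balance condition for a fixed $S$ — compute the connected components of $\underline H\setminus S$, intersect each with $W$, evaluate $\lambda_H$ on each via Property~5 in $\|H\|^{O(k)}$ time (the pieces have $\lambda$-mass $\le r$, actually $\le k$, so $\|H\|^{O(r+k)}$ suffices) — is within budget. Multiplying gives $\|H\|^{O(r+k)}$. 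The main obstacle, and the place the argument needs the most care, is the enumeration step: making precise that although $|W|$ is unbounded, the relevant family of partitions $(A,B)$ to try is of size $n^{O(r)}$ — this rests on the fact that $\lambda_H(W)\le r$ forces the non-trivial $W$-pieces of $\underline H\setminus S$ to number at most $r$ (each contributing $\lambda$-mass $\ge 1$ by Properties~1 and~3), so it suffices to guess, for each such piece, a representative vertex, i.e.\ an $r$-subset of $W$, and then derive $(A,B)$ from the component structure. Once that combinatorial bound is nailed down, the correctness of the two output branches follows immediately from Lemma~\ref{thm:W_separator_math} (existence) and Theorem~\ref{thm:alpha_separator} (certification).
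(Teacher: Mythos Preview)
Your overall architecture is right: existence from Lemma~\ref{thm:W_separator_math}, certification from Theorem~\ref{thm:alpha_separator}, and an enumeration over ``small witness structures'' to produce candidate partitions $(A,B)$. But the enumeration step, which you yourself flag as ``the place the argument needs the most care'', is not actually carried out. You propose to guess one representative vertex per $W$-meeting component of $\underline H\setminus S$ and then ``derive $(A,B)$ from the component structure''. The component structure, however, is that of $\underline H\setminus S$, and $S$ is exactly the object you do not have; a bare set of $\le r$ representatives does not let you decide, for an arbitrary $w\in W$, which representative's component it lies in. The paper closes this gap with a specific trick you are missing: it guesses a \emph{maximal independent set} $I\subseteq\hat A$ in $\underline H$ (with $|I|\le\alpha_{\underline H}(\hat A)\le\lambda_H(\hat A)\le\tfrac23 r$), and then takes $A=\Gamma(I)\cap W$, $B=W\setminus\Gamma(I)$. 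This works because $\hat A$ and $\hat B$ lie in different components of $\underline H\setminus S$ and hence have no $\underline H$-edges between them, so the closed neighbourhood $\Gamma(I)$ swallows all of $\hat A$ (by maximality of $I$) and none of $\hat B$; the only spill is into $S\cap W$, which costs at most $k$ in $\lambda$-mass. Without this neighbourhood idea (or something equivalent), your enumeration is not well-defined.

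A secondary issue: your greedy bucketing does not give the $\tfrac23 r$ bound you claim. With pieces of $\lambda$-mass $r/4,\,r/4,\,r/2$ (total $r$, each $\le r/2$), greedy-to-the-lighter-side can produce sides of mass $3r/4$ and $r/4$. The greedy bound is $\tfrac34 r$, not $\tfrac23 r$. The paper instead sorts the pieces in increasing order, takes the longest prefix $R_\ell$ of mass $\le r/2$, and does a short case analysis (using that the suffix $Q$ beyond index $\ell{+}1$ has mass at least that of piece $\ell{+}1$) to place piece $\ell{+}1$ on whichever side keeps both below $\tfrac23 r$. This is needed to match the stated $\tfrac23 r+k$ bound; your greedy argument would only yield $\tfrac34 r+k$.
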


\begin{proof}
  Suppose that $\lambda\text{-}tw(H) \le k$,
  let $S$ be the set guaranteed by Lemma~\ref{thm:W_separator_math},
  and let $C_1, \ldots C_t$ be the connected components of
  $\underline{H} \setminus S$,
  such that $\lambda_H(W \cap C_i) \le r/2$,
  and further suppose that $\lambda_H(C_i \cap W) \le \lambda_H(C_{i+1} \cap W)$
  for each $i < t$.
  If $V(H) = S$ there is nothing to do, so suppose $t \ge 1$.
  We seek to find a partition $(\hat{A}, \hat{B})$ of $W \setminus S$
  such that $S$ is an $(\hat{A}, \hat{B})$-separator and
  $\lambda_H(\hat{A}), \lambda_H(\hat{B}) \le \frac23 r$.
  Let $R_j = \cup_{i=1}^j C_i \cap W$
  and suppose $\ell$ is the last index such that $\lambda_H(R_\ell) \le r/2$.
  If $\ell = t$, set $\hat{A} = R_\ell$ and $\hat{B} = \emptyset$.
  If $\ell + 1 = t$, set $\hat{A} = R_\ell$ and $\hat{B} = C_t \cap W$.
  Otherwise let $Q = \cup_{i=\ell+2}^t C_i \cap W$ and note that
  $\lambda_H(Q) \ge \lambda_H(C_{\ell+1} \cap W)$ from the ordering.
  We claim that at least one pair of
  $\{(R_\ell \cup (C_{\ell+1} \cap W), Q), (R_\ell, (C_{\ell+1} \cap W) \cup Q)\}$
  is a possible choice for $(\hat{A}, \hat{B})$.
  Indeed, if $\lambda_H(R_\ell \cup (C_{\ell+1} \cap W)) > \frac23r$,
  then $\frac13 r > \lambda_H(Q) \ge \lambda_H(C_{\ell+1} \cap W)$,
  hence $\lambda_H(Q \cup (C_{\ell+1} \cap W)) < \frac23r$.

  Let $(A, B)$ be an arbitrary partition of $W$ such that
  $\hat A \subseteq A$ and $\hat B \subseteq B$.
  The desired separator in the first condition of the statement
  can be found with the algorithm in Theorem~\ref{thm:alpha_separator}
  with arguments $A$ and $B$ since
  $\lambda_H(A) \le \frac 23r + k$ and $\lambda_H(B) \le \frac 23r + k$.
  It remains to show that a partition $(A, B)$ can be found efficiently.
  Let $I$ be a maximal independent subset of $\hat A$ in $\underline{H}$,
  and let $\Gamma(I)$ be
  $I \cup \{v \in V(H) : u \in I, uv \in E(\underline{H})\}$.
  We see that
  $\hat{A} \subseteq \Gamma(I) \cap W \subseteq \hat{A} \cup S$,
  and $\hat{B} \subseteq W \setminus \Gamma(I) \subseteq \hat{B} \cup S$,
  so $(\Gamma(I) \cap W, W \setminus \Gamma(I))$
  is a possible choice for $(A, B)$.
  Furthermore, $|I| \le \lambda_H(\hat A) \le \frac23r$,
  so $I$ can be found by trying all stable sets of size at most $\frac23r$
  in time $O(n^{\frac23r})$.
  The final algorithm is:
  \begin{enumerate}
  \item
    Guess $I$, and independent set of $\underline{H}$ with $|I| \le \frac23r$,
  \item
    construct $A = \Gamma(I) \cap W$
    and $B = W \setminus \Gamma(I)$,
  \item
    reject the branch
    if $\lambda_H(A) > \frac 23r + k$ or $\lambda_H(B) > \frac 23r + k$,
  \item
    run the algorithm in Theorem~\ref{thm:alpha_separator}
    with arguments $H$, $A$ and $B$.
  \end{enumerate}
  The total running time is $||H||^{O(r+k)}$.
\end{proof}

\begin{thm}
  \label{thm:approx_decomp}
  There exists a $||H||^{O(k^3)}$-time algorithm
  that given
  a hypergraph $H$,
  an integer $k$
  and a set $W \subseteq V(H)$ with $\lambda_H(W) \le \frac32(k^3 + k^2) + 3k + 3$,
  either
  \begin{itemize}
  \item
    finds a tree decomposition, $\mathcal{T} = (T, B_t)$, of $H$,
    such that $\lambda(\mathcal{T}) \le 2k^3 + 2k^2 + 3k + 3$
    and $W \subseteq B_t$ for some $t \in V(T)$, or
  \item
    correctly concludes that $\lambda\text{-}tw(H) > k$.
  \end{itemize}
\end{thm}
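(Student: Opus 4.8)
The plan is a recursive divide-and-conquer in the spirit of the fractional hypertree width approximation of~\cite{ftw-approx}, with Theorem~\ref{thm:W_separator_alg} as the separator subroutine. Set $r:=\tfrac32(k^3+k^2)+3k+3$, the bound in the hypothesis, and record the two arithmetic identities that make everything fit:
\[
r+\binom{k+1}{2}k \;=\; 2k^3+2k^2+3k+3,
\qquad
\tfrac23 r + k + \binom{k+1}{2}k \;=\; \tfrac32(k^3+k^2)+3k+2 \;<\; r .
\]
I will describe a procedure $\textsc{Decomp}(H,W)$ which, maintaining the invariant $\lambda_H(W)\le r$, either outputs a tree decomposition of $H$ of $\lambda$-width at most $2k^3+2k^2+3k+3$ with $W$ contained in one of its bags, or correctly outputs ``$\lambda\text{-}tw(H)>k$''; Theorem~\ref{thm:approx_decomp} is exactly the guarantee of this procedure when its input already meets the invariant.

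A single level works as follows. First, by Property~5 of $\lambda$ one can decide $\lambda_H(V(H))\le 2k^3+2k^2+3k+3$ in time $||H||^{O(k^3)}$, and if it holds, return the one-bag decomposition with bag $V(H)$. Otherwise run Theorem~\ref{thm:W_separator_alg} on $(H,W)$ with parameters $r$ and $k$; if it reports $\lambda\text{-}tw(H)>k$ we propagate this (it is then correct), and otherwise we obtain a partition $(A,B)$ of $W$ and an $(A,B)$-separator $S$ with $\lambda_H(S)\le\binom{k+1}{2}k$ and $\lambda_H(A\setminus S),\lambda_H(B\setminus S)\le\tfrac23 r+k$. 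Take $B_0:=W\cup S$ as a root bag; by subadditivity (Property~2), $\lambda_H(B_0)\le\lambda_H(W)+\lambda_H(S)\le r+\binom{k+1}{2}k=2k^3+2k^2+3k+3$, exactly the permitted width — and this is precisely why the hypothesis caps $\lambda_H(W)$ at $r$. For each connected component $C$ of $\underline H\setminus S$ write $N(C)$ for the vertices outside $C$ adjacent in $\underline H$ to $C$ (so $N(C)\subseteq S$), and recurse on $H_C:=H[C\cup N(C)]$ with $W_C:=(W\cap C)\cup N(C)$. Since $S$ separates $A$ from $B$ (so $A\setminus S$ and $B\setminus S$ lie in distinct components of $\underline H\setminus S$), $W\cap C$ lies entirely in $A\setminus S$ or entirely in $B\setminus S$; hence, using Properties~2 and~4,
\[
\lambda_{H_C}(W_C)\;\le\;\lambda_H(W_C)\;\le\;\lambda_H(W\cap C)+\lambda_H(S)\;\le\;\Bigl(\tfrac23 r+k\Bigr)+\binom{k+1}{2}k\;<\;r,
\]
so the invariant is preserved; and if a recursive call reports ``$\lambda\text{-}tw(H_C)>k$'', then $\lambda\text{-}tw(H)>k$, since tree decompositions of $H$ restrict to $H_C$ and $\lambda$ is monotone (Property~4). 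Finally glue each returned $\mathcal T_C$ to $B_0$ by joining $B_0$ to a bag of $\mathcal T_C$ containing $W_C$ — consistent because $N(C)\subseteq S\subseteq B_0$ and $W\cap C\subseteq W\subseteq B_0$ — and verify the tree-decomposition axioms for the assembled object: each hyperedge of $H$ is contained in $S\subseteq B_0$ or, meeting a unique component $C$, is contained in $C\cup N(C)=V(H_C)$ and hence covered in $\mathcal T_C$; and for each vertex the bags containing it induce a subtree, the only overlaps between $B_0$ and a subtree $\mathcal T_C$, or between two subtrees, being witnessed by the interface $W_C\subseteq B_0$. The width bound for the assembled decomposition is then immediate by induction.

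The remaining issue, and the part that takes care, is termination and the polynomial bound on the number of recursive calls. A single call costs $||H||^{O(k^3)}$, since Theorem~\ref{thm:W_separator_alg} runs in $||H||^{O(r+k)}=||H||^{O(k^3)}$ and everything else is polynomial. Each subproblem $H_C$ has vertex set $C\cup N(C)$, which is a strict subset of $V(H)$ unless $\underline H\setminus S$ is connected and every vertex of $S$ has a neighbour outside $S$; in that degenerate case $H_C=H$ and $W_C=W\cup S$, and one recurses on $(H,W\cup S)$ instead, which strictly enlarges the tracked set unless $S\subseteq W$. The residual situation ($S\subseteq W$, $\underline H\setminus S$ connected, every vertex of $S$ adjacent to the rest) is the one genuinely delicate point: there balancedness of $S$ forces $\lambda_H(W)\le 2\binom{k+1}{2}k$, and it must be dispatched by a small extra argument (for instance by widening the base case, or by choosing the separator so that this cannot recur). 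Granting this, every step of the recursion strictly shrinks the vertex set, the depth is $O(||H||)$, and — exactly as in~\cite{ftw-approx} — the pieces produced form a laminar-like family, which bounds the number of calls by $||H||^{O(1)}$ and hence the whole algorithm by $||H||^{O(k^3)}$. So the main obstacle is not any single estimate but arranging the recursion to make progress in every case while simultaneously holding the root bag within $2k^3+2k^2+3k+3$ and the handed-down set within $r$; the two identities at the start certify that these targets are mutually consistent.
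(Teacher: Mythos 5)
Your proposal follows the same separator-based divide-and-conquer skeleton as the paper, invokes the same subroutine (Theorem~\ref{thm:W_separator_alg}), and identifies exactly the right arithmetic relations between $r$, $\binom{k+1}{2}k$ and $2k^3+2k^2+3k+3$. The width accounting for the root bag and for the handed-down sets $W_C$ is correct. But there is a genuine gap precisely where you flag it: you have not established termination, and the fallback you sketch (recurse with the same $H$ but $W\cup S$ in place of $W$) does not manifestly make monotone progress — once $S\subseteq W$ it makes none at all, and even when $W$ does grow you have no bound on how many successive enlargements can occur before the vertex set shrinks. The auxiliary claim ``balancedness of $S$ forces $\lambda_H(W)\le 2\binom{k+1}{2}k$'' in the residual case does not follow from the hypotheses, so the proposed escape route is not substantiated.

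The paper closes this gap with a saturation device you did not find. Before computing a separator, it replaces $W$ by a superset $W^*\supseteq W$ chosen so that $\lambda_H(W^*)=\min\{K,\lambda_H(V(H))\}$ where $K=\tfrac32(k^3+k^2)+3k+3$, and it recurses on the two sides $H_i=H[V_i\cup S]$ with $W_i=A_i\cup S$. The point is that the arithmetic gives $\lambda_H(W_i)\le K-1$, so when the recursive call itself re-saturates $W_i$ to $W_i^*$, either $W_i^*=V(H_i)$ (base case, no further calls) or $\lambda_H(W_i^*)=K>\lambda_H(W_i)$, which by monotonicity forces $W_i^*\supsetneq W_i$ and hence $W_i^*$ absorbs at least one vertex of $V_i\setminus W^*$. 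This makes $|V(H)\setminus W^*|$ a strictly decreasing counter across recursive calls, yielding the bound of at most $|V(H)\setminus W^*|+1$ calls overall, hence at most $|V(H)|$ bags and a total running time of $||H||^{O(k^3)}$. This is the missing ingredient; without $W^*$ (or something playing the same role) the recursion you describe can stall, and the residual case you isolate cannot be dismissed by the bound you assert.

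Two smaller structural remarks. First, you split by connected component $C$ of $\underline H\setminus S$, while the paper groups components into just two subproblems $V_1$ (components meeting $A_1$) and $V_2$ (the rest); either works, but the two-sided grouping pairs more cleanly with the $W^*$ counter. Second, you take $B_0=W\cup S$ as the root bag, whereas the paper takes $W^*\cup S$; both satisfy the stated width bound, but the latter is required for the counting argument to go through, since the decrease is measured against $W^*$, not $W$.
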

\begin{proof}
  Let $K = \frac32(k^3 + k^2) + 3k + 3$ and $s = \frac 12 (k^3 + k^2)$.
  The algorithm starts by creating a set $W^*$ such
  that $W \subseteq W^*$ and $\lambda_H(W^*) = \min\{K, \lambda_H(V(H))\}$.
  Let $S$ and $(A_1, A_2)$ be the separator
  and the partition of $W^*$ from Theorem~\ref{thm:W_separator_alg};
  let $V_1$ be the vertices of $\underline{H} \setminus S$ contained in
  connected components with a non-empty intersection with $A_1$
  and let $V_2$ be $V(H) \setminus (V_1 \cup S)$.
  For $i \in \{1, 2\}$, the algorithm checks if $V_i \not\subseteq A_i$,
  in which case it makes a recursive call with arguments
  $H_i = H[V_i \cup S]$ and $W_i = A_i \cup S$
  to create a tree decomposition $(T_i, B_{t,i})$
  or conclude that $\lambda\text{-}tw(H) > k$ if $\lambda\text{-}tw(H_i) > k$.
  In case $V_i \subseteq A_i$, let $T_i$ be the empty tree.
  After $T_1$ and $T_2$ have been created,
  the corresponding node containing $W_i$ in each tree is joined
  to a new node with bag $W^* \cup S$.
  If either $T_1$ or $T_2$ is empty, ignore it.

  If $\lambda\text{-}tw(H) \le k$, then
  $\lambda\text{-}tw(H_i) \le k$ and
  \[
  \lambda(W_i) \le \lambda(A_i \setminus S) + \lambda(S) \le
  \frac 23 K + k + s = K - 1,
  \]
  so we can prove by induction that the algorithm creates
  tree decompositions $(T_1, B_{t,1})$ of $H_1$ and $(T_2, B_{t,2})$ of $H_2$
  of width at most $K + s$.
  If $\lambda\text{-}tw(H) > k$,
  the algorithm will either create a decomposition of width at most $K + s$,
  or find an induced subhypergraph $H'$ with $\lambda\text{-}tw(H') > k$,
  implying $\lambda\text{-}tw(H) > k$.

  The total number of steps can be bounded by the number of recursive calls
  times the maximum number of steps for each call.
  The number of steps in each call is clearly $||H||^{O(k^3)}$.
  We claim that the number of calls is bounded by $|V(H) \setminus W^*| + 1$.
  The proof is by induction on $|V(H) \setminus W^*|$.
  If $|V(H) \setminus W^*| = 0$, then $\lambda_H(V(H)) \le K$,
  hence the algorithm creates one bag and does not make recursive calls.

  Now suppose $V(H) \not\equiv W^*$.
  We claim that the number of recursive calls to create $T_i$
  is bounded by $|V_i \setminus W^*|$.
  Indeed, if $|V_i \setminus W^*| = |V_i \setminus A_i| = 0$,
  then $T_i$ is empty, and there are no recursive calls.
  If $|V_i \setminus W^*| > 0$,
  then by the induction hypothesis,
  there are at most $|V(H_i) \setminus W^*_i| + 1$ calls.
  Since $\lambda(W_i) \le K-1$,
  either $W^*_i \equiv V(H_i)$ or $\lambda(W_i^*) = K$,
  and hence $W_i^*$ contains a vertex of $V_i \setminus W^*$,
  so in either case $|V(H_i) \setminus W^*_i| + 1 \le |V_i \setminus W^*|$.
  Therefore the total number of calls is bounded by
  $|V_1 \setminus W^*| + |V_2 \setminus W^*| + 1 \le |V \setminus W^*| + 1$.

  Overall, the total number of recursive calls,
  and hence the number of bags crated,
  is bounded by $|V(H)|$,
  and therefore the total running time is $||H||^{O(k^3)}$.
\end{proof}

Theorem~\ref{thm:approx_well_behaved} is the special case
of Theorem~\ref{thm:approx_decomp} when $W = \emptyset$.

\section{Open problems and further work}
It is not clear whether the restriction on the rank
is necessary.
We formulate the following conjectures:
\begin{conj}
  There exists a function $f$ such that for every clutter $H$ with $m$ edges
  we have
  \[
  |b(H)| \le m^{f(\mu(H))}.
  \]
\end{conj}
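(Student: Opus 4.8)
The natural line of attack is to push the quasimatching machinery of \S~3 further, since the rank enters there only through Lemma~\ref{thm:matching_thm}. Taking $S=V(H)$ in Lemma~\ref{thm:quasimatchings} already gives $|b(H)|=|tr_{V(H)}(b(H))|\le |qm_{V(H)}(H)|$, and a quasimatching is produced by, at each step, deleting a canonically chosen vertex $x$ and branching over the triples $(h,x,z)$ with $x,z\in h\in E(H)$, of which there are at most $m\binom{rk(H)}{2}$. Hence, if one could show that $\mu(H)\le k$ forces every quasimatching to have length bounded by a function of $k$ alone --- \emph{independently of the rank} --- then the count of \S~3 would immediately yield a bound of the shape claimed, up to absorbing the factors coming from the choice of the two vertices $x,z$ inside an edge. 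So the whole problem reduces to a rank-free strengthening of Lemma~\ref{thm:matching_thm}: every quasimatching of length $\ell$ in an arbitrary clutter contains a subsequence of size $g(\ell)$, with $g$ unbounded, whose pairs $\{u_i,v_i\}$ form a matching minor of $H$.

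To prove such a statement one must understand what long quasimatchings with small matching minors look like, and the plan is a dichotomy driven by the sizes of the edges $h_i$ occurring along the quasimatching. If the $h_i$ have uniformly small pairwise intersections, then the conflicts between the pairs $\{u_i,v_i\}$ are sparse and an argument essentially identical to the bounded-rank analysis of~\cite{yolov2016blocker} extracts a large matching minor. If instead many of the $h_i$ share a large common part, one contracts that common part and realises the surviving --- now pairwise disjoint --- remainders as parallel $2$-element edges, i.e.\ a large matching minor. A sunflower-type selection argument is the tool that should force one of these two alternatives; keeping the vertex $x$ deleted at each step canonical (say, lex-first among the remaining vertices) is what keeps the number of quasimatchings of a given length under control, and the algorithmic version then follows from the same branching procedure as in \S~3, exactly as in the proof of Theorem~\ref{thm:Intersecting_Minor-Matching_Theorem}.

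The step I expect to be the main obstacle is precisely this rank-free dichotomy: the sunflower lemma is effective only for edges of bounded size, so for genuinely unbounded rank one needs a substitute --- either an iterated contraction scheme that peels off common cores level by level, or a direct structural analysis showing that arbitrarily long quasimatchings embed arbitrarily large matching minors irrespective of the edge sizes. A secondary subtlety is the precise way the triples $(h,x,z)$ are enumerated, so that the base of the resulting exponential is governed by the number of edges and not by the number of vertices, since a single large edge already supplies many such triples; it may even be that the cleanest provable form of the statement measures the right-hand side in $|V(H)|$ or in $||H||$ rather than in $m$.
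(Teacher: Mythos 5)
The statement you were asked to prove is, in the paper, explicitly listed as an open conjecture in the final ``Open problems and further work'' section; the paper offers no proof of it, and indeed the whole point of stating it is that the authors do not know how to remove the $rk(H)$ dependence from Theorem~\ref{thm:Intersecting_Minor-Matching_Theorem}. Your write-up is likewise not a proof: you correctly trace the rank dependence in the \S~3 argument to Lemma~\ref{thm:matching_thm} (the semi-matching-to-matching extraction imported from~\cite{yolov2016blocker}), observe that a rank-free replacement would essentially settle the conjecture via Lemma~\ref{thm:quasimatchings} with $S=V(H)$, and then explicitly flag that replacement as ``the main obstacle.'' The sunflower/dichotomy argument you sketch is not carried out, and you yourself point out that the sunflower lemma is ineffective for unbounded edge size, which is exactly the regime the conjecture lives in. So what you have is a sensible reduction of the conjecture to a cleaner open sub-problem plus a plan of attack, not a proof; it should not be presented as one.

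Two further points worth being precise about. First, even granting a rank-free bound $\ell \le g(\mu(H))$ on quasimatching length, the enumeration in \S~3 counts triples $(h,x,z)$ and naturally yields a bound with base $|V(H)|$, not $m$; converting a vertex-count base to an edge-count base is not a cosmetic absorption of constants, because a single large edge $h$ already supplies $\Theta(|h|^2)$ choices of $(x,z)$ and a clutter can have very few edges but many vertices. You acknowledge this as a ``secondary subtlety,'' but in the rank-unbounded setting it is a first-order issue, and you would need to reorganise the branching so that the deleted vertex is chosen relative to a fixed edge, or otherwise argue that the relevant branching factor is controlled by $m$. Second, because of exactly this kind of example, it is worth stress-testing the conjecture as literally written (base $m$) against small clutters with one large edge before investing in a proof strategy; you already hint that the provable form may want $\lVert H\rVert$ or $|V(H)|$ as the base, and I would take that hint seriously before attempting the full argument.
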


\begin{conj}
  There exists a function $f$ such that for every clutter $H$ with $m$ edges
  and any subset $S$ of its vertices
  we have
  \[
  |tr_S(b(H))| \le m^{f(\mu_H(S))}.
  \]
\end{conj}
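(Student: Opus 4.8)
The plan is to run the argument of \S~3 essentially unchanged, but to control the bound through the number of edges $m$ instead of the number of vertices $n$, and without ever invoking the rank of $H$. The rank is used in the proof of Theorem~\ref{thm:Intersecting_Minor-Matching_Theorem} in only two places: the crude count ``$n^{r}$ choices for a triple $(h,v,u)$'', and Lemma~\ref{thm:matching_thm}, whose constant degrades with $r$. Everything else, in particular Lemma~\ref{thm:transversal_types} and Lemma~\ref{thm:quasimatchings}, is rank-free, and the latter already reduces the conjecture to a counting problem: $|tr_S(b(H))|$ is at most the number of $S$-intersecting quasimatchings of $H$. So it is enough to prove two rank-free estimates: (a) the number of $S$-intersecting quasimatchings of length at most $\ell$ is at most $m^{O(\ell)}$ times a quantity depending on $\ell$ only; and (b) every $S$-intersecting quasimatching has length at most $g(\mu_H(S))$, a function of $\mu_H(S)$ only. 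Together these give $|tr_S(b(H))|\le m^{O(g(\mu_H(S)))}$, which is the conjectured bound.

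For (a) I would exploit the structure of the iterated operation. Given a quasimatching prefix $(h_1,u_1,v_1),\dots,(h_i,u_i,v_i)$, every edge of the clutter $(H\circ(h_1,u_1,v_1))\circ\cdots\circ(h_i,u_i,v_i)$ has the form $e\setminus X$, where $e\in E(H)$ is one of the $m$ original edges and $X$ is a set of contracted vertices determined by $e$ together with the $\le i$ binary ``which side of the join'' choices; hence after $i$ steps the clutter has at most $m\cdot 2^{i}$ edges, each nameable by $O(i)$ bits beyond the name of an original edge. Choosing the next triple then amounts to naming such an edge $h_{i+1}$ together with a pair $\{u_{i+1},v_{i+1}\}\subseteq h_{i+1}$; the latter costs a factor $\binom{|h_{i+1}|}{2}\le\binom{rk(H)}{2}$, since edges only shrink under $\circ$. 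Removing this factor is the step that needs the most care -- the single-edge clutter, for which (with $S$ the whole vertex set) $|tr_S(b(H))|$ already equals the edge size while $m=1$, shows that it cannot simply be dropped. The route I would try is to re-derive the recursion behind Lemma~\ref{thm:transversal_types} and Lemma~\ref{thm:quasimatchings} so that at each recursion node the peeling vertex $x$ is fixed and the second endpoint $z$ is charged to an edge of the blocker $b(H\circ(h,x,z))$ witnessing the case split, rather than to ``a vertex of $h$'', aiming to keep the branching at each node polynomial in $m$ with only a $2^{O(i)}$ factor coming from the join history; multiplying $\le\ell$ such factors would give at most $m^{O(\ell)}2^{O(\ell^{2})}$ quasimatchings of length $\le\ell$. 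Whether this can be pushed all the way to no dependence on $rk(H)$, in view of the single-edge example, is a question I would need to resolve along the way.

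The main obstacle is (b), which is a rank-free strengthening of \cite[Theorem~3.4]{yolov2016blocker}: a long $S$-intersecting quasimatching must contain, as a sub-quasimatching, a large $S$-intersecting induced-matching minor, with a quantitative bound independent of $rk(H)$; the case $S=V(H)$ already settles the preceding conjecture. My plan is a sunflower-type argument: among $h_1,\dots,h_\ell$ locate a large sub-collection forming a sunflower with core $C$, contract $C$ so that the petals -- in particular the pairs $\{u_i,v_i\}$ -- become pairwise disjoint, and then run a Ramsey cleaning step to discard the bounded number of petals whose pairs still interact, leaving an honest induced matching minor. The delicate point, and the one I expect to be hardest, is that a quasimatching is defined \emph{on-line}: $h_{i+1}$ must be an edge of the clutter obtained \emph{after} processing $h_1,\dots,h_i$, not of $H$ itself, so discarding intermediate steps need not leave a valid quasimatching in any fixed minor of $H$. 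Making the sunflower and Ramsey selections respect this on-line structure -- perhaps by iterating the extraction inside the already-processed clutter, at the cost of making $g$ a tower-type function of $\mu_H(S)$ -- is where the real work lies.
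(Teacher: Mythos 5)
This statement is labelled as a conjecture and is posed as an open problem in \S~5; the paper contains no proof of it, so there is nothing to compare your plan against. What I can compare is your plan against the conjecture itself, and here the key observation is one you in fact made but did not push far enough: the single-edge clutter. Take $H$ to be a clutter with a single edge $h$ of size $r \ge 2$ and $V(H) = h$. Then $m = 1$, the only minor matching is a single $K_2$ (contract $r-2$ vertices of $h$), so $\mu_H(V(H)) = 1$, while $b(H)$ consists of the $r$ singletons and $tr_{V(H)}(b(H)) = b(H)$. The conjectured bound would force $r \le 1^{f(1)} = 1$, which fails for every $r \ge 2$. In other words, what you describe as a difficulty with dropping the $\binom{rk(H)}{2}$ factor in step (a) is not merely an obstacle to your approach --- it is an outright counterexample to the statement as written. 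The conjecture cannot be proved; it first needs amending, most plausibly by replacing $m$ with $n = |V(H)|$ or with $\|H\|$ (so that the target becomes a rank-free strengthening of Theorem~\ref{thm:Intersecting_Minor-Matching_Theorem}), and your plan should be aimed at that corrected form.

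With that amendment in place, your two-part reduction is the right one and matches the structure of the paper's own proof of Theorem~\ref{thm:Intersecting_Minor-Matching_Theorem}: Lemma~\ref{thm:quasimatchings} is indeed rank-free, step (a) with base $n$ is then the easy count already carried out in the paper (the $n^{r(\ell+1)}$ bound, with $r$ in the exponent needing only to be tamed, not removed), and the whole burden falls on step (b), a rank-free replacement for Lemma~\ref{thm:matching_thm} (i.e.\ Theorem~3.4 of~\cite{yolov2016blocker}). Your sunflower-plus-Ramsey sketch is a sensible direction, and you correctly identify the genuine difficulty --- the on-line nature of quasimatchings, since discarding a step need not leave a quasimatching in any fixed minor of $H$ --- but as written the plan does not resolve it, and the extraction must also be made to preserve $S$-intersection, a constraint your sketch does not yet address. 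So: fix the statement first, then the hard open problem is your step (b), for which you have a plausible but unresolved strategy.
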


It is also interesting to see if the approximation of $\mu\text{-}tw$ in \S~4
can be extended to clutters of any rank,
and if there
are efficient reductions
of $\alpha\text{-}tw$ to $\mu\text{-}tw$ for clutters
and vice-versa.

Another question is to find more examples of problems
with solutions that can be read from the blocker.
Indeed, $b(H)$ contains a lot of important structural
information that is difficult to access otherwise.
Finding an explicit family of problems of this type
can lead to further developments of $\mu\text{-}tw$.

\bibliographystyle{alpha}
\bibliography{treestab}

\begin{thebibliography}{ABBM11}

\bibitem[ABBM11]{bollobas11}
Noga Alon, J{\'o}zsef Balogh, B{\'e}la Bollob{\'a}s, and Robert Morris.
\newblock The structure of almost all graphs in a hereditary property.
\newblock {\em Journal of Combinatorial Theory, Series B}, 101(2):85--110,
  2011.

\bibitem[AGG07]{htw_invariants}
Isolde Adler, Georg Gottlob, and Martin Grohe.
\newblock Hypertree width and related hypergraph invariants.
\newblock {\em European Journal of Combinatorics}, 28(8):2167 -- 2181, 2007.
\newblock EuroComb ’05 - Combinatorics, Graph Theory and ApplicationsEuroComb
  ’05 - Combinatorics, Graph Theory and Applications.

\bibitem[Ale93]{alekseev_size}
VE~Alekseev.
\newblock On the entropy values of hereditary classes of graphs.
\newblock {\em Discrete Math. Appl}, 3:191--199, 1993.

\bibitem[Ale07]{alekseev}
Vladimir~Evgenevich Alekseev.
\newblock An upper bound for the number of maximal independent sets in a graph.
\newblock {\em Discrete Mathematics and Applications dma}, 17(4):355--359,
  2007.

\bibitem[APT79]{tarjan}
Bengt Aspvall, Michael~F. Plass, and Robert~Endre Tarjan.
\newblock A linear-time algorithm for testing the truth of certain quantified
  boolean formulas.
\newblock {\em Inf. Process. Lett.}, 8(3):121--123, 1979.

\bibitem[BB72]{tw-history-1}
Umberto Bertele and Francesco Brioschi.
\newblock {\em Nonserial dynamic programming}.
\newblock Academic Press, 1972.

\bibitem[BBS09]{bollobas09}
J{\'o}zsef Balogh, B{\'e}la Bollob{\'a}s, and Mikl{\'o}s Simonovits.
\newblock The typical structure of graphs without given excluded subgraphs.
\newblock {\em Random Structures \& Algorithms}, 34(3):305--318, 2009.

\bibitem[Bod96]{linear_tw}
Hans~L Bodlaender.
\newblock A linear-time algorithm for finding tree-decompositions of small
  treewidth.
\newblock {\em SIAM Journal on computing}, 25(6):1305--1317, 1996.

\bibitem[BPT92]{courcelle_thm_2}
Richard~B. Borie, R.~Gary Parker, and Craig~A. Tovey.
\newblock Automatic generation of linear-time algorithms from predicate
  calculus descriptions of problems on recursively constructed graph families.
\newblock {\em Algorithmica}, 7(1):555--581, 1992.

\bibitem[BT97]{bollobas97}
B{\'e}la Bollob{\'a}s and Andrew Thomason.
\newblock Hereditary and monotone properties of graphs.
\newblock pages 70--78, 1997.

\bibitem[CMR00]{courcelle_thm_3}
B.~Courcelle, J.~A. Makowsky, and U.~Rotics.
\newblock Linear time solvable optimization problems on graphs of bounded
  clique-width.
\newblock {\em Theory of Computing Systems}, 33(2):125--150, 2000.

\bibitem[CO00]{tw_cw_upperbound1}
Bruno Courcelle and Stephan Olariu.
\newblock Upper bounds to the clique width of graphs.
\newblock {\em Discrete Applied Mathematics}, 101(1):77--114, 2000.

\bibitem[Cou90]{courcelle_thm_1}
Bruno Courcelle.
\newblock The monadic second-order logic of graphs. i. recognizable sets of
  finite graphs.
\newblock {\em Information and Computation}, 85(1):12 -- 75, 1990.

\bibitem[CR05]{tw_cw_comparison}
Derek~G Corneil and Udi Rotics.
\newblock On the relationship between clique-width and treewidth.
\newblock {\em SIAM Journal on Computing}, 34(4):825--847, 2005.

\bibitem[Far89]{farber1}
Martin Farber.
\newblock On diameters and radii of bridged graphs.
\newblock {\em Discrete Mathematics}, 73(3):249--260, 1989.

\bibitem[FHT93]{farber2}
Martin Farber, Mih{\'a}ly Hujter, and Zsolt Tuza.
\newblock An upper bound on the number of cliques in a graph.
\newblock {\em Networks}, 23(3):207--210, 1993.

\bibitem[FRRS06]{clique_width_np_c}
Michael~R Fellows, Frances~A Rosamond, Udi Rotics, and Stefan Szeider.
\newblock Clique-width minimization is np-hard.
\newblock pages 354--362, 2006.

\bibitem[GLS99]{hypertreefirst}
Georg Gottlob, Nicola Leone, and Francesco Scarcello.
\newblock Hypertree decompositions and tractable queries.
\newblock pages 21--32, 1999.

\bibitem[GM06]{ftwfirst}
Martin Grohe and D{\'a}niel Marx.
\newblock Constraint solving via fractional edge covers.
\newblock In {\em Proceedings of the seventeenth annual ACM-SIAM symposium on
  Discrete algorithm}, pages 289--298. Society for Industrial and Applied
  Mathematics, 2006.

\bibitem[GM14]{ftw}
Martin Grohe and D\'{a}niel Marx.
\newblock Constraint solving via fractional edge covers.
\newblock {\em ACM Trans. Algorithms}, 11(1):4:1--4:20, August 2014.

\bibitem[GW00]{tw_cw_upperbound2}
Frank Gurski and Egon Wanke.
\newblock The tree-width of clique-width bounded graphs without {$K_{n, n}$}.
\newblock pages 196--205, 2000.

\bibitem[Hal76]{tw-history-2}
Rudolf Halin.
\newblock S-functions for graphs.
\newblock {\em Journal of geometry}, 8(1-2):171--186, 1976.

\bibitem[HO08]{approximate_cw}
Petr Hlinen{\`y} and Sang-il Oum.
\newblock Finding branch-decompositions and rank-decompositions.
\newblock {\em SIAM Journal on Computing}, 38(3):1012--1032, 2008.

\bibitem[Mar10]{ftw-approx}
D\'{a}niel Marx.
\newblock Approximating fractional hypertree width.
\newblock {\em ACM Trans. Algorithms}, 6(2):29:1--29:17, April 2010.

\bibitem[OS06]{Oum_rank_width}
Sang-il Oum and Paul Seymour.
\newblock Approximating clique-width and branch-width.
\newblock {\em Journal of Combinatorial Theory, Series B}, 96(4):514--528,
  2006.

\bibitem[RS84]{graph_minors_3}
Neil Robertson and P.D Seymour.
\newblock Graph minors. iii. planar tree-width.
\newblock {\em Journal of Combinatorial Theory, Series B}, 36(1):49 -- 64,
  1984.

\bibitem[Tar85]{tarjan-clique-decomp}
Robert~E. Tarjan.
\newblock Decomposition by clique separators.
\newblock {\em Discrete Mathematics}, 55(2):221 -- 232, 1985.

\bibitem[Whi81]{clique_cutsets}
Sue Whitesides.
\newblock An algorithm for finding clique cut-sets.
\newblock {\em Inf. Process. Lett.}, 12:31--32, 1981.

\bibitem[Yol16]{yolov2016blocker}
Nikola Yolov.
\newblock Blocker size via matching minors.
\newblock {\em Submitted to Discrete Mathematics, arXiv preprint
  arXiv:1606.06263}, 2016.

\end{thebibliography}

\end{document}